% !TEX root = arxiv-bands.tex

%\input{preamble}

%%%%%%%%%%%%%%%%%%%%% copy-and-pasted the preamble, because things were broken %%%%%%%%%%%%%%%%%%%%%%%%%%%%%
% !TEX root = arxiv-bands.tex

\RequirePackage{easy,etex} % Issue with easytable http://compgroups.net/comp.text.tex/can-i-use-easytables-with-beamer/1919685
\documentclass{llncs}
\usepackage{llncsdoc}
\usepackage{graphicx}
\usepackage{amssymb}
\usepackage{amsmath}
\usepackage{hyperref}
\usepackage{todonotes}
\usepackage{enumitem}
\usepackage{verbatim} % \begin{comment} ... \end{comment}
\usepackage{gensymb} % \degree
\usepackage{tikz}
\usepackage{color}
\usepackage{plaatjes}
\usepackage{array}
\usepackage[capitalise]{cleveref}
\usepackage{enumitem}
\setlist{nolistsep}

%\graphicspath{{images/}}
\graphicspath{{.}}

% For buttons in a table (ie \BT1 uses ).
%\newcommand{\BT}[1]{\includegraphics[width=1.65em]{images/B#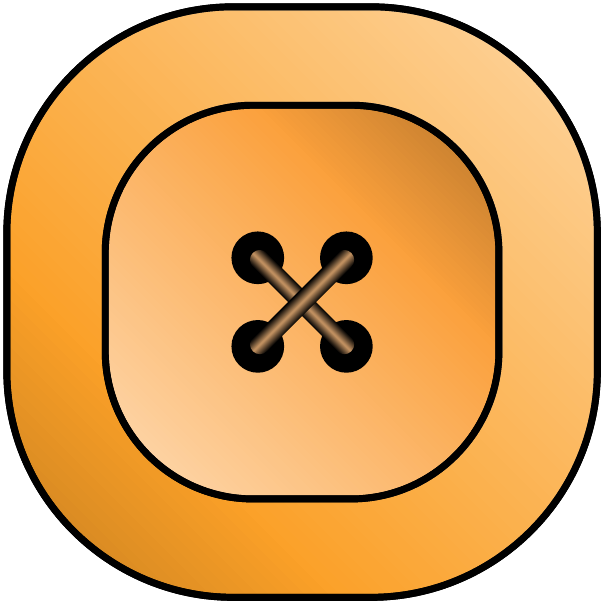}}
\newcommand{\BT}[1]{\includegraphics[width=1.65em]{B#1.pdf}}

%Fix spacing in the inline math
\thinmuskip=2mu minus 1mu
\medmuskip=3mu plus 1mu minus 3mu
\thickmuskip=4mu plus 2mu minus 1mu

\def\comic#1#2#3{\parbox{#1}{\centering\includegraphics[width=#1]{#2}\\{\footnotesize #3}}}
\def\comicII#1#2{\parbox{#1}{\centering\includegraphics[width=#1]{#2}}}

% Remarks for each author.
\newcommand{\remarkx}[2]{\todo{\textcolor{blue}{\textsc{#1:}} \textcolor{red}{\textsf{#2}}}}
\newcommand{\aaronw}[1]{\remarkx{AaronW}{#1}}

% Additional proof environments.
\newenvironment{sketch}{\paragraph{Proof Sketch:}}{\hfill$\square$}

%Command for complexity classes.  e.g. \cclass{NP}
\newcommand{\cclass}[1]{\ensuremath{\mathord{\rm #1}}}
%Command for combinatorial game rulesets.  e.g. \ruleset{Impartial Buttons $\&$ Scissors}
\newcommand{\ruleset}[1]{\textsc{#1}}
%Commands for specific 2-p rulesets
\newcommand{\impartial}[0]{\ruleset{Impartial}}
\newcommand{\scoring}[0]{\ruleset{Scoring}}

% Images for the cut directions.
% Inline button images (currently unused).
% Tikz images for cut directions.

\DeclareRobustCommand{\dirsFour}{%
\begin{tikzpicture}%
\draw (-0.2ex,-0.2ex);%
\draw (1.2ex,1.2ex);%
\draw (0,0) -- (1ex,1ex);%
\draw (-0.2ex,0.5ex) -- (1.2ex,0.5ex);%
\draw (1ex,0) -- (0,1ex);%
\draw (0.5ex,-0.2ex) -- (0.5ex,1.2ex);%
\end{tikzpicture}%
}
\DeclareRobustCommand{\dirsThree}{%
\begin{tikzpicture}%
\draw (-0.2ex,-0.2ex);%
\draw (1.2ex,1.2ex);%
\draw (-0.2ex,0.5ex) -- (1.2ex,0.5ex);%
\draw (0.15ex,-0.1ex) -- (0.85ex,1.1ex);%
\draw (0.15ex,1.1ex) -- (0.85ex,-0.1ex);%
\end{tikzpicture}%
}
\DeclareRobustCommand{\dirsTwoA}{%
\begin{tikzpicture}%
\draw (-0.2ex,-0.2ex);%
\draw (1.2ex,1.2ex);%
\draw (-0.2ex,0.5ex) -- (1.2ex,0.5ex);%
\draw (0.5ex,-0.2ex) -- (0.5ex,1.2ex);%
\end{tikzpicture}%
}
\DeclareRobustCommand{\dirsTwoB}{%
\begin{tikzpicture}%
\draw (-0.2ex,-0.2ex);%
\draw (1.2ex,1.2ex);%
\draw (0,0) -- (1ex,1ex);%
\draw (-0.2ex,0.5ex) -- (1.2ex,0.5ex);%
\end{tikzpicture}%
}
\DeclareRobustCommand{\dirsOne}{%
\begin{tikzpicture}%
\draw (-0.2ex,-0.2ex);%
\draw (1.2ex,1.2ex);%
\draw (-0.2ex,0.5ex) -- (1.2ex,0.5ex);%
\end{tikzpicture}%
}

% Text buttons that are used in the CCCG reduction and the new frequency result.
\newcommand{\CButt}[2][]{\circleButton[#2]{#1}}
\newcommand{\VCButt}[2][]{\squareButton[#2]{#1}}
\newcommand{\circleButton}[2][]{\tikz[baseline=(char.base)]{
            \node[shape=circle,draw,inner sep=0.7pt] (char) {#1};}_{\!\raisebox{2pt}{$\scriptstyle {#2}$}}}
\newcommand{\squareButton}[2][]{\tikz[baseline=(char.base)]{
            \node[shape=rectangle,draw,inner sep=1.2pt] (char) {#1};}_{\!\raisebox{0pt}{$\scriptstyle {#2}$}}}

% Macro for instances of our decision problem.
% The command is \Board{m \times n}{C}{F}{d}{S}{B} where
%  - m and n are the dimensions of the grid
%  - C is the number of colors
%  - F is the maximum allowed frequency for each color
%  - d is the allowable cut directions
%  - S is the cut size set
%  - B is the board
% The command should be used in math-mode only.
\newcommand{\Board}[6]{B\&S[#1,#2,#3,#4,#5](#6)}

% Magic appendices!
% \later{...} delays ... to be included later (e.g. use for proofs)
% \both{...} puts ... here and also delays it for later
%   (e.g. use for theorems you want restated before proofs)
% \magicappendix spits out all the delayed text in sequence.
% \both automatically saves the theorem counter in case of repeated theorems,
% but you need to modify it to save any more counters you want (see below).
% You can use \iflater ... \else ... \fi to do different things in the
% two uses of a \both argument.
\newcounter{section-⁠preserve}
\newcounter{theorem-⁠preserve}
\newcommand{\blank}[1]{}
\newtoks\magicAppendix
\magicAppendix={}
\newtoks\magictoks
\newif\iflater
\laterfalse
%\latertrue
\long\def\later#1{\magictoks={#1}%
  \edef\magictodo{\noexpand\magicAppendix={\the\magicAppendix \par
    \the\magictoks%
  }}
  \magictodo}
\long\def\both#1{\magictoks={#1}%
  \edef\magictodo{\noexpand\magicAppendix={\the\magicAppendix \par
    \noexpand\setcounter{theorem-⁠preserve}{\noexpand\arabic{theorem}}%
    \noexpand\setcounter{theorem}{\arabic{theorem}}%
    \noexpand\setcounter{section-⁠preserve}{\noexpand\arabic{section}}%
    \noexpand\setcounter{section}{\arabic{section}}%
    \noexpand\let\noexpand\oldsection=\noexpand\thesection
    \noexpand\def\noexpand\thesection{\thesection}
    \noexpand\let\noexpand\oldlabel=\noexpand\label
    \noexpand\let\noexpand\label=\noexpand\blank
    \the\magictoks%
    \noexpand\setcounter{theorem}{\noexpand\arabic{theorem-⁠preserve}}%
    \noexpand\setcounter{section}{\noexpand\arabic{section-⁠preserve}}%
    \noexpand\let\noexpand\thesection=\noexpand\oldsection
    \noexpand\let\noexpand\label=\noexpand\oldlabel
  }}
  \magictodo
  \the\magictoks}
\def\magicappendix{\latertrue \the\magicAppendix}

% For maintaining extended abstract and full versions of a paper.
% Usage: \ifabstract short text [\else long  text] \fi
%        \iffull     long  text [\else short text] \fi
% Uncomment the line ``\abstractfalse'' to enable the full version.
\newif\ifabstract
%\abstracttrue
\abstractfalse
\newif\iffull
\ifabstract \fullfalse \else \fulltrue \fi

%-⁠-⁠-⁠-⁠-⁠-⁠-⁠-⁠-⁠-⁠-⁠-⁠-⁠-⁠-⁠-⁠-⁠-⁠ end of preamble ----------------------%

\title{Single-Player and Two-Player Buttons \& Scissors Games}

%------------------------------ Text -------------------------------------

\begin{document}
% !TEX root = arxiv-bands.tex

% Fix for having too many authors.
% http://newsgroups.derkeiler.com/Archive/Comp/comp.text.tex/2007-08/msg00917.html
\makeatletter
\let\@fnsymbol=\@arabic
\makeatother

\author{
Kyle Burke\thanks{%Computer Science and Technology Department, 
Plymouth State University, {\tt kgburke@plymouth.edu}}
\and%\hspace{-8pt}
Erik D. Demaine\thanks{%Computer Science and Artificial Intelligence Laboratory,
Massachusetts Institute of Technology, \texttt{\{edemaine,achester\}@mit.edu}}
\and%\hspace{-8pt}
Harrison Gregg\thanks{%Division of Science, Mathematics, and Computing, 
Bard College at Simon's Rock,\protect\\ {\tt \{hgregg11,jleonard11,asantiago11,awilliams\}@simons-rock.edu}}
\and%\hspace{-8pt}
Robert A. Hearn\thanks{{\tt bob@hearn.to}}
\and%\hspace{-8pt}
Adam Hesterberg$^2$%\thanks{%Department of Mathematics, 
%%Massachusetts Institute of Technology, \tt{achester@mit.edu}}
\and%\hspace{-8pt}
Michael Hoffmann\thanks{%Institute of Theoretical Computer Science,
ETH Z{\"u}rich, {\tt hoffmann@inf.ethz.ch}}
\and%\hspace{-8pt}
Hiro Ito\thanks{%School of Informatics and Engineering, 
The University of Electro-Communications, {\tt itohiro@uec.ac.jp}}
\and
Irina Kostitsyna\thanks{%Department of Mathematics and Computer Science, 
Technische Universiteit Eindhoven, {\tt i.kostitsyna@tue.nl}. 
Supported in part by
    %the Netherlands Organisation for Scientific Research (NWO) under
    NWO project no. 639.023.208.}
\and%\hspace{-8pt}
Jody Leonard$^3$%\thanks{%Division of Science, Mathematics, and Computing, 
%%Bard College at Simon's Rock, {\tt jleonard11@simons-rock.edu}}
\and%\hspace{-8pt}
Maarten L{\"o}ffler\thanks{%Department of Information and Computing Sciences, 
Universiteit Utrecht, 
 {\tt m.loffler@uu.nl}}
\and%\hspace{-8pt}
Aaron Santiago$^3$%\thanks{%Division of Science, Mathematics, and Computing, 
%%Bard College at Simon's Rock, {\tt asantiago11@simons-rock.edu}}
\and%\hspace{-8pt}
Christiane Schmidt\thanks{%The Rachel and Selim Benin School of Computer Science and Engineering, 
%The Hebrew University of Jerusalem, Israel, {\tt cschmidt@cs.huji.ac.il}. Supported by the Israeli Centers of Research Excellence (I-CORE) program (Center No. 4/11).
%Communications and Transport Systems, ITN, 
Link\"oping University, {\tt christiane.schmidt@liu.se}. Supported in part by grant 2014-03476 from Sweden's innovation agency VINNOVA.}
\and%\hspace{-8pt}
Ryuhei Uehara\thanks{%School of Information Science, 
Japan Advanced Institute of Science and Technology, {\tt uehara@jaist.ac.jp}}
\and%\hspace{-8pt}
Yushi Uno\thanks{%Graduate School of Science, 
Osaka Prefecture University, {\tt uno@mi.s.osakafu-u.ac.jp}}
\and%\hspace{-8pt}
Aaron Williams$^3$%\thanks{%Division of Science, Mathematics, and Computing, 
%%Bard College at Simon's Rock, {\tt awilliams@simons-rock.edu}}
        }

\institute{} 
%\date{}

\maketitle

% !TEX root = arxiv-bands.tex

\begin{abstract}
  We study the computational complexity of the Buttons \& Scissors
  game and obtain sharp thresholds with respect to several
  parameters. Specifically we show that the game is NP-complete for
  $C=2$ colors but polytime solvable for $C=1$. Similarly the game is
  NP-complete if every color is used by at most $F=4$ buttons but
  polytime solvable for $F\le 3$. We also consider restrictions on the
  board size, cut directions, and cut sizes. Finally, we introduce
  several natural two-player versions of the game and show that they
  are PSPACE-complete.
\end{abstract}

% Previous version
%  The Buttons \& Scissors puzzle was recently shown to be NP-hard.
% In this paper we continue studying the complexity of various versions of the puzzle.
% %We parameterize the puzzle in several ways, and attain sharp results on the resulting complexity.
% For example, we show that it is NP-hard when the puzzle consists of $C=2$ colors, and polytime solvable for $C=1$.
% Similarly, it is NP-hard when each color is used by at most $F=4$ buttons, and polytime solvable for $F=3$.
% We also consider restrictions on the board size, cut directions, and cut lengths.
% Finally, we introduce new two-player games and show that they are PSPACE-complete.

% Pre-Previous version
%\begin{abstract}
%We consider the single-player and two-player Buttons $\&$ Scissors game. Restricting various parameters of the single-player game, like the number of button colors, we give polynomial time algorithms and NP-completeness proofs for several variants. In addition, we show several two-player versions of Buttons $\&$ Scissors to be PSPACE-complete.
%\end{abstract}

\section{Introduction}\label{sec:intro}
% !TEX root = arxiv-bands.tex

Buttons $\&$ Scissors is a single-player puzzle by KyWorks. % in 2013 \cite{kyworks15}. 
The goal of each level is to remove every button by a sequence of horizontal, vertical, and diagonal cuts, as illustrated by \cref{fig:LevelSeven}.
%%See \cite{glsw-bsnpc-15} for further clarification of the rules and terminology.
It is NP-complete to decide if a given level is solvable~\cite{glsw-bsnpc-15}. 
We study several restricted versions of the game and show that some remain hard, whereas others can be solved in polynomial time. 
We also consider natural extensions to two player games which turn out to be PSPACE-complete.

\begin{figure}
\centering
\begin{tabular}{ccccc}
\raisebox{1.25em}{\includegraphics[height=7.5em]{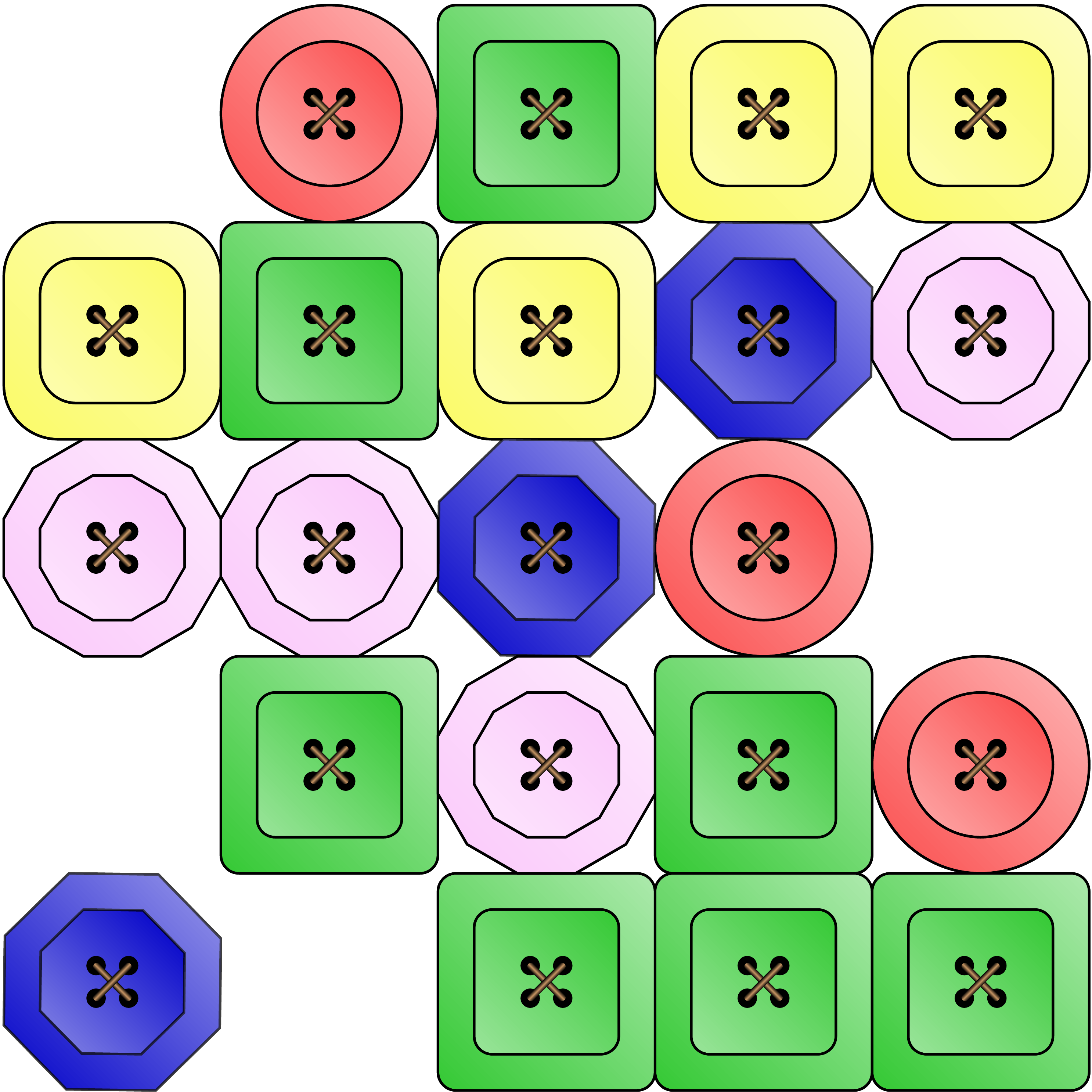}} %{images/Level7.pdf}}
  & \hspace{3em} &
\raisebox{1.25em}{\includegraphics[height=7.5em]{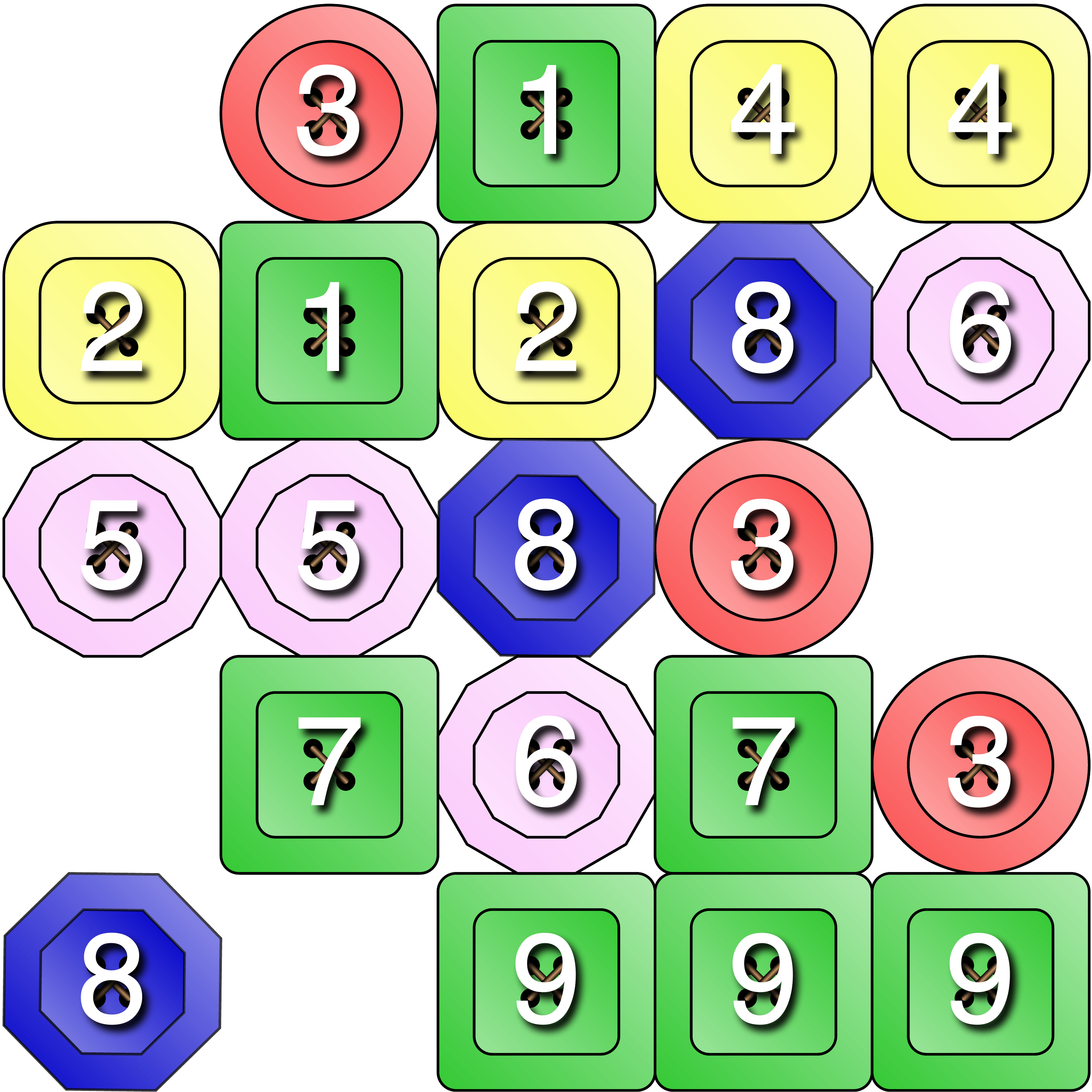}} %{images/Level7Solution.pdf}} 
  & \hspace{3em} &
\includegraphics[height=10em]{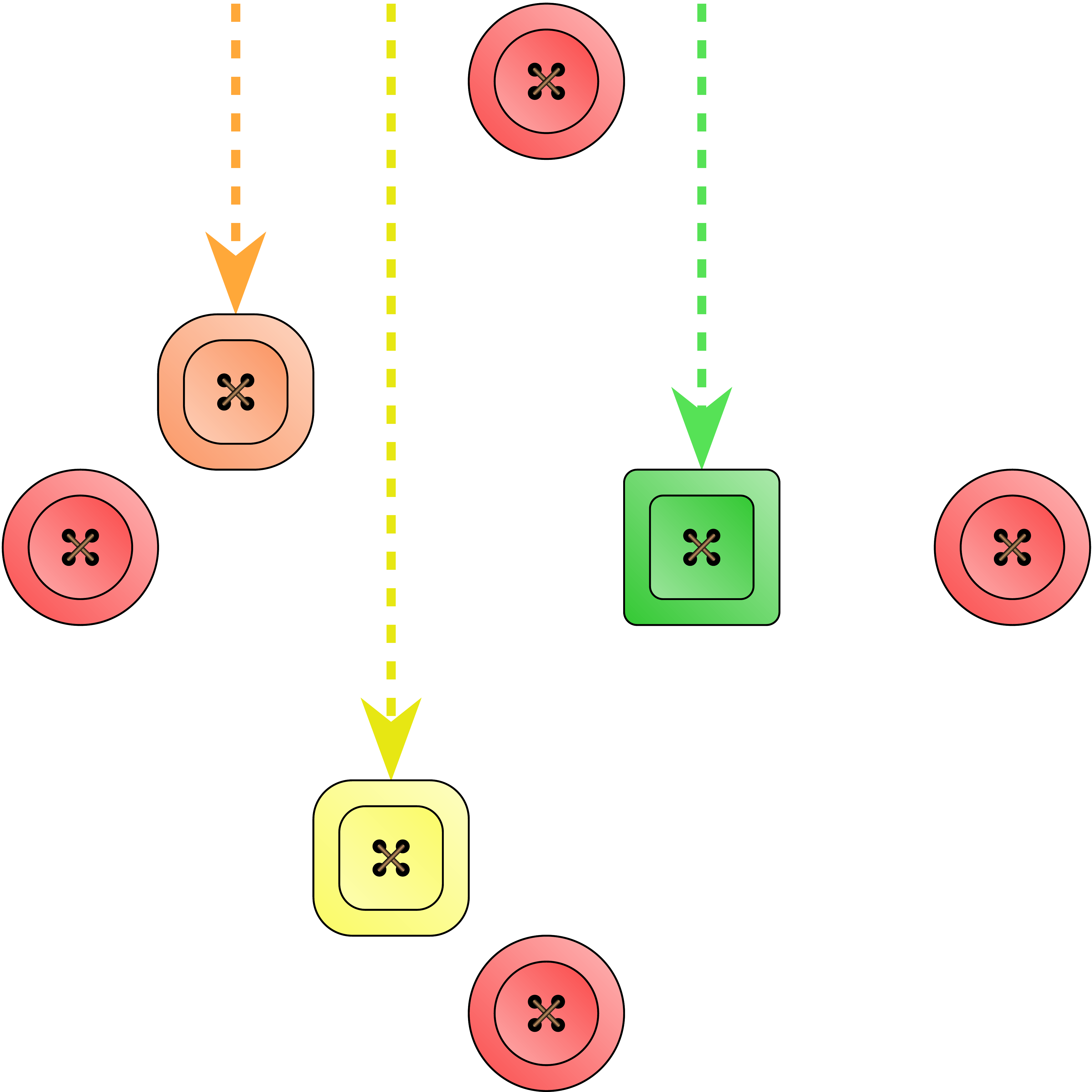} \\  %{images/OR_JCDCGG.pdf} \\
(a) && (b) && (c)
\end{tabular}
\caption{(a) Level 7 in the Buttons \& Scissors app is an $m \times n = 5 \times 5$ grid with $C=5$ colors, each used at most $F=7$ times; (b) a solution using nine cuts with sizes in $S = \{2,3\}$ and directions $d = \dirsThree$ (no vertical cut is used); (c) a gadget used in \cref{thm:freq4}.}
\label{fig:LevelSeven}
\end{figure}

\cref{sec:prelim} begins with preliminaries, then we discuss one-player puzzles in \cref{sec:1p} and two-player games in \cref{sec:2p}.
Open problems appear in \cref{sec:open}.
\ifabstract
Due to space restrictions, some proofs are sketched or appear in the appendix.
\fi

\section{Preliminaries}\label{sec:prelim}
% !TEX root = arxiv-bands.tex

%In this section we introduce basic terminology and notation, and then provide several simple observations.

A Buttons $\&$ Scissors {\it board} $B$ is an $m\times n$ grid, where each grid position is either empty or occupied by a button with one of $C$ different colors. 
%%A board $B$ is {\it full} if it contains no empty grid positions.
A {\it cut} is given by two distinct buttons $b_1, b_2$ of the same color $c$ that share either the $x$-coordinate, the $y$-coordinate, or are located on the same diagonal ($45^{\degree}$ and ${-}45^{\degree}$). 
The \emph{size} $s$ of a cut is the number of buttons on the line segment $\overline{b_1 b_2}$ and so $s \geq 2$.
A cut is {\it feasible} for $B$ if $\overline{b_1 b_2}$ only contains buttons of a single color.

When a feasible cut is applied to a board $B$, the resulting board $B'$ is obtained by substituting the buttons of color $c$ on $\overline{b_1 b_2}$ with empty grid entries. A {\it solution} to board $B$ is a sequence of boards and feasible cuts $B_1, x_1, B_2, x_2, \ldots, B_t, x_t, B_{t+1}$, where $B_{t+1}$ is empty, and each cut $x_i$ is feasible for $B_i$ and creates $B_{i+1}$.

%\subsection{Notation} \label{sec:preliminaries_notation}

Each instance can be parameterized as follows (see \cref{fig:LevelSeven} for an example):
\begin{enumerate}[noitemsep,nolistsep]
\item The {\it board size} $m \times n$.
\item The {\it number of colors} $C$.
\item The \emph{maximum frequency} $F$ of an individual color.
\item The {\it cut directions} $d$ can be limited to $d \in \{\dirsFour, \dirsThree, \dirsTwoA, \dirsOne\}$.
\item The {\it cut size set} $S$ limits feasible cuts to having size $s \in S$.
%In the standard version $\ell = 2$.
\end{enumerate}
Each $d \in \{\dirsFour, \dirsThree, \dirsTwoA, \dirsOne\}$ is a set of cut directions (i.e. $\dirsTwoA$ for horizontal and vertical). 
We limit ourselves to these options because an $m\times n$ board can be rotated $90^{\circ}$ to an equivalent $n\times m$ board, or $45^{\circ}$ to an equivalent $k\times k$ board for $k=m+n-1$ with blank squares.
Similarly, we can shear the grid by padding row $i$ with $i-1$ blanks on the left and $m-i$ blanks on the right which converts $d = \dirsTwoA$ to $d = \dirsTwoB$.
%However, $\dirsTwoA$ (horizontal cut and vertical cut) is not equivalent to $\dirsTwoB$ (horizontal cut and diagonal cut).
%
We obtain the family of games below ($\Board{n \times n}{\infty}{\infty}{\dirsFour}{$\{2,3\}$}{B}$ is the~original):\smallskip

\noindent{\bf Decision Problem:}  $\Board{m \times n}{C}{F}{d}{S}{B}$. \\
%\vspace*{-.3cm}
{\bf Input:} An $m \times n$ board $B$ with buttons of $C$ colors, each used at most $F$ times.\\
{\bf Output:} True $\iff$ $B$ is solvable with cuts of size $s \in S$ and directions $d$.\smallskip

%%In this notation the original game is $\Board{n \times n}{\infty}{\infty}{\dirsFour}{2}{B}$.

%\subsection{Observations} \label{sec:preliminaries_observations}

Now we provide three observations for later use.
%The first remark is due to the fact that feasible cuts containing many buttons can be simulated by multiple feasible cuts containing fewer buttons.
First note that a single cut of size $s$ can be accomplished by cuts of size $s_1, s_2, \ldots, s_k$ so long as $s = s_1 + s_2 + \cdots + s_k$ and $s_i \geq 2$ for all $i$.
Second note that removing all buttons of a single color from a solvable instance cannot result in an unsolvable instance. %with the same parameters.

\begin{remark} \label{rem:2or3cuts}
A board can be solved with cut sizes $S = \{2,3,\ldots\}$ if and only if it can be solved with cut sizes $S' = \{2,3\}$.
Also, $\{3,4,\ldots\}$ and $\{3,4,5\}$ are equivalent.
\end{remark}

%The next remark notes that removing every button of an individual color from a solvable board cannot result in an unsolvable board with the same parameters.

\begin{remark} \label{rem:removeColor}
If board $B'$ is obtained from board $B$ by removing every button of a single color, then $\Board{m \times n}{C}{F}{d}{S}{B} \implies \Board{m \times n}{C}{F}{d}{S}{B'}$. %(i.e., it is impossible that Buttons $\&$ Scissors is solvable on $B$, but not solvable on $B'$ with the same parameters).
\end{remark}

%Note: The following Remark doesn't seem to be used in this paper. -Aaron
%In some reductions we create boards without cuts of a specific direction (i.e. no two buttons of the same color are on a common line segment with that direction).
%This allows us to prove multiple hardness results simultaneously.

%\begin{remark} \label{rem:directionSubset}
%Let $d_1 \subseteq d_2$ be cut direction sets.
%If $\Board{m \times n}{C}{F}{d_1}{S}{B}$ is NP-complete when restricted to boards that do not have any cuts whose direction is in $d_2 \backslash d_1$, then $\Board{m \times n}{C}{F}{d_2}{S}{B}$ is also NP-complete.
%\end{remark}

\section{Single-Player Puzzle}\label{sec:1p}
% !TEX root = arxiv-bands.tex

%\section{Single-Player Puzzle}\label{sec:1p}

%%We now present our results on the single-player puzzle. % by considering the parameters in the order defined in Section~\ref{sec:not}.

\subsection{Board Size} \label{sec:1p_size}
% !TEX root = arxiv-bands.tex

We solve one row problems below, and give a conjecture for two rows in \cref{sec:open}.

\begin{theorem} \label{thm:1row}
%Buttons \& Scissors is polytime solvable on $1\times m$ boards. That
%is, 
$\Board{1 \times n}{\infty}{\infty}{\dirsOne}{\{2,3\}}{B}$ is polytime solvable.
\end{theorem}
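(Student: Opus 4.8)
The plan is to recast the one-row puzzle as a purely combinatorial string-clearing problem and then solve that problem with an interval dynamic program. Since the board has a single row and $d = \dirsOne$ forces horizontal cuts, every cut operates on an interval of columns, and by \cref{rem:2or3cuts} the cut-size set $\{2,3\}$ is equivalent to allowing any size $s \ge 2$. Reading the row left to right and deleting the empty cells, I would obtain a sequence of colored tokens $a_1, \ldots, a_m$ with $m \le n$; empty cells are irrelevant to feasibility because a cut only constrains the \emph{buttons} on its segment, and a cut leaves its segment blank forever, so blanks are ``transparent''. Under this encoding a feasible cut is exactly the deletion of a contiguous block $a_p, \ldots, a_q$ (in the \emph{current} token order) of length $\ge 2$ all of one color, and the board is solvable iff the whole sequence can be deleted. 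I would also note that it never helps to delete a proper sub-block of a maximal monochromatic run: leftover tokens keep differently-colored neighbours and create no new adjacency, so without loss of generality each deletion removes an entire current group of equal tokens.

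For the dynamic program, for $1 \le i \le j \le m$ let $f(i,j)$ be true iff the token block $a_i \ldots a_j$ can be cleared completely, and let $g(i,j)$ be the same predicate under the assumption that an extra token of color $a_i$ is glued to the left of $a_i$ (so the leading group already has size $\ge 2$). With the conventions that an empty block is clearable and $f(i,i)$ is false, the recurrences I would use are
\[
f(i,j) \;=\; \bigvee_{\substack{i<k\le j\\ a_k = a_i}} \big(\, f(i{+}1,k{-}1) \wedge g(k,j) \,\big),
\]
\[
g(k,j) \;=\; f(k{+}1,j) \;\vee\; \bigvee_{\substack{k<k'\le j\\ a_{k'}=a_k}} \big(\, f(k{+}1,k'{-}1) \wedge g(k',j) \,\big).
\]
In words: to clear $a_i\ldots a_j$, token $a_i$ must be removed together with a same-color partner $a_k$, which forces the block strictly between them to be cleared first ($f(i{+}1,k{-}1)$) and then asks $g(k,j)$ to finish the now-merged group and the remainder; $g$ differs only in that its leading group may be closed immediately (the $f(k{+}1,j)$ disjunct), which is exactly what enforces every deleted group to have size $\ge 2$. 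The answer to the instance is $f(1,m)$. There are $O(m^2)$ states, each evaluated in $O(m)$ time, giving $O(n^3)$ time and $O(n^2)$ space, so the problem is polytime solvable.

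The crux, and the step I expect to be the main obstacle, is the structural lemma showing the clearing process is \emph{laminar}, which is what makes the interval recurrence valid. For the forward direction I would track the leftmost surviving token: $a_i$ is always the leftmost of its block, so the cut that finally removes it deletes a prefix $a_i, a_{k_1}, \ldots, a_{k_r}$ of equal-color tokens whose intervening blocks were already cleared, while the suffix past $a_{k_r}$ is cleared independently; peeling this off and inducting yields precisely the two recurrences. The converse is the easy direction: given recurrence witnesses, clear the inner blocks, delete the assembled group, then clear the suffix. The difficulty is in stating and proving this reordering lemma cleanly — in particular, arguing that deletions on disjoint token ranges never interfere and that any solution can be reshuffled so the blocks ``between'' the members of a removed group are finished before that group is removed. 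Everything else (the reduction, the $\ge 2$ bookkeeping handled by $g$, and the $O(n^3)$ bound) is routine once the lemma is in place.
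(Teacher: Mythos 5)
Your proposal is correct and is essentially the paper's argument in different clothing: the paper encodes the same laminar structure of one-row solutions as the context-free grammar $S \rightarrow \varepsilon \mid \square \mid SS \mid xSx \mid xSxSx$ (using \cref{rem:2or3cuts} to cap cut sizes at $3$) and decides membership by standard CFG parsing, which is exactly the interval dynamic program you write out explicitly with $f$ and $g$ (your $g$ plays the role of the grammar's iterated $xSx\cdots Sx$ tail). Your version is somewhat more self-contained in that you allow arbitrary group sizes directly and sketch the reordering/laminarity argument that the paper leaves implicit, but the underlying decomposition and the $O(n^3)$ bound are the same.
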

\begin{proof}
Consider the following context-free grammar,
\begin{equation*}
 S \rightarrow \varepsilon \; \vert \; \square \; \vert \; SS \; \vert \; xSx \; \vert \; xSxSx
\end{equation*}
% This is not correct because in 3-cuts two of the buttons are forced to be next to each other.
% Thanks to Michael for the correction.
%\begin{align*}
%S &\rightarrow \epsilon & S &\rightarrow S \square & S &\rightarrow x S x & S &\rightarrow x S x x \\
%S &\rightarrow S S & S &\rightarrow \square S & && S &\rightarrow x x S x
%\end{align*}
where $\square$ is an empty square and $x \in \{1,2,\ldots,C\}$.
By \cref{rem:2or3cuts}, the solvable $1\times n$ boards are in one-to-one correspondence with the strings in this language. \qed
\end{proof}

\subsection{Number of Colors} \label{sec:1p_colors}
% !TEX root = arxiv-bands.tex

%In this section we give a polynomial running-time algorithm for 1-color B$\&$S, and prove that 2-color B$\&$S is NP-hard.

\subsubsection {Hardness for $2$ colors.}

We begin with a straightforward reduction from 3SAT.
The result will be strengthened later by Theorem \ref{thm:2colors2cuts} using a more difficult proof.

\begin{theorem} \label{thm:2colorsSAT}
  $\Board{n \times n}{2}{\infty}{\dirsTwoA}{\{2,3\}}{B}$ is NP-complete.
\end{theorem}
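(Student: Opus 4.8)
The plan is to first show membership in NP and then give a polynomial-time reduction from 3SAT. Membership is immediate: every feasible cut removes at least two buttons, so a solution of a board with $N\le n^2$ buttons has at most $N/2$ cuts; the cut sequence is a polynomial-size certificate, and verifying that each $x_i$ is feasible for $B_i$ (its segment is monochromatic) and that $B_{t+1}$ is empty takes polynomial time. For hardness I reduce from a formula $\varphi$ with variables $x_1,\dots,x_k$ and clauses $c_1,\dots,c_m$. The governing mechanism is \emph{unlocking}: a cut between two same-colored buttons is feasible only once every differently-colored button lying between them has been removed by an earlier cut. This yields cut \emph{dependencies} (``$y$ cannot fire until $x$ fires''), which build one-directional wires, and it lets a single blocker button guard a region so that the region is clearable iff the blocker has first been removed.

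I build three gadget types on the $n\times n$ grid, using only the two colors and horizontal/vertical cuts of size $2$ or $3$. A \emph{variable gadget} is a small configuration whose only legal first moves are two mutually exclusive cuts; taking one removes blockers along a ``true'' output wire while leaving the ``false'' output inert, and vice versa. A \emph{wire} is a chain of unlock-dependent same-color cuts carrying the chosen signal, and a \emph{splitter} duplicates a signal so that a variable feeding several clauses stays consistent (the fan-out needed per literal is bounded, matching 3SAT). A \emph{clause gadget} collects its three literal wires and contains a monochromatic core guarded so that it can be fully removed iff at least one incoming wire has delivered a ``true'' signal; if all three literals are false the core stays locked. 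Crucially, all gadgets are embedded with wide margins that are free of monochromatic alignments, so no cut segment can span two gadgets and every interaction is confined to a single gadget.

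Correctness splits into two directions. For soundness, given a satisfying assignment I clear each variable gadget in the mode dictated by its value, propagate along the wires, clear each clause gadget through one of its satisfied literals, and finally remove all residual wire and padding buttons; this exhibits an explicit solution. For completeness I argue that any solution must begin by resolving each variable gadget into exactly one of its two modes — no cross-gadget first move exists, by the margin property — so the induced truth assignment is well defined, and that a clause with three false literals leaves its core permanently locked, preventing the board from being emptied; hence solvability forces $\varphi$ satisfiable.

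The hard part is the completeness direction, i.e. making the gadgets tamper-proof. I must prove the unlocking dependencies cannot be circumvented: there is no alternative cut order, and no unintended horizontal or vertical size-$2$ or size-$3$ cut, that clears a clause core without a satisfied literal, and the splitter must genuinely force one global value per variable rather than letting a wire be ``charged'' both positively and negatively. This reduces to a finite case analysis of the local moves available inside each gadget and of the few grid lines adjacent gadgets share; the margins together with the restriction to cut sizes $\{2,3\}$ are exactly what keep this analysis bounded, and are the crux of the argument.
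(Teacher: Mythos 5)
Your high-level architecture matches the paper's: a reduction from 3SAT in which variable gadgets commit to one of two modes by a forced local cut, blocking relations between differently colored buttons implement one-way ``unlocking'' wires, splitters fan a variable's value out to its clause occurrences, and a clause gadget is clearable iff at least one incoming literal is true. Membership in NP is handled the same way. So you have correctly identified the right reduction shape.

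The gap is that you never exhibit a single gadget, and for this theorem the gadgets are the entire content. You defer everything to ``a finite case analysis of the local moves available inside each gadget,'' but with only $C=2$ colors and only horizontal/vertical cuts the existence of workable gadgets is exactly what must be demonstrated, and two specific difficulties are not addressed. First, the splitter: the paper's split gadget is the one place both colors interact, and it unavoidably has side effects --- it flips the signal's color and rotates the signal's direction from vertical to horizontal --- which the paper must then undo by chaining further splits before reaching a clause input. Your proposal treats the splitter as a black box that ``duplicates a signal,'' with no indication of how duplication is achieved using two colors, nor any accounting of such side effects. Second, your ``wide margins free of monochromatic alignments'' cannot be obtained by generic padding: with two colors, any two buttons of the same color sharing a row or column are a potential unintended size-2 cut unless a button of the other color separates them, so the global layout (dedicated rows for variables, dedicated columns for clauses, and careful placement of the split chains) is itself a construction that has to be given and checked, not assumed. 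Without the concrete gadgets and layout, the completeness direction --- that no alternative cut order clears a clause whose literals are all false --- has no object to perform the case analysis on, so the argument as written does not yet constitute a proof.
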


\begin{sketch}
A \emph{variable gadget} has its own row with exactly three buttons. The middle button is alone in its column, and must be matched with at least one of the other two in the variable row. If the left button is not used in this match, we consider the variable set to \emph{true}. If the right button is not used, we consider the variable set to \emph{false}. A button not used in a variable is an \emph{available output}, and can then serve as an \emph{available input} to be used in other gadgets.

Every \emph{clause gadget} has its own column, with exactly four buttons. The topmost button (\emph{clause button}) is alone in its row; the others are inputs. If at least one of these is an available input, then we can match the clause button with all available inputs. We construct one clause gadget per formula clause, connecting its inputs to the appropriate variable outputs. Then, we can clear all the clauses just when we have made variable selections that satisfy the formula. 

The variables are connected to the clauses via a multi-purpose \emph{split gadget} (\cref{fig:2c_splitter}). Unlike the variable and the clause, this gadget uses buttons of two colors. The bottom button is an input; the top two are outputs. If the input button is available, we can match the middle row of the gadget as shown in \cref{fig:2c_splitter-1}, leaving the output buttons available. But if the input is not available, then the only way the middle row can be cleared is to first clear the red buttons in vertical pairs, as shown in \cref{fig:2c_splitter-2}; then the output buttons are not available.

\ifabstract
We provide a further description of the split gadget and complete the proof sketch in Appendix B.
\later{
\subsection*{Conclusion of Proof Sketch for Theorem \ref{thm:2colorsSAT}}
The split gadget has two additional effects beyond splitting. First, it changes the ``signal color'' from one of the two colors to the other. Second, it rotates the ``signal direction'' from vertical to horizontal: if the input is available to be used vertically by the split, then the outputs are available to be used horizontally as inputs to other gadgets. Therefore, we can restore the original signal color and direction by attaching further splits to the original split outputs. 

We split each variable output as often as needed to reach the clause inputs, each of which will also be the output of a split rotated 90\degree from \cref{fig:2c_splitter}. If we obtain more split outputs than clause inputs for a variable, then we add a red button in the same column as a red pair on an unused split branch. Then all buttons in this column can be cleared regardless of how the split is cleared. %\qed  %not needed in sketch environment
% \end{sketch}  % Do not close environments inside of these conditional \if statements.
}
\fi

\iffull
The split gadget has two additional effects beyond splitting. First, it changes the ``signal color'' from one of the two colors to the other. Second, it rotates the ``signal direction'' from vertical to horizontal: if the input is available to be used vertically by the split, then the outputs are available to be used horizontally as inputs to other gadgets. Therefore, we can restore the original signal color and direction by attaching further splits to the original split outputs. 

We split each variable output as often as needed to reach the clause inputs, each of which will also be the output of a split rotated 90\degree from \cref{fig:2c_splitter}. If we obtain more split outputs than clause inputs for a variable, then we add a red button in the same column as a red pair on an unused split branch. Then all buttons in this column can be cleared regardless of how the split is cleared. %\qed  %not needed in sketch environment
% \end{sketch}  % Do not close environments inside of these conditional \if statements.
\fi

\end{sketch}

\drieplaatjes [scale = .205] {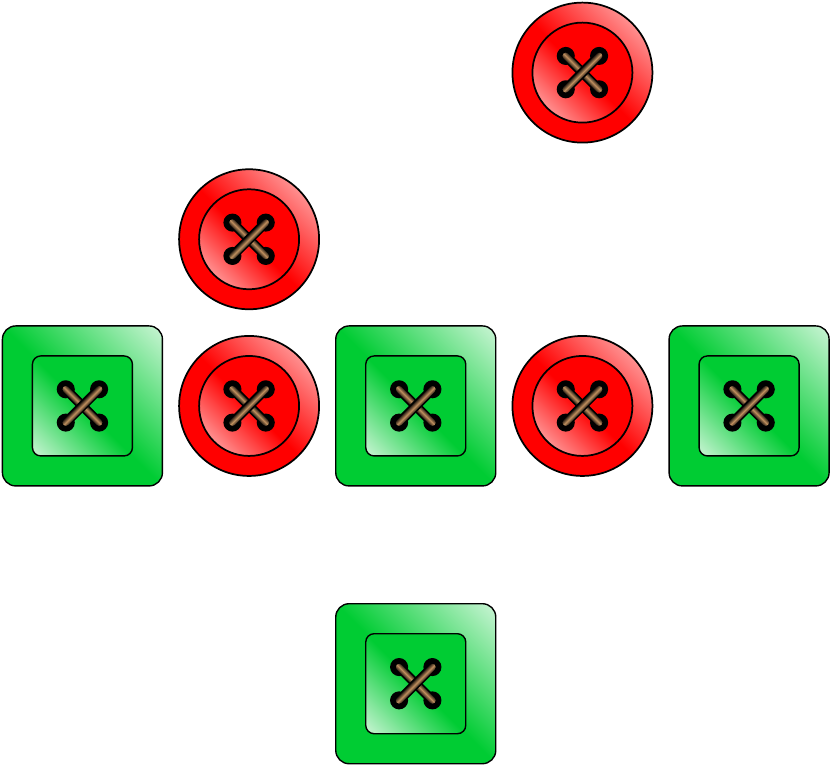} {2c_splitter-1} {2c_splitter-2} {Split gadget (a) and the two possible ways to clear it (b) and (c).}

\subsubsection {Polynomial-Time Algorithm for $1$-color and any cut directions.}
Given an instance $B$ with $C=1$ color and cut directions $d \in \{\dirsFour,\dirsThree,\dirsTwoA,\dirsOne\}$, we construct a hypergraph $G$ that has one node per button in $B$. 
A set of nodes is connected with a hyperedge if the corresponding buttons lie on the same horizontal, vertical, or diagonal line whose direction is in $d$, i.e., they 
can potentially be removed by the same cut. 
By Remark~\ref{rem:2or3cuts} it is sufficient to consider a hypergraph with only 2- and 3-edges. A solution to $B$ corresponds to a perfect matching in $G$. 
%\michael{I find ``triangle'' confusing because it is so well established for graphs. How about clearly distinguishing between 3-edges and 3-cycles?} 
%\christiane{yupp, I also thought that it might be confusing to use triangles like this, but Irina wanted to stick to this notation} 
For clarity, we shall call a 3-edge in $G$ a \emph{triangle}, and a 2-edge simply an \emph{edge}.

Cornu\'{e}jols et al.~\cite{packing-subgraphs} showed how to compute a perfect $K_2$ and $K_3$ matching in a graph in polynomial time. However, their result is not directly applicable to our graph $G$ yet, as we need to find a matching that consists only of edges and proper triangles, and avoids $K_3$'s formed by cycles of three edges.

\begin{figure}[t]
\centering
\raisebox{-1\height}{\includegraphics[scale=0.64]{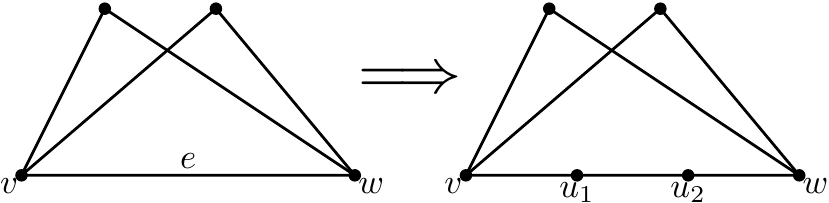}} %{images/1-color}}
\hfill
\raisebox{-1\height}{\includegraphics[scale=0.64]{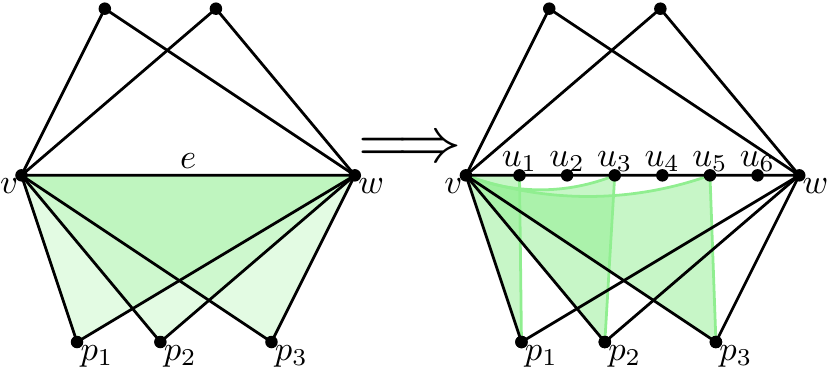}} %{images/1-color6.pdf}} 
\smallskip \\
\hrulefill \\
\smallskip
\comicII{0.20\textwidth}{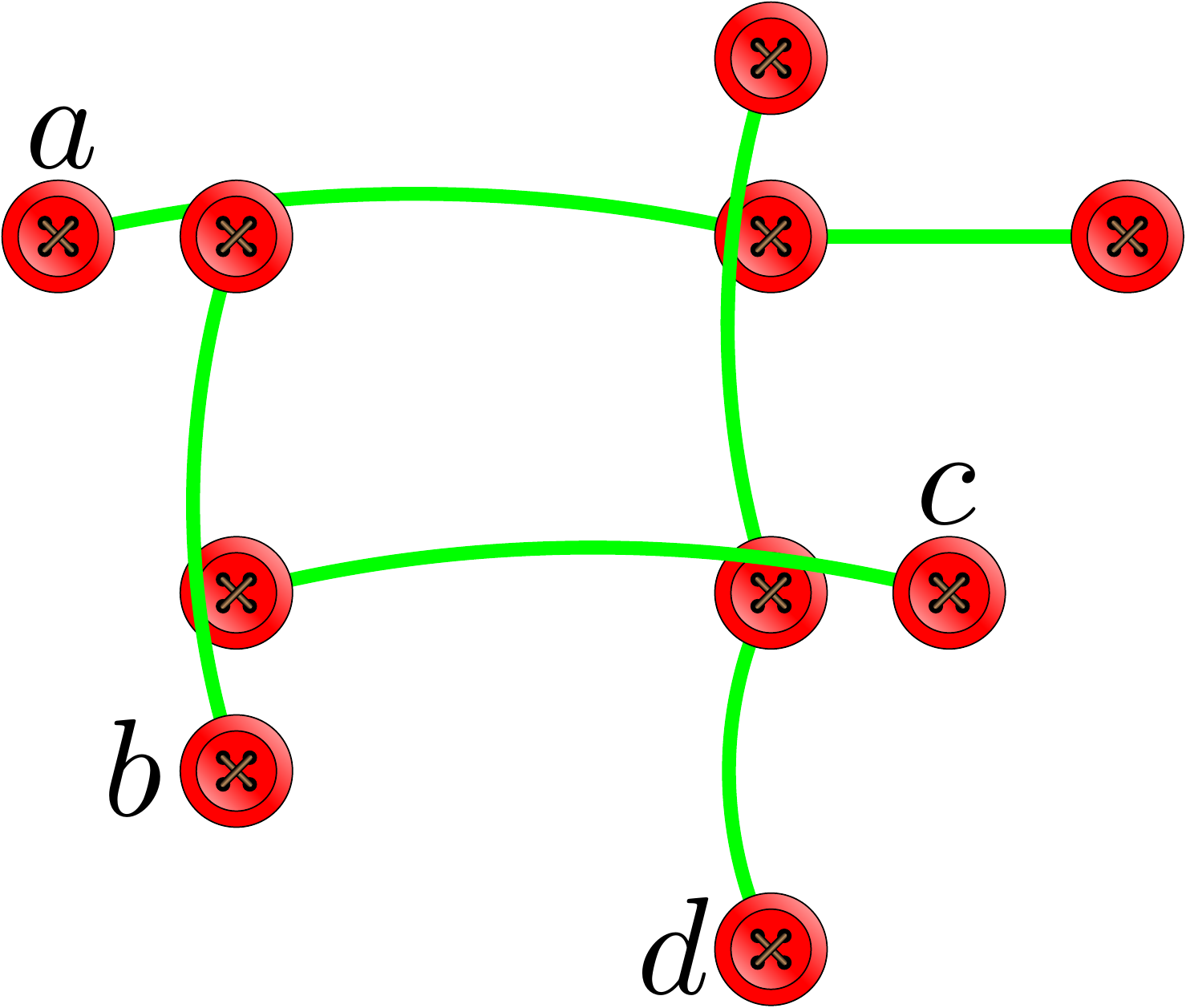} %{images/1-color1.pdf}
$\implies$
\comicII{0.1\textwidth}{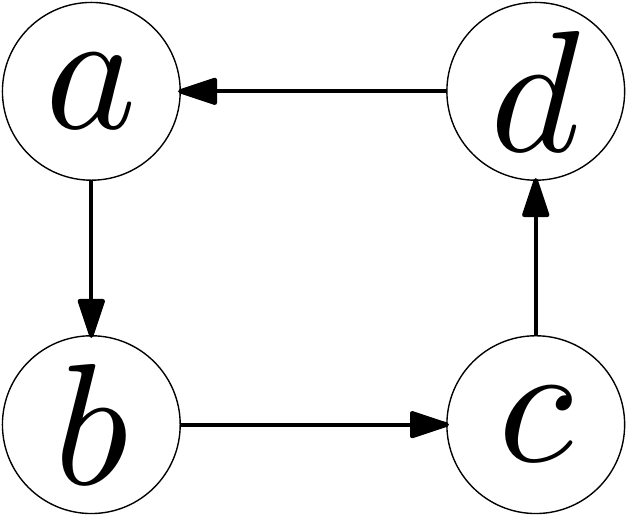} %{images/1-color}
\hfill
\comicII{0.20\textwidth}{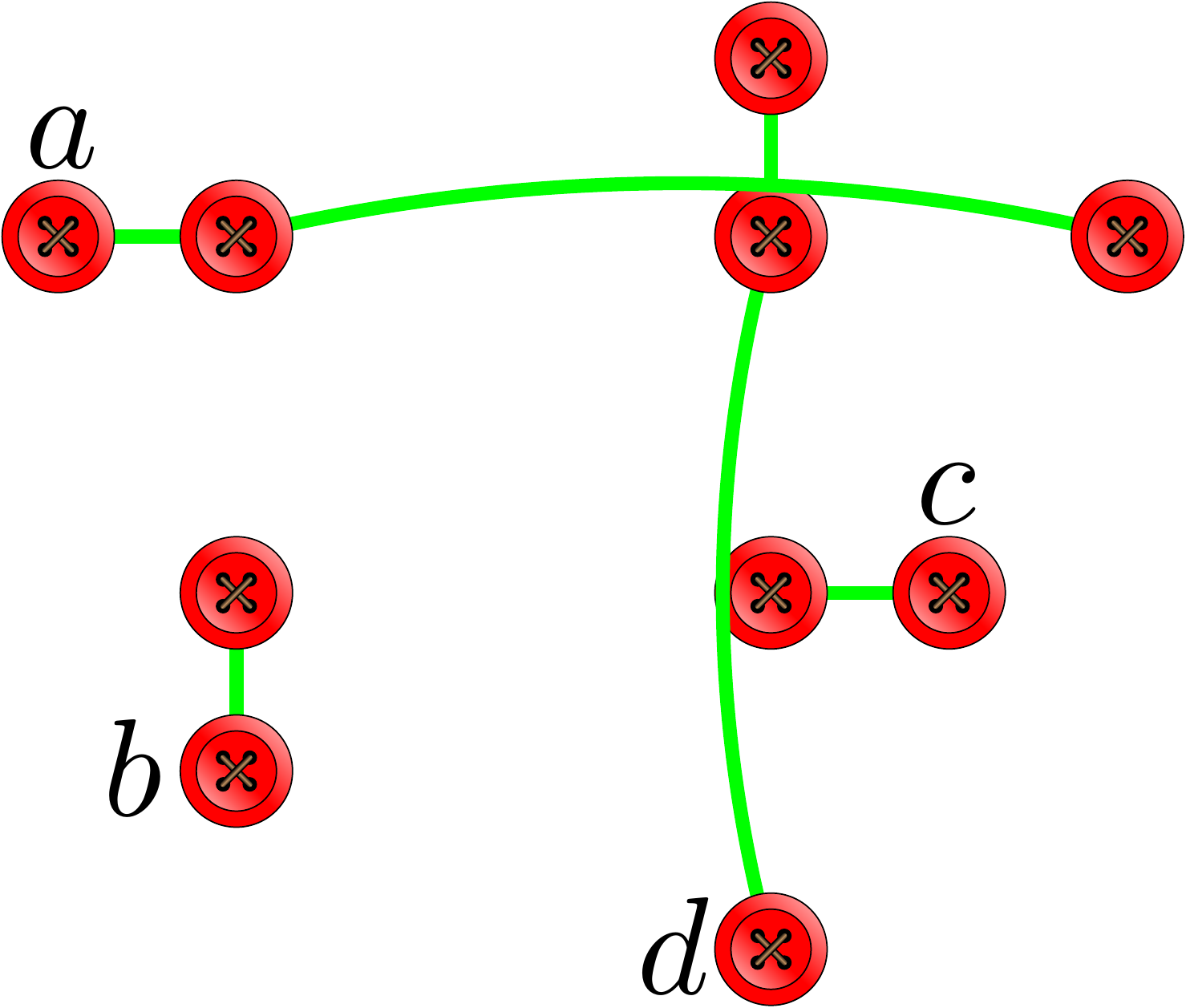} %{images/1-color}
$\implies$
\comicII{0.1\textwidth}{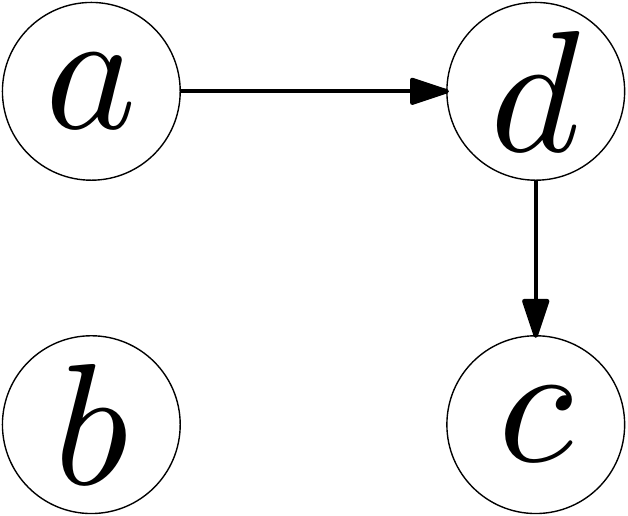} %{images/1-color}
\caption{Top-left: splitting 3-cycles when there are no adjacent triangles to edge $e$; top-right: splitting 3-cycles when $e$ has adjacent triangles (shaded).
Bottom-left: constructing $G_c$ from four cuts blocking each other in a cycle; bottom-right: constructing $G_c$ from the same cuts after reassigning the blocking buttons
}
\label{fig:remove-graph-cycles}
\label{fig:remove-cycle}
\end{figure}

% Sorry, the \label{fig:remove-cycle} figure was merged into the \label{fig:remove-graph-cycles} figure to save space in the abstract (see revision 461 -> revision 462).
% For the full version you can manually restore the longer figure below.
%\iffull
%\begin{figure}[t]
%%\centering
%% \includegraphics[width=0.3\textwidth]{images/1-color1.pdf}\hfil\%includegraphics[width=0.3\textwidth]{images/1-color2.pdf}%\\
%% \includegraphics[width=0.1\textwidth]{images/1-color3.pdf}\hfil\%includegraphics[width=0.1\textwidth]{images/1-color4.pdf}
%\centering
%\comicII{0.25\textwidth}{images/1-color1.pdf}
%\hfil
%\comicII{0.25\textwidth}{images/1-color2.pdf}\\
%\vspace*{+0.5cm}
%\hspace*{-0.5cm}\comicII{0.1\textwidth}{images/1-color3.pdf}
%\hfil
%\comicII{0.1\textwidth}{images/1-color4.pdf}
%\caption{Top left: four cuts blocking each other in a cycle; bottom left: graph $G_c$ corresponding to the four cuts; top right: the same cuts after reassigning the blocking buttons; bottom right: graph $G_c$ after the changes to the cuts.}
%\label{fig:remove-cycle}
%\end{figure}
%\fi

To apply \cite{packing-subgraphs} we construct graph $G'$ by adding vertices to eliminate all cycles of three edges as follows (see top of \cref{fig:remove-graph-cycles}). Start with $G'=G$. Consider an $e=(v,w)\in G'$ in a 3-cycle (a cycle of three edges). There are two cases: $e$ is not adjacent to any triangle in $G'$, or $e$ is adjacent to some triangles in $G'$. In the first case we add vertices $u_1$ and $u_2$ that split $e$ into three edges $(v,u_1)$, $(u_1,u_2)$, and $(u_2,w)$. In the second case, when $e$ is adjacent to $k$ triangles, we add $2k$ vertices ${u_1,u_2\dots,u_{2k}}$ along $e$, and replace every $\triangle p_{i}vw$ with $\triangle p_{i}vu_{2i-1}$.

\begin{lemma}\label{lem:remove-cycles}
There exists a perfect edge- and triangle-matching in $G'$ iff there exists perfect edge- and triangle-matching in $G$.
\end{lemma}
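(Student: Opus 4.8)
The plan is to prove the equivalence one subdivided edge at a time. Each processed edge $e=(v,w)$ introduces its own fresh vertices $u_1,\dots,u_{2k}$, which are vertex-disjoint from those introduced for any other edge, and subdividing $e$ alters only $e$ and the triangles $\triangle p_i v w$ through $e$, leaving every other edge and triangle untouched. Hence it suffices to show that a \emph{single} replacement step preserves the existence of a perfect edge/triangle-matching in both directions, and then induct on the number of steps (at each step the rest of the graph, including gadgets not yet processed, is literally unchanged). For one step I will compare $G$ and $G'$ only on the affected \emph{gadget} and argue that the set of realizable \emph{boundary patterns} — which of $v,w,p_1,\dots,p_k$ are covered by a gadget-internal hyperedge versus left to be covered outside — is \emph{identical} in the two graphs. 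Since everything outside the gadget is common to $G$ and $G'$, equality of these pattern sets yields both implications: keep the external part of a global matching and swap in the corresponding internal part.

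In the first case, where $e$ is adjacent to no triangle, the gadget in $G'$ is the path $v-u_1-u_2-w$. The degree-two vertices $u_1,u_2$ force that any perfect matching uses either the middle edge $(u_1,u_2)$, leaving $v,w$ external, or both end edges $(v,u_1),(u_2,w)$, covering $v,w$ internally. These are exactly the two $G$-options of not using $e$ (both external) and using $e$ (both internal), so the patterns coincide.

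The second case is the crux. Every new triangle $\triangle p_i v u_{2i-1}$ and the first path edge $(v,u_1)$ contain $v$, so at most one of them appears, and I split on how $v$ is covered. If $v$ is external, no triangle and no edge at $v$ is used and the path $u_1-\cdots-u_{2k}-w$ must be matched by path edges alone; a parity count shows this forces $w$ external and the $k$ pairs $(u_{2i-1},u_{2i})$, giving the ``all external'' pattern. If $v$ is covered by $(v,u_1)$, the remaining path $u_2-\cdots-u_{2k}-w$ has even length and is forced down to $(u_{2k},w)$, covering $w$ internally with every $p_i$ external — the ``use $e$'' pattern. If $v$ is covered by $\triangle p_j v u_{2j-1}$, deleting $v$ and $u_{2j-1}$ leaves the two blocks $u_1,\dots,u_{2j-2}$ and $u_{2j},\dots,u_{2k},w$, each of \emph{even} length precisely because the attachment index $2j-1$ is odd; both admit a forced path matching, $w$ becomes internal, and exactly $p_j$ is internal — the ``use triangle $\triangle p_j v w$'' pattern. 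These are precisely the three families realizable in $G$, finishing the step.

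The main obstacle I anticipate is this Case-2 parity bookkeeping: I must verify that $v$ and $w$ are never \emph{split} (one internal, one external), mirroring the fact that in $G$ every internal hyperedge covers $v$ and $w$ together, and that no spurious pattern (two triangles, or a triangle together with $e$) can arise. The even-length blocks produced by attaching each triangle to the odd-indexed vertex $u_{2j-1}$ are exactly what makes the count go through, so I will spell that parity argument out. A minor remaining point is that each shared vertex $p_i$ retains exactly its two options — covered by its modified triangle, or externally — which is immediate since no other hyperedge incident to a $p_i$ is changed.
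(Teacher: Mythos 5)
Your proof is correct and in substance the same as the paper's: the same case split on how $v$ is covered (externally, by the first path edge, or by a modified triangle), the same forced alternating matchings on the subdivided path, and the same reliance on the odd attachment index $2i-1$ to make the parity work out. The only difference is packaging --- you establish equality of realizable boundary patterns on $\{v,w,p_1,\dots,p_k\}$ once (plus an explicit edge-by-edge induction), which yields both implications simultaneously, whereas the paper writes out the two directions as separate explicit constructions with the forcedness argument appearing in the $G'\Rightarrow G$ direction.
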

\begin{proof}
Given a perfect matching $M$ in $G$, we construct a perfect matching $M'$ in $G'$. 
Consider $e=(v,w)$ in $G$. If $e$ is not adjacent to any triangles in $G$, then
\begin{itemize}
	\item if $e\in M$ then add edges $(v,u_1)$ and $(u_2,w)$ of $G'$ to $M'$ (both $v$ and $w$ are covered by $e$, and all $v$, $w$, $u_1$, and $u_2$ are covered by $M'$);
	\item if $e\not\in M$ then add edge $(u_1,u_2)$ of $G'$ to $M'$ ($v$ and $w$ are not covered by $e$, and $u_1$ and $u_2$ are covered by $M'$).
\end{itemize}
In both cases above the extra nodes in $G'$ are covered by edges in $M'$, and if $v$ and $w$ in $G$ are covered by $e$ in $M$ then $v$ and $w$ are covered by $(v,u_1)$ and $(u_2,w)$ in $G'$. If $e$ is adjacent to some triangles in $G$,
\begin{itemize}
	\item if $e\in M$ then in $G'$ add edges $(v,u_1)$, $(u_{2k},w)$, and $(u_{2j},u_{2j+1})$ to $M'$, for $1\leq j<k$;
	\item if $\triangle p_{i}vw\in M$ for some $i$ then add $\triangle p_{i}vu_{2i-1}$, edges $(u_{2j-1},u_{2j})$ for $1\leq j<i$, $(u_{2j},u_{2j+1})$ for $i\leq j < k$, and $(u_{2k},w)$ of $G'$ to $M'$;  
%	\michael{It's a bit problematic to directly use the notation from the operation above because all edges of the original 3-edge may be subdivided.}
	\item if neither $e$ nor any triangle adjacent to $e$ is in $M$ then add edges $(u_{2j-1},u_{2j})$ of $G'$ to $M'$, for $1\leq j \leq k$.
\end{itemize}
In all the above cases the extra nodes in $G'$ are covered by edges in $M'$, and if $v$ and $w$ in $G$ are covered by $e$ or a triangle in $M$ then $v$ and $w$ are also covered by $(v,u_1)$ and $(u_2,w)$ or by a corresponding triangle in $G'$.

\ifabstract
Refer to Appendix B for the details on how to create a perfect matching in $G$ from one in $G'$.
%The other direction -- given a perfect matching $M'$ in $G'$ construct a perfect matching $M$ in $G$ -- is provided in Appendix B.
\fi
\iffull
Now, given a perfect matching $M'$ in $G'$ we show how to construct a perfect matching $M$ in $G$. Again, consider an edge $e$ in $G$ that is replaced by several edges in $G'$. If $e$ is not adjacent to any triangles in $G$, then
\begin{itemize}
	\item if $u_1$ is covered by edge $(v,u_1)$ in $M'$ then $u_2$ has to be covered by edge $(u_2,w)$, therefore we can add edge $e$ to $M$ (all $v$, $w$, $u_1$, and $u_2$ are covered by $M'$, and both $v$ and $w$ are covered by $e$ in $M$);
	\item if $u_1$ is covered by edge $(u_1,u_2)$ in $M'$ then $v$ and $w$ have to be covered by other edges or triangles in $M'$, and $v$ and $w$ will be covered by the corresponding edges or triangles in $M$.
\end{itemize}
In the second case, when $e$ is adjacent to some triangles in $G$,
\begin{itemize}
	% The below felt hard to understand, so I copied it and made some changes.
  %\item if $\triangle p_{i}vu_{2i-1}\in M'$ for some $i$ then add $\triangle p_{i}vw$ to $M$; edges $(u_{2j-1},u_{2j})$ for $1\leq j<i$, $(u_{2j},u_{2j+1})$ for $i\leq j< k$, and $(u_{2k},w)$ are forced in $M'$, thus both nodes $v$ and $w$ are covered by triangle $\triangle p_{i}vu_{2i-1}$ in $M'$, and both nodes $v$ and $w$ are covered by the corresponding triangle $\triangle p_{i}vw$ in $M$;
  \item if $\triangle p_{i}vu_{2i-1}\in M'$ for some $i$ then add $\triangle p_{i}vw$ to $M$. Edges $(u_{2j-1},u_{2j})$ for $1\leq j<i$, $(u_{2j},u_{2j+1})$ for $i\leq j< k$, and $(u_{2k},w)$ are forced in $M'$, and thus both nodes $v$ and $w$ are covered. Both nodes $v$ and $w$ are covered by the corresponding triangle $\triangle p_{i}vw$ in $M$;

	\item if none of the triangles $\triangle p_{i}vu_{2i-1}$ is in $M'$ then consider node $u_1$ in $G'$:
	\begin{itemize}
		\item if $u_1$ is covered by edge $(v,u_1)$ in $M'$ then edges $(u_{2j},u_{2j+1})$ for $1\leq j<k$ and $(u_{2k},w)$ are forced in $M'$. Therefore we can add edge $e$ to $M$ (all $v$, $w$, $u_j$ are covered by $M'$, and both $v$ and $w$ are covered by $e$ in $M$);
		\item if $u_1$ is covered by edge $(u_1,u_2)$ in $M'$ then edges $(u_{2j-1},u_{2j})$ for $2\leq j\leq k$ are forced in $M'$, therefore $v$ and $w$ have to be covered by other edges or triangles in $M'$, and $v$ and $w$ will be covered by the corresponding edges or triangles in $M$.
	\end{itemize}
\end{itemize}
\else
\later{
\subsection*{Part 2 of Proof of Lemma~\ref{lem:remove-cycles}}
Now, given a perfect matching $M'$ in $G'$ we show how to construct a perfect matching $M$ in $G$. Again, consider an edge $e$ in $G$ that is replaced by several edges in $G'$. If $e$ is not adjacent to any triangles in $G$, then
\begin{itemize}
	\item if $u_1$ is covered by edge $(v,u_1)$ in $M'$ then $u_2$ has to be covered by edge $(u_2,w)$, therefore we can add edge $e$ to $M$ (all $v$, $w$, $u_1$, and $u_2$ are covered by $M'$, and both $v$ and $w$ are covered by $e$ in $M$);
	\item if $u_1$ is covered by edge $(u_1,u_2)$ in $M'$ then $v$ and $w$ have to be covered by other edges or triangles in $M'$, and $v$ and $w$ will be covered by the corresponding edges or triangles in $M$.
\end{itemize}
In the second case, when $e$ is adjacent to some triangles in $G$,
\begin{itemize}
	% The below felt hard to understand, so I copied it and made some changes.
  %\item if $\triangle p_{i}vu_{2i-1}\in M'$ for some $i$ then add $\triangle p_{i}vw$ to $M$; edges $(u_{2j-1},u_{2j})$ for $1\leq j<i$, $(u_{2j},u_{2j+1})$ for $i\leq j< k$, and $(u_{2k},w)$ are forced in $M'$, thus both nodes $v$ and $w$ are covered by triangle $\triangle p_{i}vu_{2i-1}$ in $M'$, and both nodes $v$ and $w$ are covered by the corresponding triangle $\triangle p_{i}vw$ in $M$;
  \item if $\triangle p_{i}vu_{2i-1}\in M'$ for some $i$ then add $\triangle p_{i}vw$ to $M$. Edges $(u_{2j-1},u_{2j})$ for $1\leq j<i$, $(u_{2j},u_{2j+1})$ for $i\leq j< k$, and $(u_{2k},w)$ are forced in $M'$, and thus both nodes $v$ and $w$ are covered. Both nodes $v$ and $w$ are covered by the corresponding triangle $\triangle p_{i}vw$ in $M$;

	\item if none of the triangles $\triangle p_{i}vu_{2i-1}$ is in $M'$ then consider node $u_1$ in $G'$:
	\begin{itemize}
		\item if $u_1$ is covered by edge $(v,u_1)$ in $M'$ then edges $(u_{2j},u_{2j+1})$ for $1\leq j<k$ and $(u_{2k},w)$ are forced in $M'$. Therefore we can add edge $e$ to $M$ (all $v$, $w$, $u_j$ are covered by $M'$, and both $v$ and $w$ are covered by $e$ in $M$);
		\item if $u_1$ is covered by edge $(u_1,u_2)$ in $M'$ then edges $(u_{2j-1},u_{2j})$ for $2\leq j\leq k$ are forced in $M'$, therefore $v$ and $w$ have to be covered by other edges or triangles in $M'$, and $v$ and $w$ will be covered by the corresponding edges or triangles in $M$.
	\end{itemize}
\end{itemize}
\qed
}
\fi  \qed
\end{proof}

Thus, a perfect edge- and triangle-matching in $G$ that does not use a 3-cycle (if it exists) can be found by first converting $G$ to $G'$ and applying the result in \cite{packing-subgraphs} to $G'$. A solution of $B$ consisting of 2- and 3-cuts can be reduced to a perfect edge- and triangle-matching in $G$; however, the opposite is not a trivial task. A perfect matching in $G$ can correspond to a set of cuts $C_M$ in $B$ that are blocking each other (see bottom of \cref{fig:remove-cycle}). To extract a proper order of the cuts we build another graph $G_{c}$ that has a node per cut in $C_M$ and a directed edge between two nodes if the cut corresponding to the second node is blocking the cut corresponding to the first node. If $G_{c}$ does not have cycles, then there is a partial order on the cuts. The cuts that correspond to the nodes with no outgoing edges can be applied first, and the corresponding nodes can be removed from $G_{c}$. However, if $G_{c}$ contains cycles, there is no order in which the cuts can be applied to clear up board $B$. In this case we will need to modify some of the cuts in order to remove cycles from $G_c$. \ifabstract We provide the details in Appendix B.
\later{
\subsection*{Removing Cycles from $G_c$}
To do so we perform the following two steps:
\begin{itemize}
	\item Repeat: choose a cycle in $G_c$ and for every edge $(c_1,c_2)$ in it reassign the button of $c_2$ that is blocking cut $c_1$ to $c_1$. After one step
\begin{itemize}
	\item the total number of buttons in all the cuts stays the same,
	\item if cut $c_2$ consisted of two buttons, or if cut $c_2$ consisted of three buttons and the button blocking cut $c_1$ was not in the middle of $c_2$, then after reassigning the buttons the length (i.e., the Euclidean distance between the buttons) of $c_2$ decreases,
	\item if cut $c_2$ consisted of three buttons and the button blocking cut $c_1$ was in the middle of $c_2$, then after reassigning the buttons the direction of the edge $(c_1,c_2)$ changes to $(c_2,c_1)$.
\end{itemize}
	\item After the previous step can no longer be applied to $G_c$ such that the total length of the cuts decreases, there can only be cycles left in $G_c$ that consist of 3-cuts with the blocking buttons being the middle ones. Then for any edge $(c_1,c_2)$, if we reassign the middle button of $c_2$ to $c_1$, $c_2$ will become a 2-cut, and $c_1$ will have four buttons, and therefore can be split into two 2-cuts. The direction of the corresponding edge will also change its direction. In this way the rest of the cycles can be removed from $G_c$.
\end{itemize}
}
\fi

\iffull
To do so we perform the following two steps:
\begin{itemize}
	\item Repeat: choose a cycle in $G_c$ and for every edge $(c_1,c_2)$ in it reassign the button of $c_2$ that is blocking cut $c_1$ to $c_1$. After one step
\begin{itemize}
	\item the total number of buttons in all the cuts stays the same,
	\item if cut $c_2$ consisted of two buttons, or if cut $c_2$ consisted of three buttons and the button blocking cut $c_1$ was not in the middle of $c_2$, then after reassigning the buttons the length of $c_2$ decreases,
	\item if cut $c_2$ consisted of three buttons and the button blocking cut $c_1$ was in the middle of $c_2$, then after reassigning the buttons the direction of the edge $(c_1,c_2)$ changes to $(c_2,c_1)$.
\end{itemize}
	\item After the previous step can no longer be applied to $G_c$ such that the total length of the cuts decreases, there can only be cycles left in $G_c$ that consist of 3-cuts with the blocking buttons being the middle ones. Then for any edge $(c_1,c_2)$, if we reassign the middle button of $c_2$ to $c_1$, $c_2$ will become a 2-cut, and $c_1$ will have four buttons, and therefore can be split into two 2-cuts. The direction of the corresponding edge will also change its direction. In this way the rest of the cycles can be removed from $G_c$.
\end{itemize}
\fi
\iffull

To summarize, a solution to 1-color Buttons $\&$ Scissors level $B$ can be found, if it exists, with following algorithm:
\begin{enumerate}
	\item Convert $B$ to hypergraph $G$ that encodes all possible cuts of length two and three buttons using the allowed cut directions.
	\item Convert $G$ to $G'$ that contains no 3-cycles that are not triangles, and find a matching in $G'$.
	\item Construct the directed graph $G_c$ that encodes which cuts from the matching are blocking each other and remove all the cycles from $G_c$ by reassigning some buttons to other cuts.
	\item Extract a partial order from $G_c$ that will give a proper order in which the cuts can be applied to solve $B$.
\end{enumerate}
\fi
By Lemma~\ref{lem:remove-cycles} and by the construction above we obtain the following theorem.

\ifabstract
\begin{theorem}
$\Board{n \times n}{1}{\infty}{d}{\{2,3\}}{B}$ is polytime solvable for all $d \in \{\dirsFour,\dirsThree,\dirsTwoA,\dirsOne\}$.
\end{theorem}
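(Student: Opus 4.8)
The plan is to translate solvability of a one-color board into a purely combinatorial matching question, solve that with the machinery already assembled, and then show that the resulting matching can always be turned into an executable sequence of cuts. First I would record the reduction built up above: form the hypergraph $G$ with one vertex per button and a $2$-edge (resp.\ triangle) for each pair (resp.\ collinear triple) of buttons sharing a line whose direction lies in $d$. By \cref{rem:2or3cuts} it suffices to seek solutions that use only cuts of size $2$ and $3$. Since in a one-color board color never obstructs feasibility, the content of a solution is just that each button is deleted by exactly one cut; hence any solution induces a perfect matching of $G$ by edges and triangles, and conversely any partition of the buttons into collinear pairs and triples is a candidate solution provided its cuts can be ordered legally. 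The one caveat is that the matching must use \emph{geometric} triangles (three collinear buttons) rather than spurious $K_3$'s formed by three mutually adjacent edges.

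To extract such a matching in polynomial time I would use the transformation of the preceding paragraphs: pass from $G$ to $G'$, in which every $3$-cycle of edges has been subdivided so that no $K_3$ survives except a genuine triangle, run the Cornu\'{e}jols et al.\ $K_2/K_3$ packing algorithm~\cite{packing-subgraphs} on $G'$, and pull the output back to a perfect edge- and triangle-matching $M$ of $G$ via \cref{lem:remove-cycles}. At this point solvability of $B$ is equivalent to the existence of an \emph{orderable} such matching, so it remains only to handle orderability.

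The remaining, and genuinely delicate, step is to convert the partition $M$ into an ordered list of legal cuts, since cuts in $M$ may \emph{block} one another: the segment spanned by one cut can contain a button that $M$ assigns to a different cut. I would record these dependencies in the directed graph $G_c$ with one node per cut and an arc pointing from each cut to every cut that blocks it. If $G_c$ is acyclic we are done: repeatedly apply any cut blocked by no remaining cut. The main obstacle is the cyclic case, where no ordering of $M$ is executable and $M$ must be repaired in place. Around a cycle I would reassign each blocking button to the cut it obstructs; because every cut on the cycle simultaneously gives away one button and receives one, each cut's size is preserved. The key claim to verify is termination, for which I would use the total Euclidean length of the cuts as a potential: when the reassigned button is an endpoint of its old cut the cut strictly shortens, so the potential drops, while when it is the middle button of a $3$-cut the length is unchanged and the corresponding arc merely reverses. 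Once no length-decreasing move remains, every surviving cycle consists of $3$-cuts blocked by their middle buttons; there I would switch to a single-arc operation, moving one middle button so that the blocking cut becomes a $2$-cut and the obstructed cut acquires four collinear buttons, which I split into two $2$-cuts, reversing that arc and breaking the cycle. Checking that this last move introduces no new cycles, together with the potential argument, shows the repair terminates with an acyclic $G_c$; combining the reduction, \cref{lem:remove-cycles}, and this cycle elimination then yields the theorem.
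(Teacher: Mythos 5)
Your proposal follows essentially the same route as the paper: build the hypergraph $G$ of size-$2$ and size-$3$ collinear cuts, subdivide spurious $3$-cycles to get $G'$, apply the Cornu\'{e}jols et al.\ $K_2/K_3$ packing algorithm and pull the matching back via \cref{lem:remove-cycles}, then repair blocking cycles in $G_c$ by reassigning buttons, using total cut length as a potential and a final middle-button/split move for the residual cycles of middle-blocked $3$-cuts. The argument and its level of detail (including the lightly justified claim that the final reassignment step eliminates all remaining cycles) match the paper's own proof.
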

\fi
\iffull
\begin{theorem}
Buttons $\&$ Scissors for 1-color, i.e., $\Board{n \times n}{1}{\infty}{\dirsFour}{\{2,3\}}{B}$, is polytime solvable.
\end{theorem}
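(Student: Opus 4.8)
The plan is to reduce the one-color puzzle to a hypergraph matching problem. First I would build the hypergraph $G$ with one node per button and, invoking \cref{rem:2or3cuts}, keep only the $2$-edges and the proper triangles that correspond to feasible cuts of size $2$ and $3$. Any solution clears each button exactly once, so the cuts it uses partition the buttons and therefore give a perfect matching of $G$ by edges and proper triangles; conversely any such matching is a candidate set of cuts for the board. Hence the decision question reduces to finding a perfect $\{K_2,K_3\}$-matching that uses only \emph{proper} triangles (three genuinely collinear buttons) and never a \emph{fake} triangle that arises as a $3$-cycle of pairwise-collinear but non-collinear buttons.

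The first obstacle is that the algorithm of Cornu\'ejols et al.~\cite{packing-subgraphs} for perfect $K_2,K_3$-matchings cannot distinguish proper from fake triangles. I would close this gap using the transformation to $G'$ and Lemma~\ref{lem:remove-cycles}: subdividing each edge that lies on a $3$-cycle destroys every fake triangle while preserving, through the explicit correspondence proved in the lemma, the set of realizable perfect edge-and-triangle matchings. Running \cite{packing-subgraphs} on $G'$ then decides in polynomial time whether a suitable matching $M$ exists and produces one when it does.

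The real difficulty, and the step I expect to be the crux, is converting a matching $M$ into an ordered, \emph{feasible} sequence of cuts, since the cuts in $M$ may block one another (a cut is only feasible once no foreign button lies on its segment). I would record these dependencies in the directed blocking graph $G_c$, placing an arc from $c_1$ to $c_2$ whenever a button of $c_2$ blocks $c_1$; if $G_c$ is acyclic, the cuts with no outgoing arcs are unblocked, can be applied first, and iterating this peeling lists all cuts in a legal order. The substance is then to show cycles in $G_c$ can always be removed without destroying the partition: along a cycle I reassign each blocking button to the cut it blocks. I would argue termination using the total Euclidean length of the cuts as a potential, which strictly decreases except when the blocking button is the middle of a $3$-cut, in which case the arc merely reverses; once only such cycles remain, reassigning a middle button turns that $3$-cut into a $2$-cut and grows another cut to four buttons, which splits into two $2$-cuts (legal by \cref{rem:2or3cuts}), again reversing the arc while strictly reducing the number of $3$-cuts. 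Since each operation is polynomial and both potentials are polynomially bounded, the entire procedure runs in polynomial time, and combining the matching step with this cycle-removal step yields the theorem.
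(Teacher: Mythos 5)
Your proposal is correct and follows essentially the same route as the paper: the same button-hypergraph with $2$-edges and proper triangles, the same subdivision construction $G'$ with Lemma~\ref{lem:remove-cycles} to make the algorithm of Cornu\'ejols et al.~\cite{packing-subgraphs} applicable, and the same blocking digraph $G_c$ with the same two-phase cycle-removal argument (Euclidean-length potential, then converting middle-blocked $3$-cuts into $2$-cuts). No meaningful differences to report.
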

\fi

%In a graph $G$, the maximum number of vertices that can be covered by edge-disjoint $K_2$ and $K_3$ subgraphs is polytime computable in the size of $G$ (see ).
%We convert each 1-color board $B$ into a graph whose vertices can be perfectly covered if and only if $B$ is solvable.
%The transformation uses Remark \ref{rem:2or3cuts} and is non-trivial since the order of cuts is important.

\subsection{Frequency of Colors} \label{sec:1p_frequency} 
% !TEX root = arxiv-bands.tex

%Next we provide the precise frequency where the puzzle becomes intractable.

\ifabstract
\begin{theorem} \label{thm:freq3}
$\Board{n \times n}{\infty}{3}{\dirsFour}{\{2,3\}}{B}$ is polytime solvable.
\end{theorem}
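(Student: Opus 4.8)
The plan is to show that in this regime the set of cuts is essentially \emph{forced}, so that the only remaining freedom is their ordering, which can be decided by a topological sort.

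First I would analyze each color in isolation. Every cut has size in $\{2,3\}$ and removes buttons of a single color, so a color of frequency $f$ must be eliminated by cuts whose sizes sum to $f$ with each part at least $2$. The cases then split cleanly: $f=1$ is impossible, since a lone button can never be cut, so such an instance is immediately unsolvable; $f=2$ forces a single size-$2$ cut on the two buttons; and $f=3$ forces a single size-$3$ cut on all three, because $3$ cannot be written as a sum of $2$'s and any size-$2$ cut would strand the third button. Hence each color with $f\in\{2,3\}$ admits exactly one cut $x_c$, namely the segment spanned by its extreme buttons, and this cut exists only if those buttons are collinear along an allowed direction (horizontal, vertical, or a $\pm 45^\circ$ diagonal); otherwise the instance is unsolvable. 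This is where \cref{rem:2or3cuts} justifies restricting attention to cuts of size $2$ and $3$.

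Next I would reduce solvability to an ordering question. Because each color is cleared by its unique cut $x_c$, any solution consists of precisely the cuts $\{x_c\}$ applied in some order, and the only constraint is \emph{feasibility} at the moment of application: the segment of $x_c$ must contain no button of another color. A button of color $c'$ lying strictly inside the segment of $x_c$ is removed exactly when $x_{c'}$ is applied, and $x_{c'}$ removes \emph{all} of $c'$ at once, so $x_c$ can be applied iff every such $x_{c'}$ precedes it. I would therefore build a precedence digraph $D$ on the colors, with an arc $c' \to c$ whenever some button of $c'$ lies strictly between the endpoints of $x_c$; a valid global ordering of the cuts is then exactly a topological order of $D$.

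Finally I would prove the equivalence: the instance is solvable iff no color has frequency $1$, every frequency-$2$ or frequency-$3$ color is collinear in an allowed direction, and $D$ is acyclic. Sufficiency follows by applying the cuts in any topological order, since each cut's blockers are removed before it. Necessity follows because the cuts are forced and a blocked cut can only be unblocked by first applying the cuts of the colors blocking it, so any solution order must be consistent with $D$. Since $D$ has at most one node per color and its arcs are read directly off the board, and acyclicity together with topological sorting is polynomial, the entire procedure runs in polynomial time. I expect the crux (rather than a technical difficulty) to be establishing forcedness cleanly -- verifying that a frequency-$3$ color truly cannot be handled except by a single size-$3$ cut, and that clearing one color removes all of its buttons simultaneously -- since this is what makes the ``which cut blocks which'' relation static and the reduction to acyclicity of $D$ routine.
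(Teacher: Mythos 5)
Your proposal is correct, and it reaches the result by a more explicit route than the paper. You share the paper's key observation -- with $F\le 3$ and cut sizes at least $2$, every cut in any solution must remove an entire color, since leaving a single stranded button is fatal -- but from there the two arguments diverge. The paper simply combines this with \cref{rem:removeColor} (deleting all buttons of one color preserves solvability) to conclude that a greedy algorithm, which repeatedly applies \emph{any} feasible whole-color cut, can never go wrong; this is a three-line proof that sidesteps any analysis of which cuts block which. You instead push the forcedness further: you pin down the unique candidate cut $x_c$ for each color, translate feasibility into a precedence digraph $D$ on colors, and characterize solvability as ``no frequency-$1$ color, all colors collinear in an allowed direction, and $D$ acyclic,'' solved by topological sort. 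Your version buys an explicit structural characterization of the solvable instances (and makes clear that any topological order works), at the cost of more case analysis; the paper's version buys brevity by outsourcing the correctness of greedy choices to \cref{rem:removeColor}. Both are sound and clearly polynomial. One small remark: invoking \cref{rem:2or3cuts} is unnecessary here, since the theorem statement already fixes $S=\{2,3\}$.
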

\fi
\iffull
\begin{theorem} \label{thm:freq3}
Buttons $\&$ Scissors with maximum color frequency $F=3$, i.e. $\Board{n \times n}{\infty}{3}{\dirsFour}{\{2,3\}}{B}$, is polytime solvable.
\end{theorem}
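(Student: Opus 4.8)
The plan is to exploit the constraint $F=3$ to show that the multiset of cuts used in any solution is forced by the board, so that the only remaining question is whether those forced cuts admit a consistent ordering.

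First I would classify the colors by frequency. Since every cut removes at least two equally coloured buttons, a color occurring exactly once can never be removed, so its presence makes the instance unsolvable. A color occurring twice must be cleared by a single size-$2$ cut, which exists only if its two buttons lie on a common horizontal, vertical, or $\pm45^{\circ}$ line (all permitted because $d=\dirsFour$). The key observation concerns frequency $3$: a size-$2$ cut there would remove two of the three buttons and strand the third, which could then never be removed; hence a frequency-$3$ color must be cleared by a single size-$3$ cut, possible only if all three of its buttons are collinear along a permitted direction. Consequently, after a single scan, every color is either assigned a unique candidate cut (its button set, a valid segment) or the instance is immediately declared unsolvable; there is no freedom in which cuts to use, only in their order.

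Next I would reduce solvability to an ordering problem. Cuts of distinct colors share no buttons, so their only interaction is blocking: a foreign button lying on the segment of a cut $X$ must be removed before $X$ is applied, and that button is removed only by the unique cut of its own color. I therefore build a precedence digraph $D$ with one node per candidate cut and an arc $Y\to X$ whenever some button of $Y$'s color lies on the segment of $X$. Any solution, restricted to these forced cuts, is a linear extension of $D$; conversely, if $D$ is acyclic then applying the cuts in any topological order is feasible at every step, since when $X$ is reached its segment contains only color-$X$ buttons, so the cut is legal and removes that color entirely (\cref{rem:removeColor} corroborates that clearing colors in this way never obstructs later cuts). Thus the instance is solvable if and only if no color has frequency $1$, every color's buttons are collinear along a permitted direction, and $D$ is acyclic — all checkable in polynomial time via frequency counting, $O(1)$ collinearity tests, scanning each segment of length $O(n)$ for blockers, and a single topological sort.

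The main obstacle is the forced-cut claim itself, in particular the frequency-$3$ stranding argument together with the assertion that blocking is the \emph{only} way the cuts constrain one another; once this rigidity is established the problem collapses to cycle detection in $D$, and the reverse direction (that a topological order yields an honest solution) is routine. A secondary point to check carefully is that each forced cut really has size in $S=\{2,3\}$ at the moment it is applied, which holds because clearing its blockers first leaves exactly the two or three buttons of its own color on the segment.
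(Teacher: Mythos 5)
Your proposal is correct, but it reaches the conclusion by a genuinely different and considerably heavier route than the paper. You share the crucial first observation — with $F\le 3$ and $S=\{2,3\}$, any cut appearing in a solution must remove an entire color, since a size-$2$ cut on a frequency-$3$ color strands an unremovable button — and you correctly deduce that each color admits at most one candidate cut (its buttons must be collinear along a permitted direction). From there, however, the paper simply invokes Remark~\ref{rem:removeColor}: deleting all buttons of one color never turns a solvable board into an unsolvable one, so a greedy algorithm that repeatedly applies \emph{any} currently feasible whole-color cut succeeds if and only if the board is solvable; the question of ordering is dissolved rather than solved. You instead make the ordering explicit, building the precedence digraph $D$ of blocking relations and characterizing solvability as acyclicity of $D$ (plus the frequency and collinearity conditions), certified by a topological sort. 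Your argument is sound — blocking really is the only interaction between whole-color cuts, a solution is a linear extension of $D$, and a topological order is feasible step by step — and it yields a sharper structural statement and an explicit certificate of unsolvability (a blocking cycle). The cost is that you must prove the equivalence in both directions, whereas the paper's three-sentence proof gets the same polynomial-time decidability almost for free from a remark it has already established.
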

\fi

\begin{proof}
A single cut in any solution removes a color.
By \cref{rem:removeColor}, these cuts do not make a solvable board unsolvable.
Thus, a greedy algorithm suffices.  \qed
\end{proof}

Hardness was established for maximum frequency $F=7$ in \cite{glsw-bsnpc-15}.
We strengthen this to $F=4$ via the modified clause gadget in \cref{fig:LevelSeven} (c).
In this gadget the leftmost circular button can be removed if and only if at least one of the three non-circular buttons is removed by a vertical cut.
Thus, it can replace the clause gadget in Section 4.1 of \cite{glsw-bsnpc-15}.
\ifabstract
\cref{thm:freq4} is proven in the appendix.
\fi

\ifabstract
\begin{theorem} \label{thm:freq4}
$\Board{n \times n}{\infty}{4}{\dirsFour}{\{2,3\}}{B}$ is NP-complete.
\end{theorem}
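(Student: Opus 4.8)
The plan is to prove NP-completeness in the usual two parts. Membership in NP is immediate: writing $N$ for the number of buttons, any solution consists of at most $N/2$ cuts (each removes at least two buttons), and the feasibility of a cut --- that it has size in $\{2,3\}$, lies in an allowed direction, and spans a single color --- can be checked in polynomial time; hence a solution is a polynomial-size, polynomially verifiable certificate. The real content is NP-hardness, which I would obtain by reusing the SAT reduction of \cite{glsw-bsnpc-15} essentially verbatim, altering only the clause gadget. Concretely, I would retain that construction's variable, wire, and split gadgets (after checking that each already uses every color at most four times) and replace each clause gadget by the modified gadget of \cref{fig:LevelSeven}(c), in which no color occurs more than four times.

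The first key step is to confirm the local specification of the replacement gadget: its leftmost circular button can be cleared if and only if at least one of the three non-circular input buttons has first been removed by a vertical cut. I would prove this by a short case analysis over the few buttons of the gadget. If an input is removed vertically --- the signal that the associated literal is true --- a feasible cut of size in $\{2,3\}$ then becomes available to eliminate the clause button; if instead all three inputs remain in place, I would check that every cut touching the clause button is blocked, so the gadget cannot be cleared. Because this reproduces exactly the OR behavior of the clause gadget in Section~4.1 of \cite{glsw-bsnpc-15}, the interface presented to the incoming wires is unchanged, and the global equivalence ``board solvable $\iff$ formula satisfiable'' of that reduction is inherited.

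The second key step is the frequency accounting. I would tally, color by color, the occurrences within the new gadget to confirm the bound $F=4$, and then verify that colors shared across gadget boundaries --- in particular the signal colors at each wire/clause junction --- do not accumulate to more than four over the whole board. Together with the verified frequency-four property of the reused gadgets, this yields an $n\times n$ board, with cut sizes $\{2,3\}$ and all four directions, in which every color appears at most four times, matching the target problem exactly.

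The main obstacle I anticipate is the soundness direction of the gluing argument. It is easy to see that a satisfying assignment induces a clearing schedule (set the variables, propagate signals through the splits and wires, then discharge every clause using the OR property). The delicate part is ruling out spurious solutions: one must show that installing the lower-frequency gadget does not open up unintended clearing orders --- for instance a clause button cleared without any true input, or a wire cleared without transmitting its signal --- that would let an unsatisfiable formula masquerade as a solvable board. Establishing that the new gadget is faithful not merely in isolation but under every interaction permitted on the combined board is where the genuine work lies.
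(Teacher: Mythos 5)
Your proposal matches the paper's proof: both establish NP-hardness by taking the $F=7$ reduction of \cite{glsw-bsnpc-15} and swapping in the modified clause gadget of \cref{fig:LevelSeven}(c), verifying by case analysis that its leftmost circular button is removable iff some input button is cut vertically, and then doing the color-frequency accounting (clause-button color rises to $4$, literal-instance colors drop to $3$). The only difference is emphasis --- you flag the gluing/soundness step as the delicate part, whereas the paper disposes of it by noting the new gadget presents an identical interface to the old one --- but the argument is essentially the same.
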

\fi
\iffull
\begin{theorem} \label{thm:freq4}
Buttons $\&$ Scissors with maximum color frequency $F=4$, i.e. $\Board{n \times n}{\infty}{4}{\dirsFour}{\{2,3\}}{B}$, is NP-complete.
\end{theorem}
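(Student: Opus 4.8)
The plan is to establish membership in \cclass{NP} and then NP-hardness, the latter by adapting the existing maximum-frequency-$7$ reduction of \cite{glsw-bsnpc-15}. For membership, note that any solution applies at most $mn/2$ cuts (each cut removes at least two buttons), and each cut is specified by its two endpoint buttons, so a solution is a polynomial-size certificate; given such a sequence we can verify in polynomial time that every cut is feasible for the current board and that the final board is empty. Hence $\Board{n \times n}{\infty}{4}{\dirsFour}{\{2,3\}}{B}\in\cclass{NP}$.

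For hardness I would start from the reduction of \cite{glsw-bsnpc-15}, which maps a SAT formula to a Buttons \& Scissors board built from variable, wire/splitter, and clause gadgets. The crucial observation is that the only gadget forcing a color to appear more than four times is the clause gadget; the variable, wire, and splitter gadgets already respect $F=4$, which I would check directly. As indicated in \cref{fig:LevelSeven}(c), the idea is then to swap in the new frequency-$4$ OR gadget whose defining property is that its leftmost circular button can be cleared \emph{if and only if} at least one of its three non-circular (input) buttons is first removed by a vertical cut. This is exactly the OR semantics the clause gadget of \cite{glsw-bsnpc-15} provided, so it can be dropped in as a replacement.

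I would then argue correctness of the substitution in both directions. For \emph{soundness}, if the incoming assignment satisfies a clause, at least one literal delivers its input via a vertical cut, and the defining property lets us clear the circular button and subsequently the rest of that gadget, so a satisfying assignment yields a solvable board. For \emph{completeness}, if a clause is unsatisfied then none of its three inputs arrives vertically, so the circular button can never be removed and the board is unsolvable. Since the wires and splitters of \cite{glsw-bsnpc-15} transmit each variable's value to the clause inputs exactly as before, the formula is satisfiable precisely when the modified board is solvable; the planar, grid-aligned layout still embeds in an $n\times n$ board after padding with empty cells.

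I expect the main obstacle to be the composition step, i.e.\ verifying that the new gadget behaves correctly \emph{in situ} rather than in isolation. Concretely, one must check that attaching the three wire outputs to the gadget's inputs creates no unintended feasible cuts -- no spurious horizontal, vertical, or diagonal alignments between gadget buttons and neighboring wire buttons -- and that no junction button pushes any color's count above $4$. Establishing the ``clearable iff a vertical input arrives'' behavior against \emph{all} admissible cut orders, including cuts that mix gadget and wire buttons, is the delicate part of the argument; once that is done, the remainder is a routine reuse of the structure of \cite{glsw-bsnpc-15}.
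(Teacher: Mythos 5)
Your proposal takes essentially the same route as the paper: both keep the $F=7$ reduction of \cite{glsw-bsnpc-15} intact except for swapping in the clause gadget of \cref{fig:LevelSeven}(c), relying on the property that the leftmost circular button can be removed iff one of the three middle literal buttons is cut vertically; this drops each literal-instance color to frequency $3$ while raising the clause color from $2$ to $4$. The verification you flag as the delicate part is exactly what the paper supplies, via a short four-case analysis (one case per vertically available input, pairing the four clause buttons $\CButt[L]{x}$, $\CButt[T]{x}$, $\CButt[B]{x}$, $\CButt[R]{x}$ accordingly, plus the case where no input is available and $\CButt[L]{x}$ is stuck).
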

\fi

\ifabstract
\later{
\subsection*{Proof of \cref{thm:freq4}}
\fi

\begin{proof}
The puzzle with $F=7$, i.e. $\Board{n \times n}{\infty}{7}{\dirsFour}{2}{B}$, was proven NP-complete in \cite{glsw-bsnpc-15} via 3-SAT whose clauses have literals of distinct variables.
(The construction created boards that do not have buttons of the same color on any diagonal, so the NP-completeness was also established for these parameters and both $d = \dirsThree$ and $d = \dirsTwoA$ in \cite{glsw-bsnpc-15} .)

We obtain hardness for $F=4$ by modifying the original proof's clause gadget.
The top of \cref{fig:OR_JCDCGG} illustrates the original gadget for clause $C_x = L_i \lor L_j \lor L_k$, where $L_i$ is either the positive literal for variable $V_i$ or the negative literal $\lnot V_i$, and similarly $L_j \in \{V_j, \lnot V_j\}$ and $L_k \in \{V_k, \lnot V_k\}$.
Included in this gadget are the following buttons:
\begin{itemize}
  \item Two clause buttons $\CButt[L]{x}$ and $\CButt[R]{x}$.
  The $L$ and $R$ subscripts denote \underline{L}eft and \underline{R}ight, respectively.
  \item Literal instance buttons $\VCButt[M]{i,x}$, $\VCButt[M]{j,x}$, and $\VCButt[M]{k,x}$.
  The parameter $\VCButt{,x}$ refers to the clause $C_x$ under consideration, and we omit it from \cref{fig:OR_JCDCGG} and the discussion below.  
  The $M$ subscripts denote \underline{M}iddle.
\end{itemize}
By convention different shapes and interior labels different colored buttons, whereas subscripts do not alter the color.
Thus, $\CButt[L]{x}$ and $\CButt[R]{x}$ have the same color, whereas $\VCButt[M]{i}$, $\VCButt[M]{j}$, and $\VCButt[M]{k}$ are all distinct.
Each of the middle buttons can be removed by a vertical cut as denoted by the downward arrows.
This original clause gadget has the following property: 
$\CButt[L]{x}$ can be removed if and only if at least one of $\VCButt[M]{i}$, $\VCButt[M]{j}$, or $\VCButt[M]{k}$ is removed by vertical cut.

The bottom of \cref{fig:OR_JCDCGG} illustrates the new gadget for the same clause $C_x = L_i \lor L_j \lor L_k$.
If $n$ is the number of variables in the 3-SAT instance, then original gadget was contained in one row and $4n+12$ columns, including $n+1$ blank columns to clarify the presentation.
The new gadget uses $8n+5$ rows and $8n+5$ columns (see the bottom of Figure 2 in \cite{glsw-bsnpc-15}).
% Note: The first referee suggested a change that would make this $4n+3$ columns.
In this gadget we add two buttons to those discussed above:
\begin{itemize}
  \item Two additional clause buttons $\CButt[T]{x}$ and $\CButt[B]{x}$.
  The $T$ and $B$ subscripts denote \underline{T}op and \underline{B}ottom, respectively.
\end{itemize}

We now prove that the new gadget has the same property as the old gadget.
We consider four cases that cover all possibilities:
\begin{itemize}
  \item If $\VCButt[M]{i}$ is cut vertically, then cut $\CButt[L]{x}$ and $\CButt[T]{x}$ (and then $\CButt[B]{x}$ and $\CButt[R]{x}$);
  \item If $\VCButt[M]{j}$ is cut vertically, then cut $\CButt[L]{x}$ and $\CButt[B]{x}$ (and then $\CButt[T]{x}$ and $\CButt[R]{x}$);
  \item If $\VCButt[M]{k}$ is cut vertically, then cut $\CButt[L]{x}$ and $\CButt[R]{x}$ (and then $\CButt[T]{x}$ and $\CButt[B]{x}$);
  \item If none of these buttons is removed by a vertical cut, then $\CButt[L]{x}$ cannot be removed since the only other buttons of the same color are blocked.
\end{itemize}
Therefore, the new gadget is a perfect replacement for the previous clause gadget.

This substitution increases the frequency of each clause button from $2$ to $4$, and decreases the frequency of each instance literal button color to $3$.
Thus, the maximum frequency is reduced from $F=7$ to $F=4$ by Remark 6 in \cite{glsw-bsnpc-15}.  \qed
\end{proof}

\begin{figure}
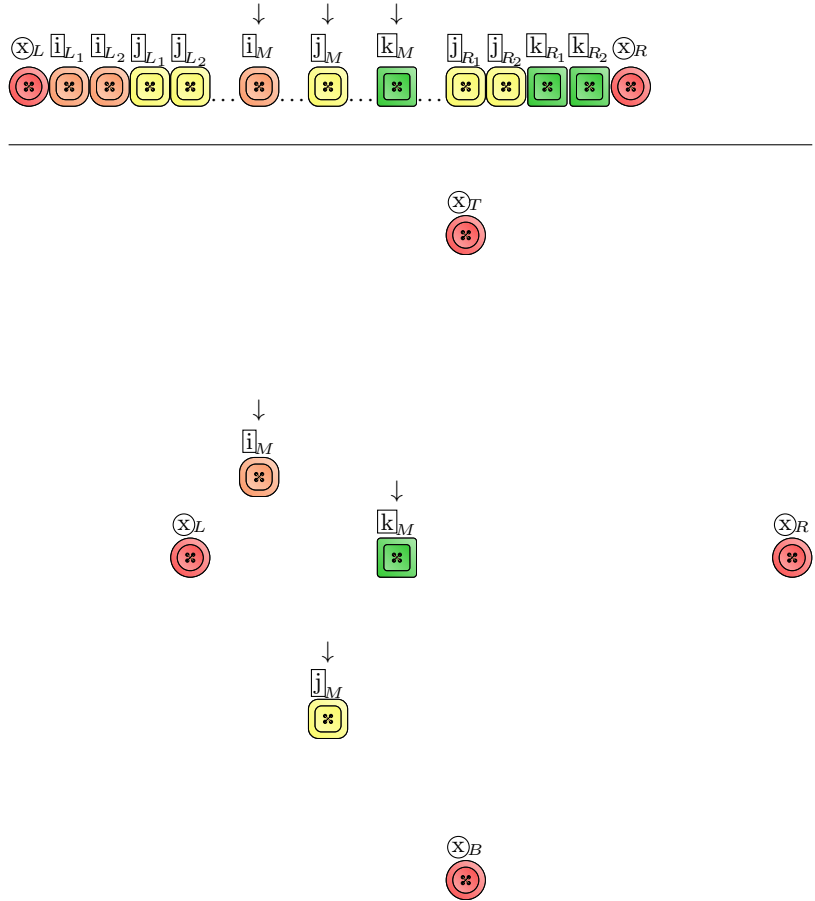

\begin{tabular}{*{21}{@{}c@{}}}
& & & & & & $\downarrow$ & & $\downarrow$ & & $\downarrow$\ & & & & & & & & & & \\[0.2em]
$\CButt[L]{x}$ & $\VCButt[L_1]{i}$ & $\VCButt[L_2]{i}$ & $\VCButt[L_1]{j}$ & $\VCButt[L_2]{j}$ && $\VCButt[M]{i}$ && $\VCButt[M]{j}$ && $\VCButt[M]{k}$ && $\VCButt[R_{\!1}]{j}$ & $\VCButt[R_{\!2}]{j}$ & $\VCButt[R_{\!1}]{k}$ & $\VCButt[R_{\!2}]{k}$ & $\CButt[R]{x}$ \\
\BT1 & \BT2 & \BT2 & \BT3 & \BT3 & $\cdots$ & \BT2 & $\cdots$ & \BT3 & $\cdots$ & \BT4 & $\cdots$ & \BT3 & \BT3 & \BT4 & \BT4 & \BT1 & \BT0 & \BT0 & \BT0 & \BT0 \\ \\
\hline \\ 
& & & & & & & & & & & & \raisebox{0.3em}{$\CButt[T]{x}$} & & & & & & & & \BT0 \\[-\dp\strutbox]
& & & & & & & & & & & & \BT1 & & & & & & & & \BT0 \\[-\dp\strutbox]
& & & & & & & & & & & & & & & & & & & & \BT0 \\[-\dp\strutbox]
& & & & & & & & & & & & & & & & & & & & \BT0 \\[-\dp\strutbox]
& & & & & & & & & & & & & & & & & & & & \BT0 \\[-\dp\strutbox]
& & & & & & $\downarrow$ & & & & & & & & & & & & & & \BT0 \\[-\dp\strutbox]
& & & & & & \raisebox{0.3em}{$\VCButt[M]{i}$} & & & & & & & & & & & & & & \BT0 \\[-\dp\strutbox]
& & & & & & \BT2 & & & & $\downarrow$ & & & & & & & & & & \BT0 \\[-\dp\strutbox]
& & & & \raisebox{0.3em}{$\CButt[L]{x}$} & & & & & & \raisebox{0.3em}{$\VCButt[M]{k}$} & & & & & & & & & \BT0 & \raisebox{0.3em}{$\CButt[R]{x}$} \\[-\dp\strutbox]
& & & & \BT1 & & & & & & \BT4 & & & & & & & & & & \BT1 \\[-\dp\strutbox]
& & & & & & & & & & & & & & & & & & & & \BT0 \\[-\dp\strutbox]
& & & & & & & & $\downarrow$ & & & & & & & & & & & & \BT0 \\[-\dp\strutbox]
& & & & & & & & \raisebox{0.4em}{$\VCButt[M]{j}$} & & & & & & & & & & & & \BT0 \\[-\dp\strutbox]
& & & & & & & & \BT3 & & & & & & & & & & & & \BT0 \\[-\dp\strutbox]
& & & & & & & & & & & & & & & & & & & & \BT0 \\[-\dp\strutbox]
& & & & & & & & & & & & & & & & & & & & \BT0 \\[-\dp\strutbox]
& & & & & & & & & & & & \raisebox{0.3em}{$\CButt[B]{x}$} & & & & & & & & \BT0 \\[-\dp\strutbox]
& & & & & & & & & & & & \BT1 & & & & & & & & \BT0 \\[-\dp\strutbox]
\end{tabular}
\caption{The original (top) and new (bottom) gadget for clause $C_x = L_1 \lor L_2 \lor L_3$.
In both cases the leftmost circular button $\CButt[L]{x}$ can be removed if and only if at least one of the three middle buttons $\VCButt[M]{i}$, $\VCButt[M]{j}$, $\VCButt[M]{k}$ is removed by a vertical cut, as denoted by downward arrows.
Labels $\VCButt{,x}$ are omitted from all of the square buttons to save space, and the $\cdots$ denote empty squares between the three middle buttons.
}
\label{fig:OR_JCDCGG}
\end{figure}

\ifabstract
} %ends the previous \later{ if this is not in full mode.
\fi

%\subsection{Cut Directions} \label{sec:1p_directions}
%\input{1p_directions}

\subsection{Cut Sizes} \label{sec:1p_lengths}
% !TEX root = arxiv-bands.tex

\cref{sec:1p_colors} provided a polytime algorithm for 1-color.
However, if we reduce the cut size set from $\{2,3,4\}$ to $\{3,4\}$ then it is NP-complete.
We also strengthen \cref{thm:2colorsSAT} by showing that 2-color puzzles are hard with cut size set $\{2\}$.

\subsubsection{Hardness for Cut Sizes $\{3,4\}$ and 1-Color}

\begin{theorem}\label{thm:1colorPlanarSAT}
\ifabstract
$\Board{n \times n}{1}{\infty}{\dirsFour}{\{3,4\}}{B}$ is NP-complete.
\fi
\iffull
Buttons $\&$ Scissors for 1-color with cut sizes $\{3,4\}$, i.e. $\Board{n \times n}{1}{\infty}{\dirsFour}{\{3,4\}}{B}$, is NP-complete.
\fi
\end{theorem}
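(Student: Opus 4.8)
The plan is to establish membership in NP and then prove hardness by reduction from a planar variant of 3SAT, exploiting the fact that the board is itself a planar grid. Membership is immediate: since every cut removes at least three buttons, any solution has length at most (number of buttons)$/3$, and the feasibility of each cut is checkable in polynomial time, so a solution is a polynomial-size certificate. For intuition about why $\{3,4\}$ is the right regime, note that with a single color a feasible cut is simply a contiguous run of $3$ or $4$ buttons along an allowed line, and that by \cref{rem:2or3cuts} the size set $\{3,4\}$ is strictly weaker than $\{3,4,5\}=\{3,4,\ldots\}$. Dropping size-$2$ cuts destroys the matching structure that made the algorithm of \cref{sec:1p_colors} work and leaves a rigid packing problem, which is what I will encode SAT into.

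For hardness I would reduce from \textsc{Planar 3SAT}, whose variable--clause incidence graph admits a planar rectilinear layout with variables on a line and clause legs routed above and below. I plan to realize each wire of this layout as a straight run of buttons whose boolean value is encoded by a \emph{phase}: because a run of length $n$ can be partitioned into blocks of size $3$ and $4$ in more than one way whenever both $n$ and $n-1$ are representable as sums of $3$'s and $4$'s (true for all $n \ge 7$), the run can be cut so as to either consume or leave available a distinguished \emph{interface} button at its end. Leaving the interface button available corresponds to, say, \emph{true}: a neighbouring gadget may then complete its own partition by taking the shared button via a crossing cut, whereas consuming it internally corresponds to \emph{false}. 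A variable gadget is a junction that forces all of its outgoing wires into a common phase, and turns, splits, and wire crossings are routed using the diagonal directions available in $d=\dirsFour$.

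The clause gadget is an OR: three incoming wires meet at a small configuration containing a leftover button (or short run) that admits a feasible size-$3$/size-$4$ cut exactly when at least one incoming wire delivers its interface button in the \emph{true} phase; if all three literals are false, that button has no legal cut and the board is unsolvable. Assembling these gadgets on an $N \times N$ grid with $N$ polynomial in the formula size yields a board solvable iff the formula is satisfiable, giving NP-hardness and hence, with membership, NP-completeness.

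The step I expect to be the main obstacle is \textbf{correctness of the gadget composition}, in two respects. First, I must rule out unintended ``cheat'' cuts --- partitions of the interface and junction regions that correspond to no consistent truth assignment yet still clear the board --- which forces the run lengths and spacings to be chosen so that every legal partition realizes one of the intended phases. Second, and more delicate, is the \textbf{cut-ordering (blocking) problem}: a button shared between a horizontal and a vertical run induces a dependency on the order in which cuts are applied, and cyclic dependencies (exactly the phenomenon handled by the graph $G_c$ in \cref{sec:1p_colors}) can deadlock an otherwise valid partition. I would therefore design every gadget so that whenever the intended partition exists the induced dependency graph is acyclic, and then verify both directions of the reduction by exhibiting an explicit feasible cut order for satisfiable instances and arguing that the clause leftover is uncuttable for unsatisfiable ones.
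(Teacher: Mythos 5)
Your overall frame matches the paper's: membership in NP is argued the same way, and the hardness reduction is from \textsc{Planar 3SAT} with a rectilinear embedding realized by wire, bend, split, and clause gadgets; your clause gadget (a leftover button that can only be absorbed by extending a size-$3$ cut to size $4$ when some literal is true) is essentially identical to the paper's. The genuine gap is in your wire mechanism. You encode the truth value in the \emph{phase} of a partition of a run into blocks of size $3$ and $4$, i.e.\ in whether a distinguished interface button at the end of the run is consumed internally or left for a crossing cut. But the set of integers expressible as sums of $3$'s and $4$'s is all of $\{0,3,4,6,7,8,9,\dots\}$ --- everything except $1$, $2$, and $5$ --- so for any segment containing six or more private buttons, \emph{every} combination of ``take/leave'' choices at its two endpoints yields a representable total. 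Counting alone therefore imposes no correlation between the two ends of a segment: a straight run can consume its left interface button and leave its right one, consume both, or leave both, independently. There is no choice of segment length that forces ``exactly one endpoint taken'' (you would need $m$ and $m+2$ both non-representable with $m+1$ representable, and no such $m$ exists), so the signal simply does not propagate along your wires, and a solver can locally ``reset'' the phase anywhere, decoupling the variable from the clause.

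The paper avoids this by encoding the truth value in the \emph{direction} of the cuts rather than in an offset: each wire is a staircase of buttons placed so that it can be cleared either entirely by horizontal size-$3$ cuts or entirely by diagonal size-$3$ cuts, with collinearities arranged so that no mixed pattern is feasible (any deviation strands a button that is aligned with no other usable pair). The variable gadget forces a global choice between the two patterns, bend/split/not gadgets convert between directions while preserving or negating the choice, and the clause's isolated button is collinear only with the final horizontal cuts of its three wires. If you want to salvage your approach you would need to supplement the arithmetic phase constraint with geometric constraints of exactly this kind, at which point you have essentially reconstructed the paper's direction-based wires. Your identification of the cut-ordering (blocking) issue is a real concern, but it is secondary; the paper handles it implicitly by exhibiting an explicit feasible order in the satisfiable case.
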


\begin{proof}

\iffull
\begin{figure}[b]
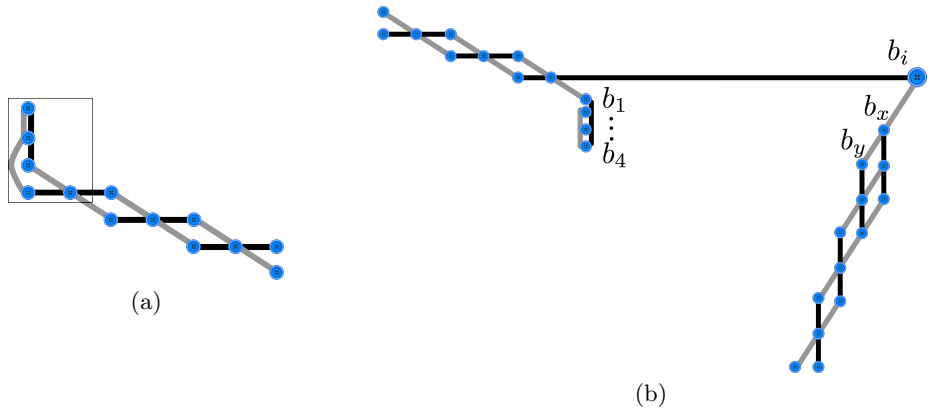

       \comic{.3\textwidth}{cuts-geq-3-variable-2-bw}{(a)} %{images/cuts-geq-3-variable-2-bw}{(a)}
       \hfill
       \comic{.6\textwidth}{cuts-geq-3-bend-bw}{(b)} %{images/cuts-geq-3-bend-bw}{(b)}
         \caption{\label{fig:var-1p}\small (a) The only two cut possibilities in the variable gadget (shown in black and gray), corresponding to truth assignments of ``true'' and ``false'', respectively. (b) The bend gadget for the 1-color case.
         }
\end{figure}
\fi

\ifabstract
\begin{figure}[b]
       \comic{.25\textwidth}{cuts-geq-3-variable-2-bw}{(a)} %{images/cuts-geq-3-variable-2-bw}{(a)}
       \hfill
       \comic{.53\textwidth}{cuts-geq-3-bend-bw} %{images/cuts-geq-3-bend-bw}
              {\vspace*{-2.5cm}\hspace*{1cm}(b)}\vspace*{-0.5cm}\\
       \centering\hspace*{-1cm}
       \comic{.58\textwidth}{cuts-geq-3-clause-h-s}{(c)} %{images/cuts-geq-3-clause-h-s}{(c)}
       \caption{\label{fig:var-1p}\small (a) The only two cut possibilities in the variable gadget (shown in black and gray), corresponding to truth assignments of ``true'' and ``false'', respectively. (b) The bend gadget for the 1-color case. (c) The clause gadget for the 1-color case.
       }
\end{figure}
\fi

We show $\Board{n \times n}{1}{\infty}{\dirsFour}{\{3,4\}}{B}$ to be NP-hard by a reduction from PLANAR 3-SAT, which was shown to be NP-complete by Lichtenstein~\cite{l-pftu-82}.

An instance $F$ of the PLANAR 3-SAT problem is a Boolean formula in 3-CNF consisting of a set \iffull$\mathcal{C} = \{C_1, C_2, \dots, C_m\}$\fi\ifabstract$\mathcal{C}$\fi\ of $m$ clauses over $n$ variables \iffull$\mathcal{V} = \{x_1, x_2, \dots, x_n\}$\fi\ifabstract$\mathcal{V}$\fi. 
\iffull
Clauses in $F$ contain variables and negated variables, denoted as \emph{literals}.
\fi
The variable-clause incidence graph $G=(\mathcal{C}\cup\mathcal{V}, E)$ is planar, \iffull where %an edge 
$\{C_i, x_j\} \in E \Leftrightarrow x_j$ or $\neg x_j$ is in $C_i$, and with all variables connected in a cycle.\fi %\irina{and also variables are connected in a cycle}. 
\ifabstract and all variables are connected in a cycle.\fi\ 
\iffull
It is sufficient to consider formulae where $G$ has a rectilinear embedding, see Knuth and Raghunathan~\cite{kr-pcr-92}.
\fi
The PLANAR 3-SAT problem is to decide whether there exists a truth assignment to the variables such that at least one literal per clause is true.

We turn the planar embedding of $G$ into a Buttons \& Scissors board, i.e., %with buttons of a single color, that is, 
we present % the 
variables, clauses and edges by single-color buttons %on a board 
that need to be cut.

\ifabstract
  We provide more detailed descriptions of each gadget in Appendix B.
\later{
  \subsection*{Description of Gadgets for Proof of \cref{thm:1colorPlanarSAT}}
  \begin{itemize}
    \item \textbf{Variable gadget} (\cref{fig:var-1p}(a)):
    The buttons to the right are positioned such that they can only be cut with either horizontal cuts of size 3 (black) or diagonal cuts of size 3 (gray). %, see \cref{fig:var-1p}(b). 
  To obtain a feasible cutting pattern for the rest, this leaves only two cut possibilities for the four vertically aligned buttons of the variable gadget: cutting the topmost three buttons, in which case further cuts in the wire are enforced to be horizontal, or cutting the two topmost and the bottom button (which is possible by executing the first diagonal cut first), in which case further cuts in the wire are enforced to be diagonal. These exactly two feasible solutions of the variable gadget correspond to a truth setting of  ``true'' (black) and ``false'' (gray) of the variable.

    \item \textbf{Bend gadget} (\cref{fig:var-1p}(b)):
      \iffull Note that the three lower of the four vertically aligned buttons, , $b_1,\ldots,b_4$,, have to be removed with the same vertical cut, as none of them is aligned with any other button pair on the board.
      \else Note that the three lower of the four vertically aligned buttons, $b_1,\ldots,b_4$, are not aligned with any other button pair on $B$, and so they must belong to the same vertical cut. 
      \fi
      
      The isolated button to the right (shown enlarged), $b_i$, has to be cut with each feasible cut pattern. 
  Entering the bend gadget with black, horizontal cuts, the last horizontal row features only two buttons, and so $b_i$ needs to be added to this cut to make it feasible. 
  Buttons $b_1,\ldots, b_4$ are all cut together in this case. 
  In the lower wire, $b_x$ and $b_y$ cannot be cut together (by cut sizes), enforcing further cuts to be vertical (again black).

	  \iffull
	  Entering the bend gadget with the gray, diagonal pattern, the last diagonal cut needs to use one of the four vertically aligned buttons, as otherwise the cut size would be only two. The remaining three of these buttons are deleted with one cut. This still leaves the isolated button to be cut, and of the buttons in the lower wire only the two buttons on the upper diagonal are aligned with it, enforcing further cuts to be diagonal. 
      \else A similar argument is used for the gray cut pattern.
      \fi

    \item \textbf{Split gadget} (\cref{fig:split-1p}(b)):
  The last diagonal cut (gray) and the last horizontal cut (black) of the input wire from the left only feature two buttons in this wire, i.e., if we want to make a feasible cut, we need to pick up another button, a single button that starts the two output wires. If we enter with the black cut pattern, the upper output wire's single button is cut, enforcing black, horizontal cuts in the rest. In the lower output wire, the single button still needs to be cut. Except for the two buttons of the last diagonal of the input wire, it is only aligned with the other uppermost buttons of this output wire, thus, the horizontal cuts need to be made, enforcing black, horizontal cuts in the rest of the wire. An analogous argument is used for the gray cut pattern.

  \item \textbf{Not gadget} (\cref{fig:split-1p}(a)):
  The last diagonal cut (gray) and the last horizontal cut (black) of the input wire from the left only feature two buttons in this wire. 
  Moreover, there is again a single isolated button. If the gadget is entered with the black, vertical cut pattern, the isolated button cannot be cut with any cut from the input wire. The only remaining possibility is to cut the isolated button with the two leftmost diagonal buttons of the output wire, enforcing diagonal cuts in the rest of the wire. If the gadget is entered with the gray, diagonal cut pattern, the isolated button needs to be cut with the last diagonal wire cut (as otherwise its size of two would render it infeasible). In the output wire the first diagonal cut would be to short, enforcing horizontal cuts in the rest of the wire.
  \iffalse
  Three additional buttons are aligned with the two last vertical buttons (and not aligned with any other button pair on the board). 
  Moreover, there is again a single isolated button. If the gadget is entered with the black, vertical cut pattern, the three additional buttons can be added to the last wire cut (of size 5). 
  \aaronw{Does this cut of size 5 need to be addressed?}
  But the isolated button still needs to be cut, and the only remaining possibility is to cut it with the two leftmost diagonal buttons of the output wire, enforcing diagonal cuts in the rest of the wire. If the gadget is entered with the gray, diagonal cut pattern, the isolated button needs to be cut with the last diagonal wire cut (as otherwise its size of two would render it infeasible), and the three additional buttons are cut together. In the output wire the first diagonal cut would be to short, enforcing horizontal cuts in the rest of the wire.
  \fi
  %\fi

  \item \textbf{Clause gadget} (\cref{fig:var-1p}(c)):
  Wires from literals approach the gadget along three diagonals. The clause has one isolated button (shown enlarged), and two additional buttons for each wire that build a row of four aligned buttons with the last two buttons of the wire. When a gray cut direction is used, the two lower plus the topmost of this row of four combine for a feasible cut. If a black cut direction is used, the three lower buttons of this row allow a feasible cut. But this still leaves the isolated button to cut: Only if in at least one of the input wires the black cuts, corresponding to a truth setting of ``true'' for the literal, are used, this button can be added to the last black cut of the wire (extending its size from 3 to 4). If all literals are set to ``false'', i.e., along all three input wires the gray cut pattern is used, cutting the isolated button with two others along one of the horizontal/vertical axis, leaves at least one isolated button, rendering the complete board infeasible, as we are not left with an empty board. 
  \end{itemize}
} %close the \later{ from before
\else
  We continue by describing each of the gadgets necessary.
\fi

\ifabstract
\iffalse
\begin{figure}[t]
\hspace*{.1\textwidth}
    \comic{.3\textwidth}{images/cuts-geq-3-variable-1-bw}{(a)}
    \hfill
       \comic{.3\textwidth}{images/cuts-geq-3-variable-2-bw}{(b)}
       \hspace*{.1\textwidth}
         \caption{\label{fig:var-1p}\small (a) The variable gadget for the 1-color case inside the blue rectangle with connected wire corridor. (b) The only two cut possibilities are shown in black and gray, corresponding to truth assignments of ``true'' and ``false'', respectively.
         }
\end{figure}
\fi

The {\bf variable gadget}, shown in \cref{fig:var-1p}(a), enables us to associate horizontal and diagonal cut patterns with ``true'' and ``false'' values, respectively.

% Wouldn't work here, moved down to bottom of file - Jody
% \later{
%   \subsection*{Description of variable gadget (\cref{fig:var-1p}(a))}
%   The buttons to the right are positioned such that they can only be cut with either horizontal cuts of size 3 (black) or diagonal cuts of size 3 (gray). %, see \cref{fig:var-1p}(b). 

%   To obtain a feasible cutting pattern for the rest, this leaves only two cut possibilities for the four vertically aligned buttons of the variable gadget: cutting the topmost three buttons, in which case further cuts in the wire are enforced to be horizontal, or cutting the two topmost and the bottom button (which is possible by executing the first diagonal cut first), in which case further cuts in the wire are enforced to be diagonal. These exactly two feasible solutions of the variable gadget correspond to a truth setting of  ``true'' (black) and ``false'' (gray) of the variable.
% }

% Commented out in favor of the above for now - Jody
% \iffull
\else
The {\bf variable gadget} is shown in \cref{fig:var-1p}(a). The buttons to the right are positioned such that they can only be cut with either horizontal cuts of size 3 (black) or diagonal cuts of size 3 (gray). , see \cref{fig:var-1p}(b). 
 To obtain a feasible cutting pattern for the rest, this leaves only two cut possibilities for the four vertically aligned buttons of the variable gadget: cutting the topmost three buttons, in which case further cuts in the wire are enforced to be horizontal, or cutting the two topmost and the bottom button (which is possible by executing the first diagonal cut first), in which case further cuts in the wire are enforced to be diagonal. These exactly two feasible solutions of the variable gadget correspond to a truth setting of  ``true'' (black) and ``false'' (gray) of the variable.
\fi

\iffalse
\begin{figure}%[t]
\center
    \comicII{.6\textwidth}{images/cuts-geq-3-bend-bw}
         \caption{\label{fig:bend-1p}\small The bend gadget for the 1-color case.
         }
\end{figure}
\fi

\ifabstract
The {\bf bend gadget}, shown in \cref{fig:var-1p}(b), enables us to bend a wire to match the bends in $G$'s embedding while enforcing that the same values are propagated through the bent wire.

% Wouldn't work here, moved down to bottom of file - Jody
% \later{
%   \subsection*{Description of bend gadget}
%   \iffull Note that the three lower of the four vertically aligned buttons, , $b_1,\ldots,b_4$,, have to be removed with the same vertical cut, as none of them is aligned with any other button pair on the board.\fi
%   \ifabstract Note that the three lower of the four vertically aligned buttons, $b_1,\ldots,b_4$, are not aligned with any other button pair on $B$, and so they must belong to the same vertical cut. \fi
%   The isolated button to the right (shown enlarged), $b_i$, has to be cut with each feasible cut pattern. 
%   Entering the bend gadget with black, horizontal cuts, the last horizontal row features only two buttons, and so $b_i$ needs to be added to this cut to make it feasible. 
%   Buttons $b_1,\ldots, b_4$ are all cut together in this case. 
%   In the lower wire, $b_x$ and $b_y$ cannot be cut together (by cut sizes), enforcing further cuts to be vertical (again black).
%   \iffull
%   Entering the bend gadget with the gray, diagonal pattern, the last diagonal cut needs to use one of the four vertically aligned buttons, as otherwise the cut size would be only two. The remaining three of these buttons are deleted with one cut. This still leaves the isolated button to be cut, and of the buttons in the lower wire only the two buttons on the upper diagonal are aligned with it, enforcing further cuts to be diagonal. 
%   \fi
%   \ifabstract
%   A similar argument is used for the gray cut pattern.
%   \fi
% }

\else
The {\bf bend gadget}, shown in \cref{fig:var-1p}(b), enables us to bend a wire to match the bends in $G$'s embedding while enforcing that the same %assignment 
 values are propagated through the bent wire. 
 \iffull Note that the three lower of the four vertically aligned buttons, , $b_1,\ldots,b_4$,, have to be removed with the same vertical cut, as none of them is aligned with any other button pair on the board.\fi
% \ifabstract Note that the three lower of the four vertically aligned buttons, $b_1,\ldots,b_4$, are not aligned with any other button pair on $B$, and so they must belong to the same vertical cut. \fi
% The isolated button to the right (shown enlarged), $b_i$, has to be cut with each feasible cut pattern. 
% Entering the bend gadget with black, horizontal cuts, the last horizontal row features only two buttons, and so $b_i$ needs to be added to this cut to make it feasible. 
% Buttons $b_1,\ldots, b_4$ are all cut together in this case. 
% In the lower wire, $b_x$ and $b_y$ cannot be cut together (by cut sizes), enforcing further cuts to be vertical (again black).
% \iffull
% Entering the bend gadget with the gray, diagonal pattern, the last diagonal cut needs to use one of the four vertically aligned buttons, as otherwise the cut size would be only two. The remaining three of these buttons are deleted with one cut. This still leaves the isolated button to be cut, and of the buttons in the lower wire only the two buttons on the upper diagonal are aligned with it, enforcing further cuts to be diagonal. 
% \fi
% \ifabstract
% A similar argument is used for the gray cut pattern.
% \fi
\fi

\begin{figure}
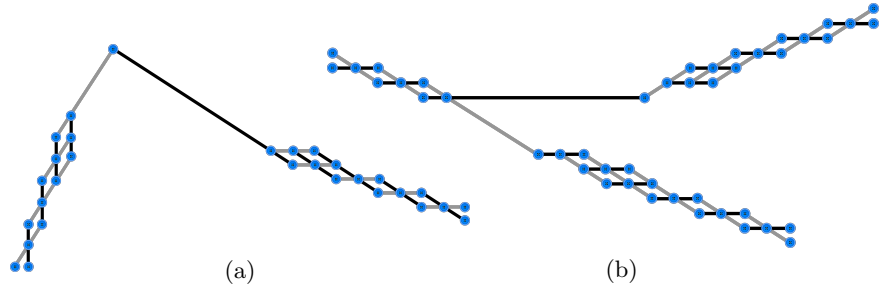

\center
	\hfill
    \comic{.6\textwidth}{cuts-geq-3-split-bw}{\hspace*{.5cm}(b)}\\ %{images/cuts-geq-3-split-bw}{\hspace*{.5cm}(b)}\\
    \vspace*{-3.2cm}
    \hspace*{-.4\textwidth}
    \comic{.5\textwidth}{cuts-geq-3-not-max4}{\vspace*{-.3cm}(a)} %{images/cuts-geq-3-not-max4}{\vspace*{-.3cm}(a)}
    %\hspace*{-.3\textwidth}
         \caption{\label{fig:split-1p}\small (a) The not gadget, negating the input truth assignment, for the 1-color case. (b) The split gadget for the 1-color case.
         }
\end{figure}

%%%%%%%%%%%%%%%%%%%%%%%
% SHORT version SPLIT
%%%%%%%%%%%%%%%%%%%%%%%
\ifabstract
The {\bf split gadget}, shown in \cref{fig:split-1p}(b), enables us to increase the number of wires leaving a variable and propagating its truth assignment.
%Later block down below
\fi

% Commented out in favor of the above - Jody
% \ifabstract
% The {\bf split gadget}, shown in \cref{fig:split-1p}(b), enables us to increase the number of wires leaving a variable and propagating its truth assignment. 
% A detailed version of the enforced propagation (input gray/black enforces output gray/black) is omitted for this gadget and the following gadgets due to space constraints.%; the complete proof is given the full version of this paper. 
% \fi

%%%%%%%%%%%%%%%%%%%%%%%
% LONG version SPLIT
%%%%%%%%%%%%%%%%%%%%%%%
\iffull
The {\bf split gadget}, shown in \cref{fig:split-1p}(b), enables us to increase the number of wires leaving a variable and propagating its truth assignment.
The last diagonal cut (gray) and the last horizontal cut (black) of the input wire from the left only feature two buttons in this wire, i.e., if we want to make a feasible cut, we need to pick up another button, a single button that starts the two output wires. If we enter with the black cut pattern, the upper output wire's single button is cut, enforcing black, horizontal cuts in the rest. In the lower output wire, the single button still needs to be cut. Except for the two buttons of the last diagonal of the input wire, it is only aligned with the other uppermost buttons of this output wire, thus, the horizontal cuts need to be made, enforcing black, horizontal cuts in the rest of the wire. An analogous argument is used for the gray cut pattern. 
\fi

\iffalse
\begin{figure}
\center
    \comicII{.7\textwidth}{images/cuts-geq-3-not-max4}
         \caption{\label{fig:not-1p}\small The not gadget, negating the input truth assignment, for the 1-color case.
         }
\end{figure}
\fi

%%%%%%%%%%%%%%%%%%%%%%%
%SHORT version NOT
%%%%%%%%%%%%%%%%%%%%%%%
\ifabstract
The {\bf not gadget}, shown in \cref{fig:split-1p}(a), enables us to reverse the truth assignment in a variable wire.
% That is, the cut pattern switches from vertical/horizontal to diagonal and vice versa.
\fi

% Commented out in favor of the above - Jody
% \ifabstract
% The {\bf not gadget}, shown in \cref{fig:split-1p}(a), enables us to reverse the truth assignment in a variable wire.
% That is, the cut pattern switches from vertical/horizontal to diagonal and vice versa.
% \fi

%%%%%%%%%%%%%%%%%%%%%%%
% LONG version NOT
%%%%%%%%%%%%%%%%%%%%%%%
\iffull
The {\bf not gadget}, shown in \cref{fig:split-1p}(a), enables us to reverse the truth assignment in a variable wire. (That is, the cut pattern switches from vertical/horizontal to diagonal and vice versa.)
Again, the last diagonal cut (gray) and the last horizontal cut (black) of the input wire from the left only feature two buttons in this wire. 
Moreover, there is again a single isolated button. If the gadget is entered with the black, vertical cut pattern, the isolated button cannot be cut with any cut from the input wire. The only remaining possibility is to cut the isolated button with the two leftmost diagonal buttons of the output wire, enforcing diagonal cuts in the rest of the wire. If the gadget is entered with the gray, diagonal cut pattern, the isolated button needs to be cut with the last diagonal wire cut (as otherwise its size of two would render it infeasible). In the output wire the first diagonal cut would be to short, enforcing horizontal cuts in the rest of the wire.
\iffalse
Three additional buttons are aligned with the two last vertical buttons (and not aligned with any other button pair on the board). 
Moreover, there is again a single isolated button. If the gadget is entered with the black, vertical cut pattern, the three additional buttons can be added to the last wire cut (of size 5). 
\aaronw{Does this cut of size 5 need to be addressed?}
But the isolated button still needs to be cut, and the only remaining possibility is to cut it with the two leftmost diagonal buttons of the output wire, enforcing diagonal cuts in the rest of the wire. If the gadget is entered with the gray, diagonal cut pattern, the isolated button needs to be cut with the last diagonal wire cut (as otherwise its size of two would render it infeasible), and the three additional buttons are cut together. In the output wire the first diagonal cut would be to short, enforcing horizontal cuts in the rest of the wire.
\fi
\fi

\iffull
\begin{figure}[t]
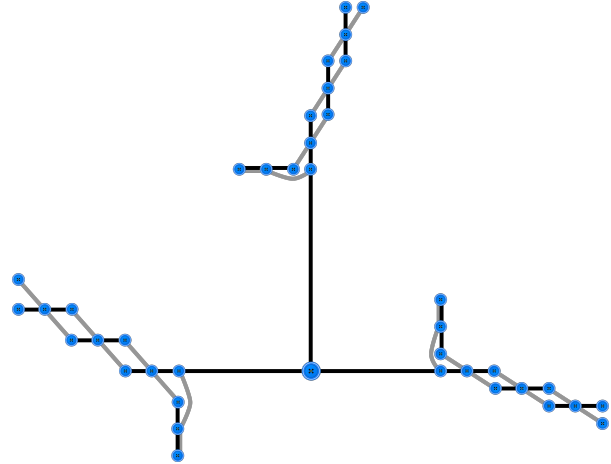

\center
    \comicII{.65\textwidth}{cuts-geq-3-clause-h} %{images/cuts-geq-3-clause-h}
         \caption{\label{fig:clause-1p}\small The clause gadget for the 1-color case.
         }
\end{figure}
\fi

%%%%%%%%%%%%%%%%%%%%%%%
% SHORT version CLAUSE
%%%%%%%%%%%%%%%%%%%%%%%
\ifabstract
The {\bf clause gadget} is shown in \cref{fig:var-1p}(c).
This gadget simulates a conjunction of literals.
\fi 
% Commented out in favor of the above - Jody
% \ifabstract
% The {\bf clause gadget} is shown in \cref{fig:var-1p}(c). Wires from literals approach the gadget along three diagonals. The clause has one isolated button (shown enlarged). Only if in at least one of the input wires the black cuts are used, this button can be cut. If all literals are set to ``false'' (gray cut pattern), cutting the isolated button leaves at least one isolated button, rendering the complete board infeasible, as we are not left with an empty board. 
% \fi

%%%%%%%%%%%%%%%%%%%%%%%
% LONG version CLAUSE
%%%%%%%%%%%%%%%%%%%%%%%
\iffull
The {\bf clause gadget} is shown in \cref{fig:clause-1p}. Wires from literals approach the gadget along three diagonals. The clause has one isolated button (shown enlarged), and two additional buttons for each wire that build a row of four aligned buttons with the last two buttons of the wire. When a gray cut direction is used, the two lower plus the topmost of this row of four combine for a feasible cut. If a black cut direction is used, the three lower buttons of this row allow a feasible cut. But this still leaves the isolated button to cut: Only if in at least one of the input wires the black cuts, corresponding to a truth setting of ``true'' for the literal, are used, this button can be added to the last black cut of the wire (extending its size from 3 to 4). If all literals are set to ``false'', i.e., along all three input wires the gray cut pattern is used, cutting the isolated button with two others along one of the horizontal/vertical axis, leaves at least one isolated button, rendering the complete board infeasible, as we are not left with an empty board. 
\fi

Thus, the resulting Buttons \& Scissors board has a solution if and only if at least one of the literals per clause is set to true, that is, if and only if the original PLANAR 3-SAT formula $F$ is satisfiable. It is easy to see that this reduction is possible in polynomial time. In addition, given a Buttons \& Scissors board and a sequence of cuts, it is easy to check whether those constitute a solution, i.e., whether all cuts are feasible and result in a board with only empty grid entries. Hence, $\Board{n \times n}{1}{\infty}{\dirsFour}{\{3,4\}}{B}$ is in the class NP. Consequently, $\Board{n \times n}{1}{\infty}{\dirsFour}{\{3,4\}}{B}$ is NP-complete.  \qed
\end{proof}

\subsubsection{Hardness for Cut Size $\{2\}$ and 2-Colors}

%We prove hardness of this problem via the intermediate problem on directed graphs given below.
%Our reduction uses only $2$ cut directions and cuts of size $2$, but also goes through for larger values of $d$ and $\ell$.
An intermediate problem is~below.

\noindent{\bf Decision Problem:} Graph Decycling on $(G, S)$. \\
{\bf Input:} Directed graph $G = (V, E)$ and a set of disjoint pairs of vertices $S \subseteq V \times V$.\\
{\bf Output:} True, if we can make $G$ acyclic by removing either $s$ or $s'$ from $G$ for every pair $(s, s') \in S$.  Otherwise, False.
%  \maarten {Is this problem known to be hard? It is some sort of restricted feedback vertex set. Not hard to prove if necessary.}
  
\begin {lemma}\label{lem:decycling-2-color}
  Graph Decycling reduces to Buttons \& Scissors with $2$ colors.
\end {lemma}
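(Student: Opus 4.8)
The plan is to encode each instance $(G,S)$ of Graph Decycling as a $2$-color board whose \emph{only} freedom, beyond the order in which cuts are applied, is a single binary choice per pair in $S$; the board will be solvable exactly when some legal set of deletions makes $G$ acyclic. The guiding principle is the observation already exploited in \cref{sec:1p_colors}: even after one has fixed which pairs of buttons are removed together, a board is solvable only if the induced \emph{blocking digraph} $G_c$ (an arc $c_1\to c_2$ whenever a button of cut $c_2$ lies on the segment of cut $c_1$, so $c_2$ must be applied first) is acyclic, since a cyclic blocking dependency admits no feasible ordering (cf.\ bottom of \cref{fig:remove-cycle}). I would build the board so that, on the surviving vertices, $G_c$ is an isomorphic copy of $G$.

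First I would build a \textbf{vertex gadget}: for each $v\in V$ a same-color pair of buttons $C_v$ placed so that the unique way to clear them is the single size-$2$ cut $C_v$. An \textbf{edge gadget} for an arc $(u,v)$ then places a button belonging to (or cleared by) $C_u$, of the \emph{opposite} color, on the segment of $C_v$, so $C_v$ is infeasible until $C_u$ fires; this realizes the corresponding arc in $G_c$ (the orientation is fixed once and for all so that directed cycles of $G$ become directed cycles of $G_c$). Because such a blocking chain alternates the two colors, arcs of a long or odd cycle cannot be wired directly; here I reuse the color-switching and fan-out behaviour of the \textbf{split gadget} of \cref{thm:2colorsSAT} (\cref{fig:2c_splitter}) as a \textbf{wire/relay}, routing the output of any $C_u$ to the segment of any $C_v$ regardless of geometry or color parity. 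Each relay is itself a forced, singly-blocked chain, so it contributes only a directed path to $G_c$ and never a new cycle.

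Next comes the \textbf{choice gadget} for a pair $(s,s')\in S$, analogous to the variable gadget of \cref{thm:2colorsSAT}: it offers one always-feasible (never-blocked) ``alternate'' cut that can be spent on exactly one of $C_s,C_{s'}$. Spending it on $s$ clears $s$'s buttons without waiting on any incoming block and, by \cref{rem:removeColor}, still removes every blocking button that $s$ supplies; both the incoming and the outgoing constraints of $s$ therefore vanish, exactly mirroring the deletion of vertex $s$ from $G$. A vertex in no pair is given no alternate, so its $C_v$ must participate in every dependency. A solver's moves are thus in bijection with a choice of one deletion per pair followed by an ordering of the surviving cuts, and correctness follows in both directions: from a decycling witness, set each choice gadget accordingly and apply the surviving cuts (with each relay chain interleaved just before the cut it feeds) in a topological order of $G_c$, so every cut is feasible when reached; conversely, any board solution records which vertex of each pair was removed via its alternate, and the realized order of the remaining $C_v$'s is a linear extension of $G_c$ on the kept vertices, forcing that digraph to be acyclic. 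The construction is plainly polynomial in $|V|+|E|+|S|$.

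The main obstacle I anticipate is not this global argument but \emph{local forcedness}: I must lay the vertex, edge, relay, and choice gadgets on a single grid so that (i) each gadget can be cleared only in its intended way, with no unintended collinear same-color button pairs creating shortcut cuts across gadgets; (ii) the relays repair color parity around arbitrarily long cycles without ever closing a spurious cycle in $G_c$; and (iii) the choice gadget truly admits exactly one alternate and genuinely detaches that vertex from \emph{both} its incoming and outgoing blocks. Verifying these gadget-by-gadget, together with a coordinate placement that keeps the board polynomially sized, is where the detailed work will lie.
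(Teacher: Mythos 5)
Your overall architecture matches the paper's: vertices become same-color button pairs, an arc $(u,v)$ becomes a button of $u$ placed between the two buttons of $v$ (so $v$'s cut must wait for $u$'s), the two-coloring of the blocking chain is repaired by subdividing edges, a forced-choice gadget realizes each pair in $S$, and solvability reduces to acyclicity of the blocking order on the surviving vertices. Two concrete points, though. First, your choice gadget does not work as literally stated: a single ``always-feasible alternate cut that can be spent on exactly one of $C_s,C_{s'}$'' cannot clear both buttons of the chosen vertex pair, and \cref{rem:removeColor} concerns deleting an entire color class, not detaching one vertex gadget, so it does not supply the missing step. The paper instead uses six extra buttons in two forced vertical triples (\cref{fig:2c_link}): one triple forces a horizontal cut removing the \emph{top} button of $u$ or of $v$, the other does the same for the bottoms, and correctness hinges on $u,v$ having been normalized to degree $2$ so that the mixed choice (top of $u$, bottom of $v$) strands buttons and is never viable. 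Second, essentially all of the substance of the paper's proof lives in the ``local forcedness'' you defer: the preprocessing that bounds in- and out-degree by $2$, keeps degree-$3$ vertices non-adjacent, makes $G$ bipartite, and puts all $S$-vertices in one color class is precisely what makes the three node gadgets (\cref{fig:2c_1-1}, \cref{fig:2c_2-1}, \cref{fig:2c_1-2}) placeable without spurious collinear same-color pairs. Your appeal to the split gadget of \cref{thm:2colorsSAT} as a relay is also a mismatch: that gadget propagates \emph{availability} of buttons for a matching, not a blocking \emph{order}; plain edge subdivision, which you also mention, is what is needed and what the paper uses.
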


\begin {proof}
  Consider an instance $(G, S)$ to graph decycling.
  First, we observe that we can assume that every vertex in $G$ has degree $2$ or $3$, and more specifically, in-degree $1$ or $2$, and out-degree $1$ or $2$.
  Indeed, we can safely remove any vertices with in- or out-degree $0$ without changing the outcome of the problem. Also, we can replace a node with out-degree $k$ by a binary tree of nodes with out-degree $2$. The same applies to nodes with in-degree $k$.
  
  Furthermore, we can assume that every vertex that appears in $S$ has degree~$2$.
  Indeed, we can replace any degree $3$ vertex by two vertices of degree $2$ and $3$, and use the degree $2$ vertex in $S$ without changing the outcome.
  Similarly, we can assume that no two vertices of degree $3$ are adjacent.
  Finally, we can assume that $G$ is bipartite, and furthermore, that all vertices that occur in $S$ are in the same half of $V$, since we can replace any edge by a path of two edges.
  
  \vierplaatjes [scale = .205] {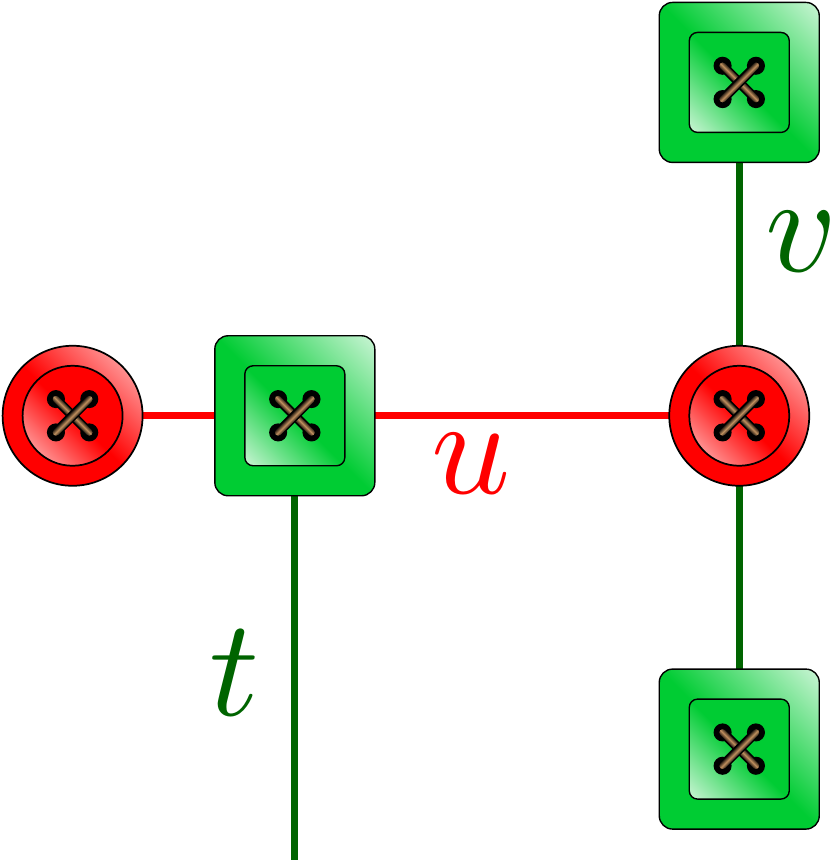} {2c_2-1} {2c_1-2} {2c_link} {Three types of nodes: (a) in-degree $1$ ($tu$) and out-degree $1$ ($uv$); (b) in-degree $2$ ($su$ and $tu$) and out-degree $1$ ($uv$); (c) in-degree $1$ ($tu$) and out-degree $2$ ($uv$ and $uw$).  In (d) the nodes $u$ and $v$ are linked in $S$ and we can choose to remove $u$ or $v$.}
  
  Now, we discuss how to model such a graph in a Buttons \& Scissors instance. Each node will correspond to a pair of buttons, either a red or a green pair according to a bipartition of $V$. These pairs of buttons will be mapped to locations in the plane on a common (horizontal for red, vertical for green) line, and such that any two buttons of the same color that are not a pair are not on a common (horizontal, vertical, or diagonal) line (unless otherwise specified). If two nodes of opposite colors $u$ and $v$ are connected by an edge in $G$, we say that $u$ \emph {blocks} $v$. In this case, one of the buttons of $u$ will be on the same line as the buttons of $v$, and more specifically, it will be between the two buttons of $v$. That is, $v$ can only be cut if $u$ is cut first. Buttons of opposite colors that are not connected by an edge will not be on any common lines either.
    
  As discussed above, we can assume we have only three possible types of nodes.
  \cref{fig:2c_1-1} illustrates the simplest case, of a node $u$ with one incoming edge $tu$ and one outgoing edge $uv$. Clearly, $t$ blocks $u$ and $u$ blocks $v$.
  To model a node with in-degree $2$, we need to put two buttons of different same-colored nodes on the same line (see \cref{fig:2c_2-1}). As long as the other endpoints of these two edges are not on a common line this is no problem: we never want to create a cut that removes one button of $s$ and one of $t$, since that would create an unsolvable instance.
  Finally, to model a node with out-degree $2$, we simply place a vertical edge on both ends of $u$ (see \cref{fig:2c_1-2}). Note that is it important here that we do not connect two nodes with out-degree $2$ to the same two nodes with in-degree $2$, since then we would have both pairs of endpoints on a common line; however, we assumed that nodes of degree $3$ are never adjacent so this does not occur.

  \ifabstract
  What is left is to create a mechanism to remove vertices from $G$ as dictated by $S$.  This mechanism is illustrated in \cref{fig:2c_link}, with full details given in Appendix B. \qed
  % Again, this seems like it needs to be '\later'ed lower on the page
  % \later{
  %   \subsection*{Conclusion of \cref{lem:decycling-2-color}}
  %   We embed $G$ is this fashion.  We assume all vertices in $S$ correspond to green button pairs. Suppose we have edges $su$, $uw$, $tv$, and $vx$, and $u$ and $v$ are linked in $S$. We place six additional green buttons on two vertical lines as illustrated in \cref{fig:2c_link}.
  %   Three buttons are placed such that one is on the same horizontal line as the top end of $u$ and the other is on the same horizontal line as the top end of $v$, and the third is not on a common line with any other node. The third button assures we must use this vertical cut, but we can choose whether to remove the top of $u$ or the top of $v$. Similarly, we create a construction that allows us to remove the bottom of $u$ or the bottom of $v$. Since $u$ and $v$ are ordinary degree $2$ nodes, we never want to remove the top of one and the bottom of the other, since this would yield an unsolvable situation.
  %   Note that, in case we are allows to make cuts of size $3$, we can also remove all six extra nodes completely. This does not affect the reduction, since it never pays to do so.
  % }
  \fi

  \iffull
  We embed $G$ in this fashion. What is left is to create a mechanism to remove vertices from $G$ as dictated by $S$. We assume all vertices in $S$ correspond to green button pairs. Suppose we have edges $su$, $uw$, $tv$, and $vx$, and $u$ and $v$ are linked in $S$. We place six additional green buttons on two vertical lines as illustrated in \cref{fig:2c_link}.
  Three buttons are placed such that one is on the same horizontal line as the top end of $u$ and the other is on the same horizontal line as the top end of $v$, and the third is not on a common line with any other node. The third button assures we must use this vertical cut, but we can choose whether to remove the top of $u$ or the top of $v$. Similarly, we create a construction that allows us to remove the bottom of $u$ or the bottom of $v$. Since $u$ and $v$ are ordinary degree $2$ nodes, we never want to remove the top of one and the bottom of the other, since this would yield an unsolvable situation.
  Note that, in case we are allows to make cuts of size $3$, we can also remove all six extra nodes completely. This does not affect the reduction, since it never pays to do so. \qed
  \fi 
\end {proof}

\ifabstract
\later{
  \subsection*{Conclusion of \cref{lem:decycling-2-color}}
  We embed $G$ in this fashion.  We assume all vertices in $S$ correspond to green button pairs. Suppose we have edges $su$, $uw$, $tv$, and $vx$, and $u$ and $v$ are linked in $S$. We place six additional green buttons on two vertical lines as illustrated in \cref{fig:2c_link}.
  Three buttons are placed such that one is on the same horizontal line as the top end of $u$ and the other is on the same horizontal line as the top end of $v$, and the third is not on a common line with any other node. The third button assures we must use this vertical cut, but we can choose whether to remove the top of $u$ or the top of $v$. Similarly, we create a construction that allows us to remove the bottom of $u$ or the bottom of $v$. Since $u$ and $v$ are ordinary degree $2$ nodes, we never want to remove the top of one and the bottom of the other, since this would yield an unsolvable situation.
  Note that, in case we are allows to make cuts of size $3$, we can also remove all six extra nodes completely. This does not affect the reduction, since it never pays to do so.
}
\fi
%Apparently can't have the '\later' on the line immediately after...

\begin{lemma}\label{lem:decycling-hard}
  SAT reduces to Graph Decycling.
\end{lemma}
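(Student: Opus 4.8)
The plan is to reduce from $3$SAT, encoding each variable as a forced binary choice and each clause as a directed cycle that \emph{survives} exactly when the clause is unsatisfied. Given a formula with variables $x_1,\dots,x_n$ and clauses $C_1,\dots,C_m$, I would introduce for each variable $x_i$ a positive-literal vertex $x_i$ and a negative-literal vertex $\bar x_i$, and place the pair $(x_i,\bar x_i)$ into $S$. Since Graph Decycling forces us to delete exactly one vertex of each pair, the deletion choices are in bijection with truth assignments under the convention ``delete the literal vertex $\iff$ that literal is set \emph{true}.'' This convention is chosen so that deletion \emph{breaks} cycles rather than preserving them, and so that $x_i$ and $\bar x_i$ automatically receive opposite values, removing the need for any separate variable-consistency step.

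For a clause $C_j=(\ell_{1}\vee\ell_{2}\vee\ell_{3})$, where each $\ell_k$ is one of the literal vertices above, the gadget is a directed triangle through $\ell_1,\ell_2,\ell_3$ with each arc subdivided by a fresh degree-$(1,1)$ vertex $a_{j,k}$, giving $\ell_1\to a_{j,1}\to\ell_2\to a_{j,2}\to\ell_3\to a_{j,3}\to\ell_1$. The subdivision vertices never belong to $S$, so they are always present; hence this $6$-cycle survives precisely when all three literal vertices survive, i.e.\ when all three literals are false, i.e.\ when $C_j$ is unsatisfied. Deleting any one literal vertex of the clause (setting one of its literals true) breaks it. From this, two implications are immediate: any acyclic-izing deletion set is a transversal of the pairs, hence a truth assignment, and since it must break every clause cycle, every clause has a true literal; conversely a satisfying assignment dictates which vertex of each pair to delete so that each clause cycle loses at least one vertex. (Here I rely on the fact that the degree normalizations used in Lemma~\ref{lem:decycling-2-color} are performed \emph{inside} that lemma, so I am free to output an arbitrary Graph Decycling instance.)

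The main obstacle is the remaining half of the reverse implication: I must certify that the deletion set arising from a satisfying assignment leaves \emph{no} surviving directed cycle at all, not merely no surviving clause cycle. As stated, the construction fails this, and it is worth seeing why, since it pins down exactly what must be fixed. Because a literal vertex $x_i$ is shared by all clauses containing $x_i$ positively, the union of the clause triangles can contain spurious cross-clause cycles of the form $u_1\to a\to u_2\to b\to u_3\to c\to u_4\to d\to u_1$, where each subdivision vertex $a,b,c,d$ supplies only \emph{two} false literals of its clause while that clause's third literal is true. Every clause on such a cycle is satisfied, yet the cycle survives, so the graph is not acyclic even though the assignment satisfies the formula. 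Thus spurious cycles genuinely break correctness, and I expect ruling them out to be the hard part.

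The fix is to \emph{de-share} the hubs: replace each \emph{occurrence} of a literal by a private copy that lives on its own clause triangle, so that the clause triangles become pairwise vertex-disjoint, and then reconnect the copies of a single variable to the canonical pair $(x_i,\bar x_i)$ by small linking gadgets (an equality gadget for positive occurrences, an inequality gadget for negative ones). Each linking gadget is a short cycle built from $S$-vertices that is present \emph{exactly} in the inconsistent configurations (one copy true while another is false) and is then unbreakable, so that feasibility forces all copies of a variable to agree with its truth value; in every consistent configuration one of the gadget's vertices is deleted and its remaining arcs dangle, contributing no live cycle. The crux is then a purely local, per-gadget case analysis showing that after this modification every directed cycle of the whole graph is confined to a single gadget, so global acyclicity reduces to checking each clause gadget and each linking gadget independently. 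Granting that confinement, the surviving graph is acyclic iff the deletion choice is consistent and satisfies every clause, i.e.\ iff the formula is satisfiable; the construction is plainly polynomial, which completes the reduction.
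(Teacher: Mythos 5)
Your skeleton---one $S$-pair $(x_i,\bar x_i)$ per variable, one directed cycle per clause with a marked vertex per literal, and the convention ``delete $=$ true'' so that a clause cycle survives exactly when the clause is unsatisfied---is the same skeleton the paper uses, and your forward direction (acyclic $\Rightarrow$ satisfiable) is sound. You also correctly diagnose why the naive shared-hub version fails. The genuine gap is that the entire reverse direction then rests on two things you never supply: the linking gadgets themselves, and the ``confinement'' claim that every surviving directed cycle lies inside a single gadget. You explicitly write ``Granting that confinement,'' but this cannot be granted: a single short cycle cannot express the biconditional ``copy deleted iff $x_i$ deleted'' (a surviving cycle is a conjunction of \emph{kept} conditions, so you need one cycle per inconsistent configuration), and the cycle detecting ``copy kept, $x_i$ deleted'' must pass through the occurrence copy $c$ itself, which sits on a clause triangle, and through $\bar x_i$, which is shared by every positive occurrence of $x_i$. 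That vertex then has arcs to and from many triangles, and a simple cycle can enter one clause triangle from $\bar x_i$, traverse it, and exit through a different literal's linking gadget---reintroducing under a satisfying assignment exactly the cross-gadget cycles you set out to kill. So the proposal as written is a correct diagnosis plus an unimplemented fix whose correctness is precisely the open point.

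The paper's device is the ``bow-tie'': duplication is done by two small cycles sharing a single vertex (if the shared vertex is kept, one further vertex must be deleted from each of the two cycles), and the union of two directed cycles meeting in exactly one vertex contains no simple cycle other than the two themselves, so these gadgets are self-confining by construction; there is no high-degree literal hub at all. To repair your argument you would need to (i) write down the linking gadgets explicitly so that every auxiliary vertex has in- and out-degree one within its gadget, forcing any simple cycle through it to be that gadget's own cycle, and (ii) actually carry out the confinement case analysis you defer. Until then the reduction is not established.
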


\ifabstract
A proof of \cref{lem:decycling-hard} is in Appendix B.
\later {
\subsection*{Proof of \cref{lem:decycling-hard}}
\fi
\begin{proof}
%\maarten {If not known, we can prove this as follows.}
%\jody{I think this proof is valuable to include, though maybe it needs more detail?}
Create a cycle for each clause, and mark a vertex in it for each literal in the clause: one of these needs to be removed to make the graph acyclic.
Create a pair of removable vertices for each variable. 
Duplicate literals with little bow-tie gadgets: two cycles that share a vertex. 
If shared vertex is \emph{not} removed, we must remove two more vertices, one from each cycle. \qed
\end{proof}
\ifabstract
} %close the \later{
\fi

%Combining Lemmas $2$ and $3$, we arrive at the desired result.
Lemmas \ref{lem:decycling-2-color} and \ref{lem:decycling-hard} give \cref{thm:2colors2cuts}.

\begin{theorem} \label{thm:2colors2cuts}
\ifabstract
$\Board{n \times n}{2}{\infty}{\dirsTwoA}{\{2\}}{B}$ is NP-complete.
\else
Buttons $\&$ Scissors for 2-colors with only cuts of two directions of size 2, i.e., $\Board{n \times n}{2}{\infty}{\dirsTwoA}{\{2\}}{B}$ is NP-complete.
\fi
\end{theorem}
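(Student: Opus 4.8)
The plan is to establish both halves of NP-completeness: membership in NP and NP-hardness. The hardness half is essentially a one-line composition of the two lemmas just proved, so most of the work lies in stating the composition cleanly and in checking that the produced instance lands in exactly the claimed parameter regime.

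For membership, I would argue as in the proof of \cref{thm:1colorPlanarSAT}. A board with $k$ buttons has at most $k/2$ cuts in any solution, since every cut in $S=\{2\}$ removes exactly two buttons; a polynomial-size certificate is therefore the cut sequence itself. Given such a sequence one checks in polynomial time that each cut is feasible on the current board (two same-colored buttons on a common row or column with no other button strictly between them) and that the final board is empty. Hence $\Board{n \times n}{2}{\infty}{\dirsTwoA}{\{2\}}{B}$ is in NP. For hardness I would compose the reductions: \cref{lem:decycling-hard} gives a polynomial-time reduction from SAT to Graph Decycling, and \cref{lem:decycling-2-color} gives one from Graph Decycling to $2$-color Buttons \& Scissors. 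Both maps are polynomial-time, so their composition is too, and since SAT is NP-hard the puzzle inherits NP-hardness. The point I would verify carefully is that the board output by \cref{lem:decycling-2-color} respects the exact parameters of the theorem: it uses only the two colors (red and green), placing red pairs on horizontal lines and green pairs on vertical lines while keeping every non-paired same-colored button off any shared horizontal, vertical, or diagonal line, so that the only feasible cuts are size-$2$ horizontal or vertical cuts. This matches $\dirsTwoA$ and $S=\{2\}$ precisely — the optional size-$3$ cuts remarked on in the proof of \cref{lem:decycling-2-color} are simply never used — so the instance is a legal input to $\Board{n \times n}{2}{\infty}{\dirsTwoA}{\{2\}}{B}$.

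The main obstacle is not the theorem itself, which is immediate once the lemmas are granted, but rather the faithfulness of the chained reduction end to end. One must trust that the \emph{blocking} relation encoded by placing a button of one color strictly between a pair of the opposite color (\cref{fig:2c_link}) makes the feasible orderings of cuts correspond exactly to the acyclic orderings sought in Graph Decycling, and that the link gadget enforces the either/or deletion demanded by the pairs $S$ without ever making it profitable to remove the top of one linked node together with the bottom of the other. These are precisely the correspondences asserted by \cref{lem:decycling-2-color,lem:decycling-hard}, so once they are accepted the theorem follows by composition together with the NP-membership argument above.
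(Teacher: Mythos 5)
Your proposal matches the paper's argument exactly: the paper derives \cref{thm:2colors2cuts} as an immediate corollary of composing \cref{lem:decycling-hard} (SAT to Graph Decycling) with \cref{lem:decycling-2-color} (Graph Decycling to 2-color Buttons \& Scissors), which is precisely your hardness half, and your NP-membership and parameter-checking remarks are routine additions consistent with what the paper leaves implicit. No gap; same route.
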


%%% Local Variables:
%%% mode: latex
%%% TeX-master: "extended-bands"
%%% End:

%\subsection{Inapproximability Results} \label{sec:1p_inapprox}
%\input{1p_inapprox}

%\subsection{Integer Programming Formulation} \label{sec:1p_ip}
%\input{1p_ip}

\section{Two-Player Games}\label{sec:2p}
% !TEX root = arxiv-bands.tex

%In our two-player games, the players take turns making feasible cuts on a common board $B$. In a {\it partisan game}, some cuts are available to one player, but not the other. In an {\it impartial game} all feasible cuts can be made by both players. We consider two losing conditions:
%\christiane{maybe put this in a shorter paragraph without itemize?}

\ifabstract
We consider three two-player Buttons $\&$ Scissors variants. 
First we consider color restricted games where (a) each player can only cut specific colors, and (b) players are not restricted to specific colors. For (a) player blue may only cut Blue buttons, while the red player may only cut Red buttons. For (b) we distinguish by winning criterion: for (\impartial) the last player who makes a feasible cut wins; for (\scoring) players keep track of the total number of buttons they've cut.  When no cuts can be made, the player with the most buttons cut wins.

In the following sections, we show that all variants are \cclass{PSPACE}-complete.
\fi

\iffalse
We consider four different types of two-player Buttons $\&$ Scissors variants. 
First, we categorize games by color restrictions: (a) games where each player may only make cuts on specific colors, and (b) games where players are not restricted to specific colors. For (a) we distinguish whether (a1) player blue may only cut Blue buttons, while the red player may only cut Red buttons, or (a2) player blue may cut Blue buttons, player red may cut Red buttons, and both may cut Green buttons. For (b) we distinguish by winning criterion: for (\impartial) the last player who makes a feasible cut wins; for (\scoring) players keep track of the total number of buttons they've cut.  After no more cuts can be made, the player with the highest number of buttons cut wins.

In the following sections, we show that all four variants are \cclass{PSPACE}-complete.
\fi

\iffull
We consider four different types of two-player Buttons $\&$ Scissors variants, in two different types:

\begin{itemize}
    \item Games where each player may only make cuts on specific colors:
	    \begin{enumerate}[nolistsep,noitemsep]
	         \item One player may only cut Blue buttons, while the other player may only cut Red buttons. %\christiane{do we need left and right?} \irina{maybe better call them blue and red players?} \kyle{I'm cool with calling them by their color names here.  Probably best to avoid any distracting discussion of combinatorial games.}
	         \item Either player may cut Green buttons.
	    \end{enumerate}
	\item Games where players are not restricted to the colors they may cut:
	    \begin{enumerate}[nolistsep,noitemsep]
	        \item The last player who makes a feasible cut wins.  (\impartial)
	        \item Players keep track of the total number of buttons they've cut.  After no more cuts can be made, the player with the highest number of buttons cut wins.  (\scoring) %renamed this one, because the scoring version isn't truly impartial.
	    \end{enumerate}
\end{itemize}

In the following sections, we show that all four variants are \cclass{PSPACE}-complete.
\fi

\subsection{Cut-By-Color Games}\label{sec:2p_partisan}
% !TEX root = arxiv-bands.tex

\iffalse
\begin{theorem}
The partisan LAST two-player Buttons $\&$ Scissors game, where one player cuts blue buttons, the other red buttons and both are allowed to cut green buttons, is PSPACE-complete. 
\end{theorem}
\fi

In this section the first player can only cut blue buttons, the second player can only cut red buttons, and the last player to make a cut wins.

\begin{theorem}\label{th:2p-part}
The partisan LAST two-player Buttons $\&$ Scissors game, where one player cuts blue buttons, the other red buttons, is PSPACE-complete. 
\end{theorem}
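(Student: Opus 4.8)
The plan is to establish both containment in PSPACE and PSPACE-hardness. Containment is routine: every feasible cut deletes at least two buttons and buttons are never added, so any play lasts at most $mn/2$ moves and the game tree has polynomial depth. A position (the current board together with whose turn it is) has polynomial size, so the winner under optimal play can be decided by the standard recursive minimax evaluation using space proportional to (depth) $\times$ (position size), which is polynomial. Hence the game lies in PSPACE, and the work is all in the hardness reduction.

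For hardness I would reduce from a PSPACE-complete \emph{partisan} two-player game in which turns alternate and each player controls a disjoint set of moves, so that the color restriction of our game is matched automatically. The cleanest source is a quantified Boolean formula game $G(\Phi)$ for $\Phi = \exists x_1\,\forall x_2\,\exists x_3 \cdots \phi$: the existential player becomes Blue and the universal player becomes Red, so the quantifier alternation is realized by the enforced alternation of turns. (The two-player bounded constraint logic framework is an equally natural starting point and would avoid re-deriving hardness of the generic formula game, at the cost of matching its reverse-an-edge win condition to our last-move condition.)

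The reduction builds the board from gadgets, reusing the \emph{blocking} mechanism already exploited in the one-player constructions (\cref{lem:decycling-2-color}), where one button of a cut lies strictly between the two buttons of an opposite-colored cut, so that the latter becomes feasible only after the former is made. I would design: a variable gadget in which the owning player's unique available cut commits the variable to \emph{true} or \emph{false} while permanently blocking the alternative; wire, bend, and split gadgets (adapting \cref{fig:2c_splitter} and \cref{fig:split-1p}) to route each literal's value to the clauses while preserving its color; and an evaluation stage in which the feasibility of one final decisive cut for a given player depends on whether $\phi$ is satisfied. The timing would be arranged so that after the $n$ variable-setting moves the player who should win $G(\Phi)$ is exactly the player able to make the last feasible cut.

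The main obstacle is controlling move order and last-move parity simultaneously. I must guarantee that neither player profits by playing ``out of turn'' in a distant region of the board instead of in the current variable gadget; this calls for forced-response gadgets (a cut that leaves the opponent a unique, mandatory reply) together with a parity argument showing that any deviation is either immediately punished or merely permutes forced moves without changing who is ultimately stranded with no legal cut of their color. Verifying that exactly one player runs out of legal cuts, and that this player is precisely the loser of $G(\Phi)$, is the crux and will require a careful case analysis of how the blocking constraints of the two colors interact.
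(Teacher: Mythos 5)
Your PSPACE membership argument is fine, and your overall plan (reduce from a Schaefer-style two-player formula game, implement signals by the between-two-buttons blocking mechanism, and fight the last-move parity with forced-response gadgets) points in the right direction. But the proposal has two genuine gaps. First, you chose the \emph{ordered} QBF game $\exists x_1\,\forall x_2\cdots$ as the source. In Buttons \& Scissors a player may make any feasible cut of their color anywhere on the board, so you would additionally have to force the variables to be set in the prescribed quantifier order --- an obstacle you flag but do not resolve. The paper avoids it entirely by reducing from Schaefer's $G_{\%\mathit{free}}(\text{CNF})$, in which each player freely chooses \emph{which} of their own variables to set on each turn; order-independence of the source game is precisely what makes the ``players can play anywhere'' semantics harmless.

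Second, the part you yourself identify as the crux --- verifying that the loser of the formula game is exactly the player stranded without a cut --- is left entirely open, and it is where all the real work lies. The paper's resolution is concrete: the variable sets $V_b$ and $V_r$ have equal size, so absent the final gadget Red can always answer Blue; each blue variable gadget carries a spare ``pass'' cut so Blue can hand the turn back when a propagation path gets stuck at an unsatisfied AND; the AND gadget is built so that Red's only deviation (the ``illegal'' cut $x$) unlocks \emph{two} extra Blue cuts and is therefore self-punishing; and the evaluation gadget is a truncated split in which, once the formula is satisfied, Red is forced into a cut that releases unanswerable Blue cuts. Without exhibiting gadgets with these quantitative properties (who gains how many extra moves under every deviation), the parity argument cannot be completed, so the proposal as written does not yet constitute a proof.
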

\begin{proof}
%The proof is by reduction from Quantified Boolean Formulas (QBF). For QBF we are given a quantified Boolean formula $F$, i.e., either a Boolean formula or a quantified Boolean formula, and we want to answer the question whether $F$ is true.
%As described by Hearn~\cite{h-gpc-06} the question posed by QBF: ``Does there exists an $x$, such that for all $y$, there exists a $z$, such that... formula $F$ is true'' is equivalent to the question of whether the first player can win the following formula game: ``Players take turn assigning truth values to a sequence of variables. When they are finished, player one wins if formula $F$ is true; otherwise, player two wins.''
The proof is by reduction from $G_{\%\mathit{free}}(\text{CNF})$~\cite{schaefer}: given a boolean formula $\Phi(x_1,\dots,x_n)$ in CNF and a partition of the variables into two disjoint subsets of equal size $V_b$ and $V_r$, two players take turns in setting values of the variables, the first (Blue) player sets the values of variables in $V_b$, and the second (Red) player sets the values of variables in $V_r$. Blue wins if, after all variables have been assigned some values, formula $\Phi$ is satisfied, and loses otherwise.

%For our game, players are assigned colors, where player one plays blue and player two plays red.
%Given a formula $F$ we turn it into a Buttons $\&$ Scissors board $B$, such that playing Buttons $\&$ Scissors on $B$ will have the same result as playing the formula game on $F$.
%The variables of $F$ are colored (either blue or red), indicating that only the player of this color will be able to set the truth assignment of the variable. Blue will start with a move, and during the gadget construction we keep the invariant that each blue move enables a red move. Blue tries to force red making moves, such that in the end $F$ is satisfied.

For a given instance of formula $\Phi$ we construct a Buttons $\&$ Scissors board $B$, such that Blue can win the game on $B$ if and only if he can satisfy formula $\Phi$. We will prove this statement in different formulation: Red wins the game on $B$ if and only if formula $\Phi$ cannot be satisfied.
See \cref{fig:2pex} for a full example. %for the formula $({\color{blue}x}\vee {\color{red}y}) \wedge ({\color{blue} \neg x} \vee {\color{red}\neg y})\wedge({\color{red}z}\vee {\color{blue}w})$.

\iffull
\begin{figure}[t]


\center
     \comic{.15\textwidth}{2p_part_varred}{(a)} %{images/2p_part_varred}{(a)}
     \hfill
    \comic{.2\textwidth}{2p_part_varblue}{(b)} %{images/2p_part_varblue}{(b)}
    \hfill
     \comic{.12\textwidth}{2p_part_split}{(c)}\\ %{images/2p_part_split}{(c)}\\
      \vspace*{-1cm}
%      \hspace*{.1\textwidth}
      \comic{.25\textwidth}{2p_part_or}{(d)} %{images/2p_part_or}{(d)}
    \hfil
      \comic{.22\textwidth}{2p_part_and}{(e)} %{images/2p_part_and}{(e)}
%      \hspace*{.25\textwidth}
         \caption{\label{fig:2p-part}\small (a) The red (dashed) variable gadget, (b) the blue (solid) variable gadget, (c) the split gadget, (d) the OR gadget, and (e) the AND gadget. Lines (or arcs used for clarity) indicate which buttons are aligned.
         }
\end{figure}
\fi

%%%%%%%%%%%%%%%%%%%%%%%%%%
% all in one line for short version
%%%%%%%%%%%%%%%%%%%%%%%%%%
\ifabstract
% \begin{figure}[t]
% \center
%      \comic{.125\textwidth}{images/2p_part_varred}{(a)}
%      \hfill
%     \comic{.17\textwidth}{images/2p_part_varblue}{(b)}
%     \hfill
%      \comic{.10\textwidth}{images/2p_part_split}{(c)}
%      \hfill
%       \comic{.2\textwidth}{images/2p_part_or}{(d)}
%     \hfill
%       \comic{.18\textwidth}{images/2p_part_and}{(e)}
% %      \hspace*{.25\textwidth}
%          \caption{\label{fig:2p-part}\small (a) The red (dashed) variable gadget, (b) the blue (solid) variable gadget, (c) the split gadget, (d) the OR gadget, and (e) the AND gadget. Lines (or arcs used for clarity) indicate which buttons are aligned.
%          }
% \end{figure}
\vijfplaatjes [scale = .26] {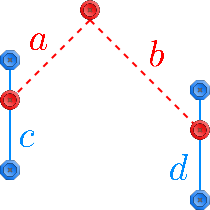} {2p_part_varblue} {2p_part_split} {2p_part_or} {2p_part_and} {\small (a) The red (dashed) variable gadget, (b) the blue (solid) variable gadget, (c) the split gadget, (d) the OR gadget, and (e) the AND gadget. Lines (or arcs used for clarity) indicate which buttons are aligned.}
\fi

%\vspace*{-.5cm}
\begin{figure}
\center
    \comicII{.78\textwidth}{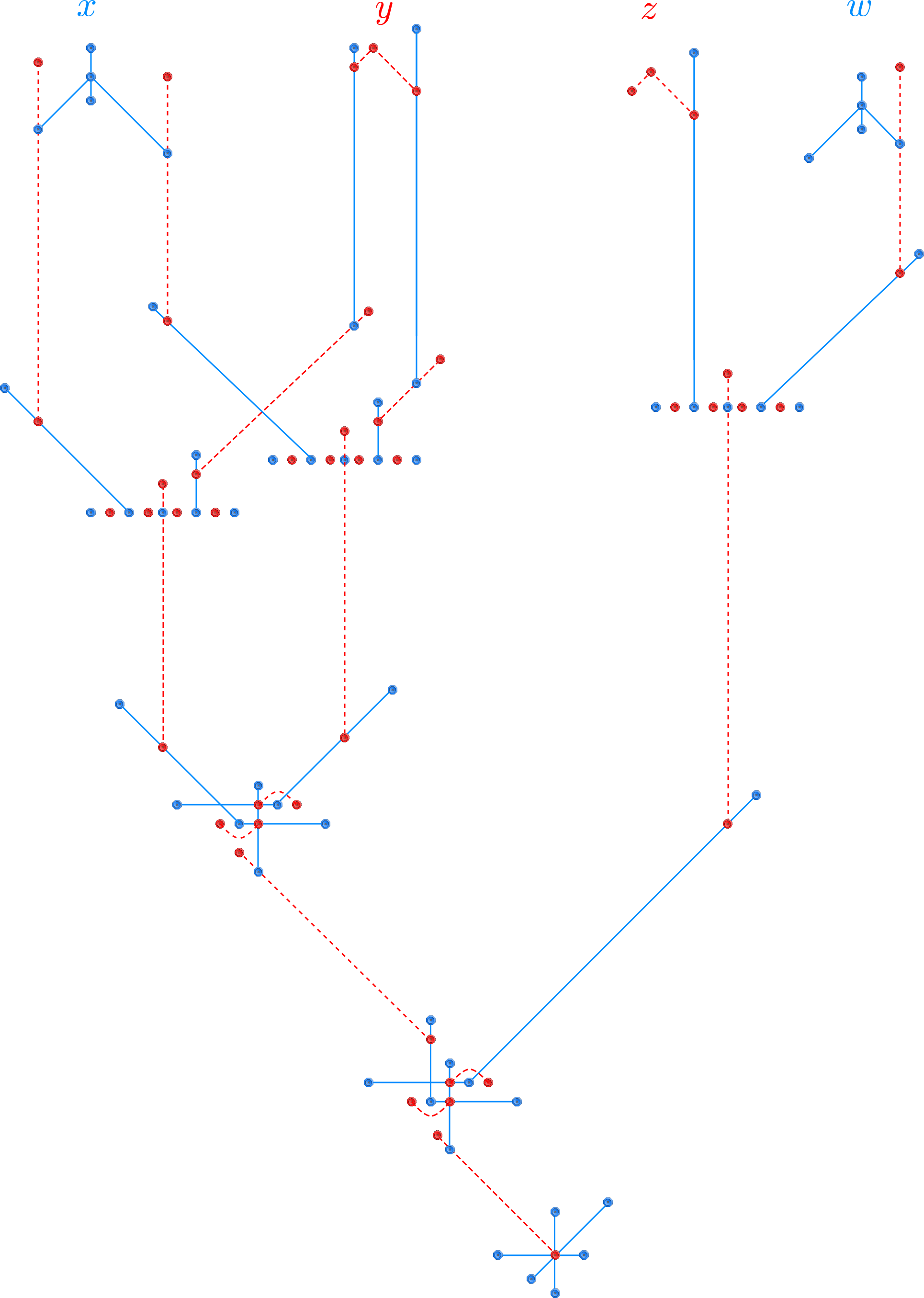} %{images/2p_part_example.png}
         \caption{\label{fig:2pex}\small Example for the construction from the proof of \cref{th:2p-part}: we consider \newline$({\color{blue}x}\vee {\color{red}y}) \wedge ({\color{blue} \neg x} \vee {\color{red}\neg y})\wedge({\color{red}z}\vee {\color{blue}w})$.
         }
\end{figure}

%The {\bf blue variable} gadget is shown in \cref{fig:2p-part}(a). Both possible cuts for red ($c$ and $d$) are blocked by blue buttons. That is, without blue cutting $a$ or $b$, red cannot cut any of her cuts on this gadget. Blue can choose cuts $a$ or $b$ (and then still has two buttons remaining on cut $c$, a feasible cut), which sets the variable to ``false'' or ``true'', respectively. Alternatively, it can choose the 3-button cut $c$ and disallow red from making any cuts in the gadget.

The {\bf red variable gadget} is shown in \cref{fig:2p-part} (a). Red ``sets the value'' of the corresponding variable by choosing the first cut to be $a$ (false) or $b$ (true), and thus unlocking one of the two cuts, $c$ or $d$, respectively, for Blue to follow up (and to propagate the value of the variable).

The {\bf blue variable gadget} is shown in \cref{fig:2p-part} (b). Blue ``sets the value'' of the corresponding variable by choosing the first cut to be $a$ (false) or $b$ (true), and thus unlocking one of the two cuts, $d$ or $e$, respectively, for the red player to follow up. Blue has one extra cut $c$ that is used to pass the turn to Red. Alternatively, Blue can choose to start with the 3-button cut $c$ and disallow Red from making any cuts in the gadget. In that case the corresponding variable cannot be used to satisfy $\Phi$.

%The {\bf red variable gadget} is shown in \cref{fig:2p-part}(b). Here, both cuts for blue ($c$ and $d$) are blocked by red buttons. Red can choose cut $a$ or $b$, setting the variable to ``false'' or ``true'', respectively. 

%If a literal is not used in $F$ the associated branch of the variable gadget will only have buttons of cut $a$ or $b$.

\cref{fig:2p-part} (d) depicts the {\bf OR} gadget: if Blue cuts $a$ or $b$ (or both), Red can leave the gadget with cut $h$. Cuts $a$ and $b$ unblock cuts $c$ and $d$, respectively, which in turn unblock $e$ and $f$, respectively. %When blue cuts one of these, $h$ gets unblocked. 

\cref{fig:2p-part} (e) depicts the {\bf AND} gadget for two inputs. The proper way of passing the gadget: Blue makes both cuts $a$ and $b$, and Red makes cuts $c$ and $d$ when they get unblocked, thus enabling Blue to make cut $g$ and exit the gadget. However, Red could also take an ``illegal'' cut $x$, thus, unblocking two extra cuts, $e$ and $f$, for the blue player, and, hence, putting Red at a disadvantage. Thus, if at any point in the game Red chooses (or is forced to) make cut $x$ in any of the AND gadgets, the %result of the game 
game result is predetermined, and Red cannot win on $B$.

\cref{fig:2p-part} (c) shows the {\bf split} gadget; it enables us to increase the number of cuts leaving a variable and propagating its truth assignment. Blue's cut $a$ unblocks Red's cut $b$, which unblocks both $c$ and $d$. If Blue cuts $c$ and $d$ this enables Red to cut $e$ and $f$, respectively. The gadget also exists with Blue and Red reversed.

A variant of the split gadget evaluates the formula $\Phi$: cuts $e$ and $f$ are deleted. If the  variable values are propagated to this gadget and Red is forced to make the cut $b$, Blue then gets extra cuts which Red will not be able to follow up. 

The game progresses as follows: Blue selects an assignment to a blue variable. This unlocks a path of red-blue cuts that goes through some AND and OR gadgets and leads to the final gadget. As the order of the cuts in such a path is deterministic, and does not affect the choice of values of other variables, w.l.o.g., we assume that Red and Blue make all the cuts in this path (until it gets ``stuck'') before setting the next variable. The path gets stuck when it reaches some AND gadget for which the other input has not been cleared. The last cut in such a path was made by Red, thus afterwards it will be Blue's turn, and he may choose to make the leftover cut $c$ from the variable gadget to pass the turn to Red.

If the final gadget is not unblocked yet, Red always has a cut to make after Blue makes a move, as there is the same number of blue and red variables. However, if Blue can force Red to make moves until the final gadget is reached, then Blue gets extra cuts; thus, Red will run out of moves and lose the game. Otherwise, if Blue cannot fulfill some AND or OR gadgets, the Red player will make the last move and win. Therefore, if $\Phi$ cannot be satisfied, Red wins. \qed
\end{proof}

\iffalse
As a corollary we obtain:
\begin{corollary}
The partisan LAST two-player Buttons $\&$ Scissors game, where one player cuts blue buttons, the other red buttons and both are allowed to cut green buttons, is PSPACE-complete. 
\end{corollary}
\fi

\subsection{Any Color Games}\label{sec:2p_impartial}
% !TEX root = arxiv-bands.tex

\iffull
\begin{theorem}
  \label{thm:impartialLastHard}
Impartial two-player Buttons $\&$ Scissors (\impartial) is PSPACE-complete. %\christiane{can't we just reformulate them as  ``\impartial  two-player Buttons $\&$ Scissors is PSPACE-complete.''- and save space?}
\end{theorem}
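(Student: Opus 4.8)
The plan is to prove the two directions separately: \cclass{PSPACE}-membership by a length bound, and \cclass{PSPACE}-hardness by reduction from the partisan LAST game that \cref{th:2p-part} already shows to be \cclass{PSPACE}-complete. For membership, note that every feasible cut deletes at least two buttons from the $m \times n$ board, so every play of \impartial{} ends after at most $mn/2$ moves. Any two-player game of polynomially bounded play length, under the normal-play (last-move-wins) convention, is decidable in polynomial space by the usual recursive evaluation of its game tree: a depth-first alternating traversal reuses the same $O(mn)$ working tape at every level. Hence \impartial{} lies in \cclass{PSPACE}.

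For hardness I would reuse the board $B$ built in the proof of \cref{th:2p-part}, together with its variable, split, \textsc{and}, \textsc{or}, and evaluation gadgets. In that construction the two colours \emph{encode} the two players: the forced alternation of cuts along each activated path, and the partition of variables into $V_b$ and $V_r$, both rely on the rule that one player may cut only blue buttons and the other only red. In the impartial game this rule disappears, so the first player may now, for instance, cut the initial button of a \emph{red} variable gadget and thereby ``set'' a variable that should belong to the opponent. The whole reduction therefore hinges on adding an \emph{enforcement} mechanism that makes any such colour-deviation strictly losing, so that optimal impartial play is forced to reproduce the partisan game and the impartial winner on $B$ equals the partisan winner.

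The enforcement I have in mind is a parity/tempo argument backed by penalty gadgets. The honest schedule -- first player sets the $V_b$ variables and answers along blue cuts, second player sets the $V_r$ variables and answers along red cuts -- is \emph{tight}: because $|V_b| = |V_r|$ and each activated path alternates colours, in honest play the players have exactly the moves they need, and the parity is arranged so that the first player makes the last move iff $\Phi$ is satisfiable. To each button I would attach a short ``trap'' (a chain of fresh-coloured buttons, each cut unblocking the next) that is inert when the button is cut on its intended turn, but, when the button is cut by the wrong player, releases a forced run of moves whose length is tuned to hand the opponent an uncompensated tempo. Since the blocking structure already forces the order of the legitimate cuts, tuning these trap lengths so that a single deviation flips the global move-parity against the deviator -- while leaving honest play untouched -- reduces impartial optimal play on $B$ to the partisan game.

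The main obstacle is precisely this enforcement step. I must design the penalty gadget so that it simultaneously (i) contributes nothing during honest play, (ii) decisively punishes a deviation no matter which button is mis-cut and at which point in the game it happens, and (iii) respects the blocking relations (a button lying between two others of the same colour) on which every existing gadget depends, without creating spurious feasible cuts or unsolvable dead ends. Getting the global parity bookkeeping right -- across the equal numbers of blue and red variables, the ``pass'' cuts in the variable gadgets, the splits, and the formula-evaluation gadget -- is the delicate part, and is where I expect most of the proof effort to go.
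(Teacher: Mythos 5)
Your \cclass{PSPACE}-membership argument is fine, but the hardness half has a genuine gap: the entire reduction rests on an ``enforcement'' gadget that you never construct, and you say yourself that designing it is the main obstacle. That obstacle is real. Turning a partisan reduction into an impartial one by punishing colour-deviations with tempo traps is notoriously fragile: the deviating player also gains access to the trap's forced moves (and to the opponent's buttons everywhere else on the board), deviations can be answered by counter-deviations, and a single local parity penalty does not obviously dominate the global combinatorics of which buttons become reachable once the colour discipline is abandoned. Conditions (i)--(iii) in your own list are exactly the hard part of the theorem, and asserting that trap lengths ``can be tuned'' is not a proof. As written, the argument establishes nothing beyond membership.

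The paper avoids this difficulty entirely by reducing from a game that is \emph{already} impartial: \ruleset{Directed Vertex Geography}, first restricted (via \cref{lemma:lowDegreeGeographyHard}) to in- and out-degree at most $2$. Each vertex becomes a small gadget of button pairs in which cutting the pair for an incoming arc unblocks a short forced chain and then the pair(s) for the outgoing arc(s) (\cref{fig:twoPlayerImpartial}); since in \ruleset{Geography} both players move the same token, there is no colour discipline to enforce, and the last-cut-wins condition matches the last-move-wins condition of \ruleset{Geography} directly. If you want to salvage your route, you would need to either fully specify and verify the penalty gadgets against arbitrary interleaved deviations, or switch the source problem to an impartial one as the paper does.
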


\begin{theorem}
  \label{thm:scoringHard}
Scoring two-player Buttons $\&$ Scissors (\scoring) is PSPACE-complete. 
\end{theorem}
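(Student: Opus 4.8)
The plan is to treat both games together, first settling membership and then hardness, reusing the wiring of \cref{th:2p-part}. For membership in \cclass{PSPACE}, note that every feasible cut deletes at least two buttons, so any line of play has length at most $mn/2$; the game tree therefore has polynomial depth. Since the legal moves from a position and the winning condition (who makes the last feasible cut for \impartial, and the two button totals for \scoring) are all polynomial-time computable, the standard recursive minimax evaluation that stores only the current line of play decides the winner in polynomial space. Hence \impartial{} and \scoring{} both lie in \cclass{PSPACE}.

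For hardness I would reduce from $G_{\%\mathit{free}}(\text{CNF})$, exactly as in \cref{th:2p-part}, keeping the variable, split, OR, and AND gadgets of \cref{fig:2p-part} and their blocking relations. The difference is that here either player may cut either color, so the roles of Blue and Red can no longer be enforced by color ownership; instead they are enforced by \emph{turn parity}. Player~1 always moves first, and I would align the gadgets so that after each variable choice there is a unique maximal chain of forced cuts (each cut unblocking exactly the next one) whose length has a fixed parity. Consequently the successive decision cuts fall alternately on Player~1's and Player~2's turns, and this fixed alternation simulates the quantifier alternation of $G_{\%\mathit{free}}$, with Player~1 playing the satisfier (Blue) and Player~2 the falsifier (Red).

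The crux, and the step I expect to be the main obstacle, is to prevent a player from exploiting the ``any color'' freedom by cutting out of the intended order. I would attach to each gadget a small penalty region that becomes feasible only when an unintended cut is taken, arranged so that such a cut immediately hands the opponent a surplus: in \impartial{} a block of buttons that lets the opponent make strictly more remaining moves (so the deviator runs out of replies first and loses the last-move race), and in \scoring{} a block of buttons large enough to flip the final count. Verifying that \emph{every} off-line cut is strictly dominated for the player who makes it --- simultaneously under the last-move and the score objectives, and without disturbing the parity alignment above --- is the delicate bookkeeping that the proof must carry out.

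Granting this, the intended equilibrium play follows the forced chains, and Player~1 makes the last feasible cut precisely when Blue wins the $G_{\%\mathit{free}}$ game, i.e.\ when $\Phi$ is satisfiable; this gives \cref{thm:impartialLastHard}. For \cref{thm:scoringHard} I would add one terminal prize component whose size exceeds the number of all remaining buttons and which can be cleared only by the player who would make the last move in the corresponding \impartial{} play. Since this component dominates the total score, the winner of the last-move game is also the score winner, so the same construction yields \cclass{PSPACE}-hardness of \scoring, completing the proof.
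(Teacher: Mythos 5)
Your membership argument is fine, but the hardness half has two genuine gaps, both at exactly the points you flag as ``delicate bookkeeping.'' First, you reduce from the partisan formula game $G_{\%\mathit{free}}(\text{CNF})$ and try to recover the Blue/Red role separation of \cref{th:2p-part} by turn parity plus penalty regions. In an any-color game nothing stops a player from making a cut that the construction intends for the opponent --- stealing a move, opening a different chain, or deliberately desynchronizing the parity --- and your proposal does not exhibit a penalty mechanism or prove the domination claim; it only asserts that one should exist. The paper sidesteps this entirely by reducing \impartial{} from \ruleset{Geography} (itself an impartial game, via \cref{lemma:lowDegreeGeographyHard}), so there are no roles to enforce: every gadget is a DAG of two-button cuts in distinct colors, and the only strategic content is which unblocked pair you cut, exactly mirroring which arc you traverse. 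Second, your \scoring{} mechanism --- a terminal ``prize'' component larger than all remaining buttons, clearable ``only by the player who would make the last move'' --- does not work as stated: in an impartial game either player may cut any feasible pair, so whoever is on move when the prize unblocks takes it (again a parity claim you have not secured), and a long monochromatic segment can be cut in sub-segments, so the opponent can cannibalize or split the prize rather than leave it intact.

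The paper's actual scoring device is the opposite of yours and is worth internalizing: a small \emph{starting} gadget in which optimal play scores $2,3,3,3$ (forced by \cref{lemma:scoringStartMoves}), leaving the second player ahead by exactly one point, after which \emph{every} remaining cut in the \ruleset{Geography} gadgets removes exactly two buttons. Hence the lead alternates by $\pm 1$ each turn and the player who makes the last cut is the player ahead at the end, so the winner of \impartial{} on $B$ is the winner of \scoring{} on $B'$. This reduces \cref{thm:scoringHard} to \cref{thm:impartialLastHard} with one local gadget and no global prize accounting. To salvage your route you would have to actually construct and verify the penalty regions and the parity-locking of the prize, which is the entire difficulty; as written, the proof is not complete.
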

\fi

%%%%%%%%%%%%%%%%%%%%%%
% getting rid of two lines for shor
%%%%%%%%%%%%%%%%%%%%%%
\ifabstract
\begin{theorem}
  \label{thm:impartialLastHard}
\impartial\ two-player Buttons $\&$ Scissors is PSPACE-complete. %\christiane{can't we just reformulate them as  ``\impartial  two-player Buttons $\&$ Scissors is PSPACE-complete.''- and save space?}
\end{theorem}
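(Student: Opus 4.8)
The plan is to prove \cclass{PSPACE} membership by the standard game-tree argument and then to establish hardness by reducing from a \cclass{PSPACE}-complete impartial normal-play game, namely Generalized Geography~\cite{schaefer}.

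For membership, observe that every feasible cut deletes at least one button, so every play of \impartial\ has length at most the number of buttons, which is $O(mn)$. The game is therefore a bounded-depth two-player game of perfect information with polynomially many moves to examine per position, and a depth-first minimax evaluation of its game tree runs in polynomial space; hence \impartial\ lies in \cclass{PSPACE}. (The same bound applies to the other two-player variants.)

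For hardness, I would exploit the \emph{blocking} mechanic already used in \cref{th:2p-part}: if a button of one color lies strictly between two like-colored buttons of a second color on a common line in an allowed direction, then the cut of the second line is infeasible until the first line has been cut. This induces a directed precedence relation on the potential cuts, and at any position the feasible cuts are exactly those all of whose blockers have been removed. Given a Generalized Geography instance $(H,s)$, I would build a board $B$ whose feasible-cut structure simulates a token traversing $H$: one cut per directed edge (``traverse this edge''), a vertex gadget in which arriving at $v$ unblocks precisely the cuts of the out-edges of $v$, and a locking sub-gadget enforcing the no-revisit rule so that once the token leaves $v$ no further cut may interact with $v$. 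Using size-$\geq 2$ cuts together with the cut-splitting freedom of \cref{rem:2or3cuts}, I would pad each traversal so that the parity of the total number of atomic cuts matches ``which player gets stuck'' in the geography game. Then the first (last-to-move) player wins on $B$ iff the first player wins on $(H,s)$; the construction is clearly polynomial and a candidate play is polynomial-time checkable, giving \cclass{PSPACE}-hardness.

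The main obstacle is making the simulation \emph{tamper-proof} under impartial, color-unrestricted play: because either player may cut any button, I must design the vertex gadget so that no reachable position ever offers an ``off-path'' feasible cut that would let a player relocate the token, skip a vertex, or revisit one, and so that the extra freedom granted by \cref{rem:2or3cuts} (splitting a long cut into several) cannot be used to flip the move-count parity. As with the ``illegal cut $x$'' of the AND gadget in \cref{th:2p-part}, I would ensure that every unintended cut is either immediately and uniquely answerable by the opponent, preserving parity, or strictly benefits the opponent, so that optimal play is forced onto the intended geography path and the winner is determined exactly by the geography instance.
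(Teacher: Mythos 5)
Your high-level plan is the same as the paper's: reduce from \ruleset{Geography} and use the blocking relation between collinear same-colored button pairs to simulate a token walking the graph, with the standard game-tree argument for \cclass{PSPACE} membership. The problem is that the proposal stops exactly where the proof has to begin. You yourself identify the crux --- designing vertex gadgets that are tamper-proof under color-unrestricted play, enforcing no-revisit, and controlling move parity --- and then resolve it only with statements of intent (``I would design the vertex gadget so that \ldots'', ``I would ensure that every unintended cut is \ldots''). No gadget is exhibited and no argument is given that such gadgets exist, so the reduction cannot be checked; this is a genuine gap, not a detail.

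The paper closes that gap with two concrete devices absent from your sketch. First, it proves a preliminary degree-reduction lemma (\cref{lemma:lowDegreeGeographyHard}) showing \ruleset{Geography} stays \cclass{PSPACE}-complete with in- and out-degree at most $2$, so only a constant number of gadget types need to be designed. Second --- and this is what makes your tamper-proofing and parity worries disappear wholesale rather than requiring the case analysis you gesture at --- every color class in the construction has exactly two buttons (except the out-degree-$2$ branch, which uses three non-collinear buttons of which one is deliberately stranded), so every feasible cut removes exactly two buttons, there are no long monochromatic lines whose cuts could be split to shift parity, and the feasible cuts at any position are precisely the unblocked pairs intended by the simulation. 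Parity of the alternation is then fixed locally by inserting two intermediate pairs ($x$ and $y$ in \cref{fig:twoPlayerImpartial}) into each vertex gadget, and no separate ``locking sub-gadget'' is needed because each edge pair can be cut only once and cutting an already-serviced gadget unblocks nothing useful. Without committing to some such concrete scheme, your argument does not yet establish hardness.
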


\begin{theorem}
  \label{thm:scoringHard}
\scoring\ two-player Buttons $\&$ Scissors is PSPACE-complete. 
\end{theorem}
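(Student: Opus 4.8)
The plan is to establish both PSPACE-completeness results together, since the two variants differ only in how a finished game is scored, and to treat the \scoring\ case of \cref{thm:scoringHard} as the primary target with the \impartial\ case of \cref{thm:impartialLastHard} handled by a parallel argument. Membership in \cclass{PSPACE} is the easy half: every feasible cut strictly decreases the number of buttons on the board, so any line of play has length at most the number of buttons, which is polynomial in the board size. Hence the game tree has polynomial depth, and a standard recursive minimax evaluation (tracking the current board and, for \scoring, the two running totals) decides the winner in polynomial space.

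For hardness I would reduce from the partisan Cut-By-Color game of \cref{th:2p-part}, which is the structurally closest PSPACE-complete game already in hand; one could alternatively rebuild the reduction from $G_{\%\mathit{free}}(\text{CNF})$ using the same variable/split/OR/AND gadgets, but reusing the finished board $B$ is cleaner. The essential new content is that in the any-color game \emph{either} player may cut \emph{either} color, so the rule ``first player cuts blue, second cuts red'' is no longer enforced by fiat and must instead be reimposed by the game dynamics. I would add a \emph{discipline layer} to $B$: auxiliary gadgets whose sole purpose is to make any ``cross-color'' deviation strictly losing, so that optimal play collapses onto the partisan line and inherits its correspondence with satisfiability of $\Phi$.

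Concretely, for the \impartial\ LAST variant I would attach to each button a penalty gadget that fires only when a player cuts a button of the color the \emph{opponent} is meant to control: such a cut should unlock a self-contained chain of forced moves that grants the opponent a controllable surplus of tempo, arranged (by a simple parity count) so that the deviator is the one who runs out of moves. Since each deviation hands the opponent a guaranteed net move advantage, no rational player ever deviates, and the augmented game is equivalent to the partisan one, so the first player wins iff $\Phi$ is a Blue win. For the \scoring\ variant I would replace the tempo penalty by a score penalty: I would pad the main line so that each intended cut is worth the same to both players (leaving the scores tied after the partisan-equivalent play), and then add a single large ``prize'' cluster that becomes cuttable only once control has been decided, so that it is harvested by exactly the player who would have made the last move in the partisan game. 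A wrong-color cut is made to release to the opponent a block of buttons larger than any attainable gain, rendering deviation score-losing. Thus the \scoring\ winner equals the partisan winner equals the formula outcome, and the reduction is clearly polynomial-time with a polynomially sized board.

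The main obstacle will be proving that no deviation can ever help, because in Buttons \& Scissors a single cut deletes an entire colinear run and cuts can block one another. I would therefore have to rule out that a player cuts \emph{into} a penalty or prize gadget at an advantageous interior point, reorders the forced chains, or uses one wrong-color cut to simultaneously neutralize a threat and dodge its penalty. Pinning this down with a clean invariant—e.g. a potential or strategy-stealing argument that bounds any deviator's best response below the intended payoff—is the crux of the proof; the gadget designs above are only correct once that invariant is verified, so I would devote most of the effort to making the discipline layer robust against such partial exploitation rather than to the (routine) remaining bookkeeping.
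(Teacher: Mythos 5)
Your \cclass{PSPACE} membership argument is fine, but the hardness half has a genuine gap, and it is exactly the point you yourself flag as ``the crux.'' The discipline layer you propose requires a gadget that ``fires only when a player cuts a button of the color the opponent is meant to control.'' In an any-color game the board state after a cut is identical no matter which player made it, so no static arrangement of buttons can detect \emph{who} cut a button; the only handle on player identity is the global parity of the move count, and that parity is itself perturbed by every deviation and by every forced chain your penalty gadgets would unlock. You would have to prove a global parity or potential invariant over the entire augmented board simultaneously with designing the gadgets, and you give a construction for neither. Until that is done, the claim that ``no rational player ever deviates'' is an assertion, not a proof, and the reduction from \cref{th:2p-part} does not go through.

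The paper avoids this obstacle entirely by not starting from the partisan game. It reduces \impartial\ from a degree-restricted version of \ruleset{Geography} (\cref{lemma:lowDegreeGeographyHard}), which is already an impartial game, so no color discipline ever needs to be enforced: the vertex gadgets only need to preserve the alternation structure of the Geography walk, and every cut in them removes exactly two buttons. For \scoring\ it then prepends a single starting gadget (\cref{lemma:scoringStartMoves}) whose forced opening cuts of sizes $2,3,3,3$ leave the second player exactly one point ahead, after which every remaining cut is worth two points; hence at the end of every turn the player who just moved is ahead, so the player who moves last wins, and \scoring\ on the augmented board coincides with \impartial\ on the original one. This is far simpler than your ``prize cluster'' and requires no deviation analysis beyond \cref{lemma:scoringStartMoves}. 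If you want to salvage your plan, the honest route is to switch the source problem to an impartial one as the paper does, rather than trying to re-impose partisanship with penalty gadgets.
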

\fi

We show that \impartial\ is \cclass{PSPACE}-complete, then use one more gadget to show \scoring\ is \cclass{PSPACE}-complete.  We reduce from \ruleset{Geography}\footnote{Specifically, \ruleset{Directed Vertex Geography}, usually called \ruleset{Geography}.}, (\cclass{PSPACE}-complete \cite{LichtensteinSipser:1980}).  We use \cref{lemma:lowDegreeGeographyHard} to start with low-degree \ruleset{Geography} instances.%\christiane{ok to use GEOGRAPHY without defining it at all, and without reference?}

\begin{lemma}
  \label{lemma:lowDegreeGeographyHard}
  \ruleset{Geography} is \cclass{PSPACE}-complete even when vertices have max degree 3 and the max in-degree and out-degree of each vertex is 2.
\end{lemma}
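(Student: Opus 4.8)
The plan is to reduce from the standard \ruleset{Geography} problem, which is already known to be \cclass{PSPACE}-complete, and show that the degree restrictions can be enforced by a local replacement that preserves the game's outcome. In \ruleset{Geography}, a token sits on a vertex of a directed graph, and the two players alternately move it along an outgoing edge to a not-yet-visited vertex; a player who cannot move loses. The key observation is that a high-degree vertex functions as a single branching point, and any branching of arbitrary degree can be simulated by a small cascade of branches of degree at most two. So the reduction is a gadget-substitution argument: replace every vertex whose out-degree (or in-degree) exceeds $2$ with a gadget built from bounded-degree vertices that has the same strategic behavior.

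First I would handle out-degree. If a vertex $v$ has out-edges to $w_1, \dots, w_k$ with $k > 2$, I would replace the single ``choose one of $k$ successors'' move by a binary tree of auxiliary vertices: from $v$ an edge leads to an intermediate vertex, which branches in two, and so on, until the $k$ leaves are identified with the original targets $w_1, \dots, w_k$. The crucial point is that a player who is to move at $v$ should be the one making the genuine choice, so the internal tree vertices must be traversed \emph{without handing a real decision to the opponent}. This is the standard trick of inserting an even number of forced moves along each internal path, so that parity is preserved and whichever player would have moved from $v$ in the original game still effectively selects the branch; the intermediate vertices are visited deterministically and never reused (since \ruleset{Geography} forbids revisiting vertices). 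Symmetrically, in-degree is reduced by the mirror-image construction: a vertex with $k > 2$ incoming edges is fed through an in-binary-tree of auxiliary vertices. One must also check that after both reductions no vertex has total degree exceeding $3$; since a vertex can be an internal node of at most one tree, arranging the in-tree and out-tree so that each auxiliary vertex carries at most one incoming and at most two outgoing edges (or vice versa) keeps the max degree at $3$ while the in- and out-degree stay at $2$.

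The main technical obstacle is the \textbf{parity/forced-move bookkeeping}: I must argue that inserting the binary-tree paths does not change who wins. Each internal path of the tree should consist of an even number of edges so that traversing it returns the move to the same player who entered it, guaranteeing that the player selecting a successor in the reduced instance is exactly the player who chose a successor in the original instance. I would make this precise by exhibiting a bijection between legal plays in the original game and legal plays in the reduced game that preserves the parity of the move count, so that the loser (the player stuck with no move) is the same in both. Because \ruleset{Geography}'s ``no revisiting'' rule means every auxiliary vertex is used at most once, no spurious strategic options are introduced, and the correspondence is clean.

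Finally I would note that the construction is clearly polynomial: each high-degree vertex of degree $d$ is replaced by $O(d)$ auxiliary vertices and edges, so the total blow-up is linear in the size of the original graph. Combining the polynomial reduction with the \cclass{PSPACE}-completeness of unrestricted \ruleset{Geography} yields \cclass{PSPACE}-completeness for the degree-restricted version, establishing \cref{lemma:lowDegreeGeographyHard}. The only subtlety I would take care to spell out is the exact routing of the in-tree and out-tree at a shared vertex so that the degree bound of $3$ (with in-degree and out-degree each at most $2$) is met simultaneously rather than merely one at a time.
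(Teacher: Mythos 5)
Your proposal is correct and follows essentially the same route as the paper: reduce from unrestricted \ruleset{Geography} by locally replacing high-in/out-degree vertices with cascades of degree-$2$ branch points, inserting forced moves to preserve both the parity of who arrives at each original vertex and the identity of the player making each branching choice (the paper does this iteratively, one edge at a time, rather than as a single binary tree, but that is immaterial). The one loose end you defer---a vertex left with in-degree $2$ and out-degree $2$, hence total degree $4$---is exactly the case the paper resolves by splicing in two extra forced vertices, so you should state that fix explicitly rather than leave it as a routing remark.
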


\ifabstract
\cref{lemma:lowDegreeGeographyHard} is proven in Appendix B.  
\later{
\subsection*{Proof of \cref{lemma:lowDegreeGeographyHard}}
\fi
\begin{proof}
The proof is simply a reduction that adds some vertices to reduce the degree.  To reduce the in-degree, we use the gadget shown in \cref{fig:geographyReduceInDegree} to iteratively reduce the number of edges entering any vertex, $v$, with in-degree above 3.  Notice that the player who moves to $v$ will be the same in both games.

\begin{figure}[h!]
  \begin{center}
    \includegraphics[scale = .3]{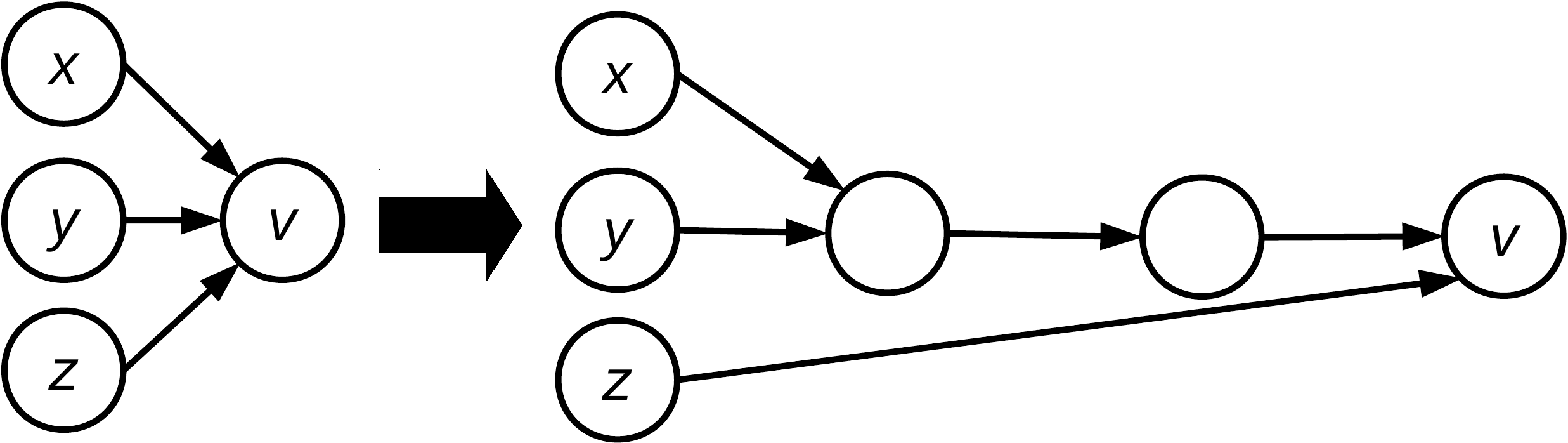} %{images/geographyReduceInDegree}
  \end{center}
  \caption{Reducing the in-degree to $v$ by 1 in \ruleset{Geography}.}
  \label{fig:geographyReduceInDegree}
\end{figure}

To reduce the out-degree, use use an analagous construction, as shown in \cref{fig:geographyReduceOutDegree}.  Just as in the previous gadget, we ensure that the same player will arrive at $x$, $y$, and $z$ as in the original game.  Additionally, we ensure that the player that leaves $v$ still makes the choice between $x$ and $y$ in the reduced gadget.

\begin{figure}[h!]
  \begin{center}
    \includegraphics[scale = .3]{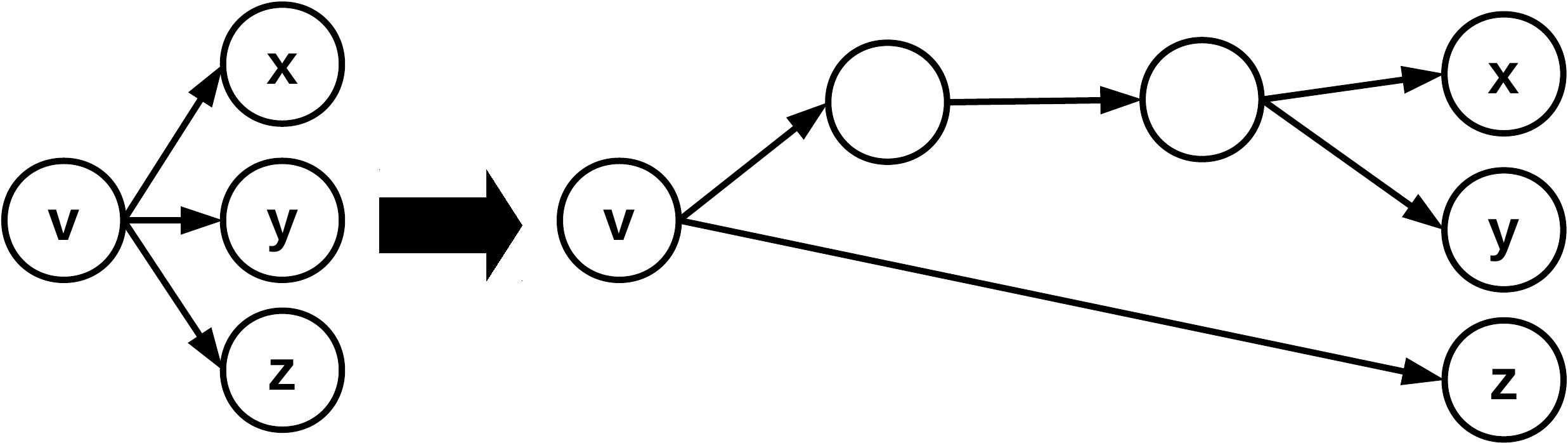} %{images/maxDegree3Geography}
  \end{center}
  \caption{Reducing the out-degree of $v$ by 1 in \ruleset{Geography}.}
  \label{fig:geographyReduceOutDegree}
\end{figure}

After repeatedly reducing the in and out degree to at most 2, some vertices may still have both in and out degree 2.  To fix this, we simply add two vertices as shown in \cref{fig:geography2And2}. \qed

\begin{figure}[h!]
  \begin{center}
    \includegraphics[scale = .3]{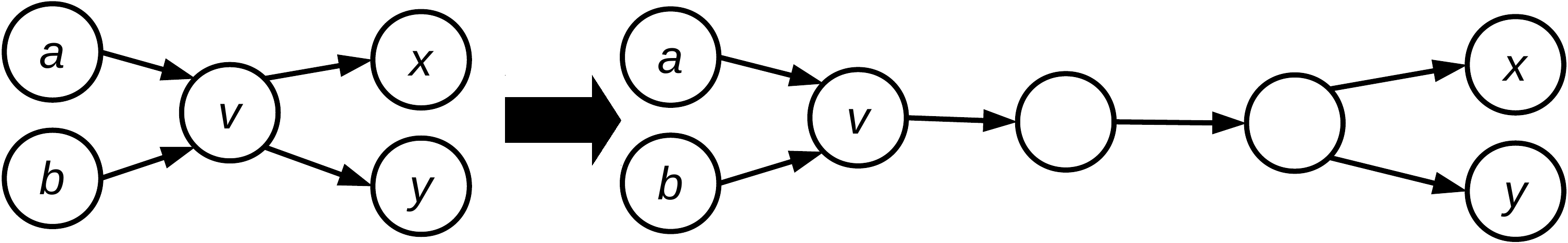} %{images/geography2And2}
  \end{center}
  \caption{Final step for vertices, $v$, with both in and out-degree of 2.}
  \label{fig:geography2And2}
\end{figure}
\end{proof}
\ifabstract
} %end the \later
\fi
The following gadgets prove \cref{thm:impartialLastHard}.

% Display all three gadgets as "comic" image
\ifabstract
\begin{figure}[t]
\centering
\comic{.42\textwidth}{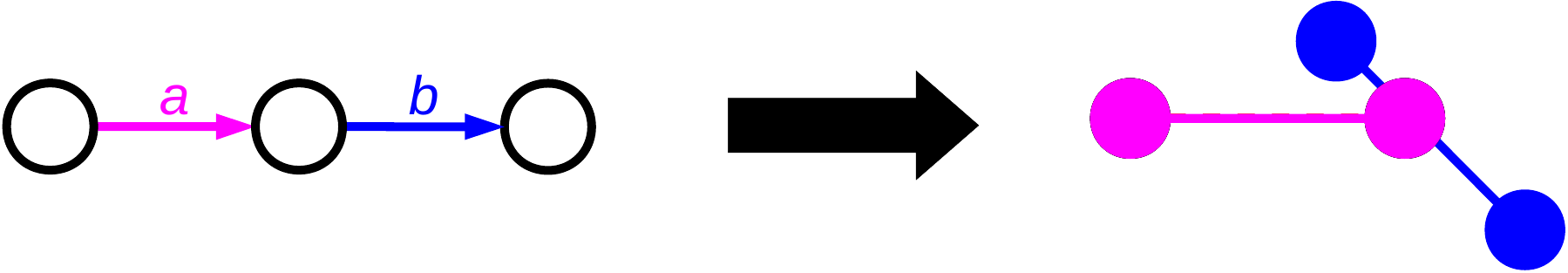}{(a)} %{images/twoPlayerImpartial1In1OutInColor.pdf}{(a)}
\hfill
\comic{.42\textwidth}{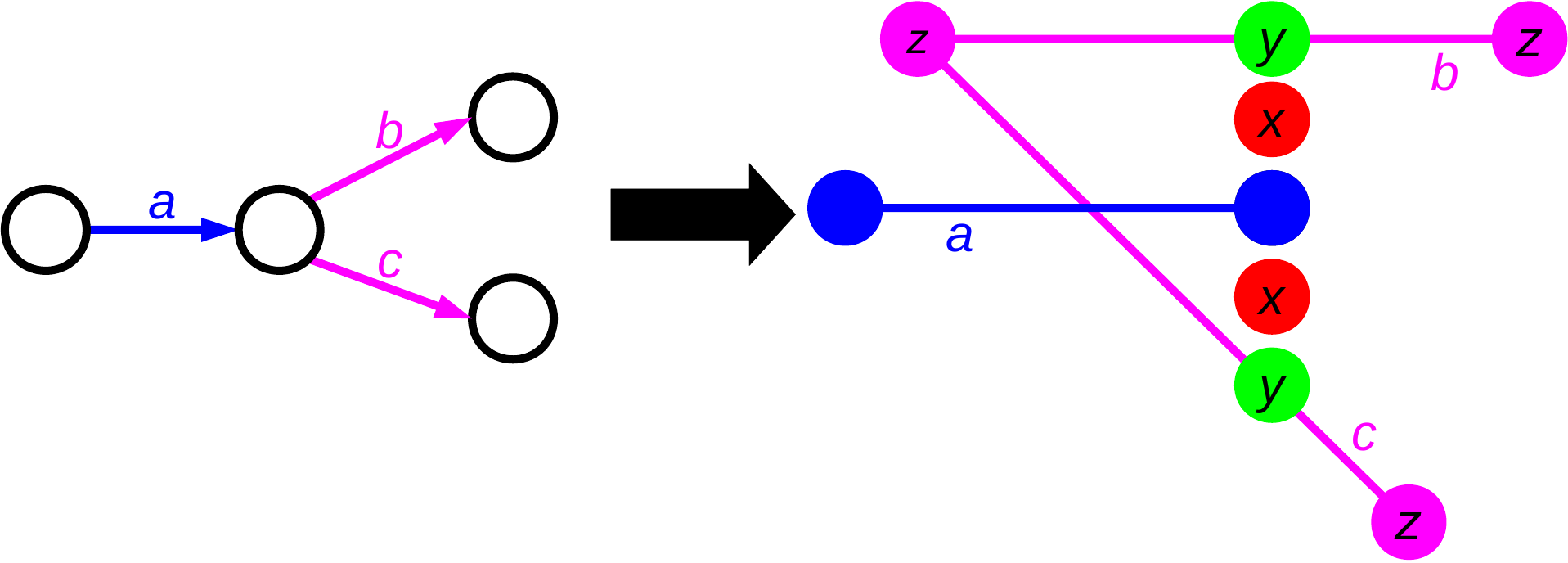}{(b)}\\ %{images/twoPlayerImpartial1In2OutInColor.pdf}{(b)}\\
\comic{.42\textwidth}{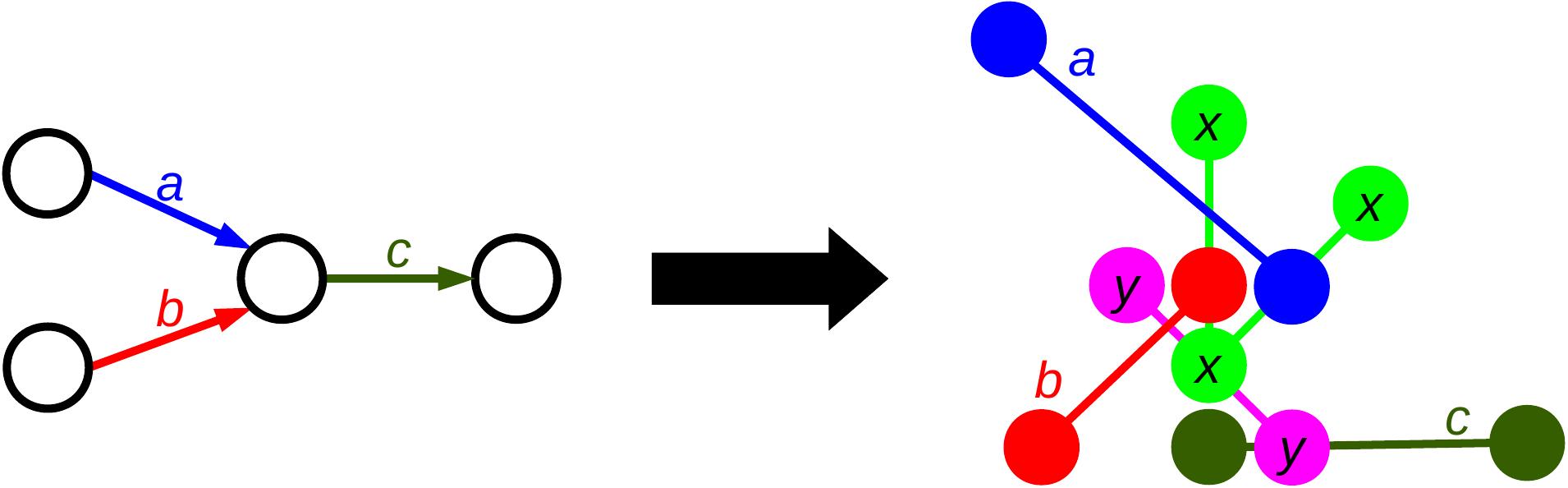}{(c)} %{images/twoPlayerImpartial2In1OutInColor.pdf}{(c)}
\hfill
\vspace*{-.2cm}
\comic{.30\textwidth}{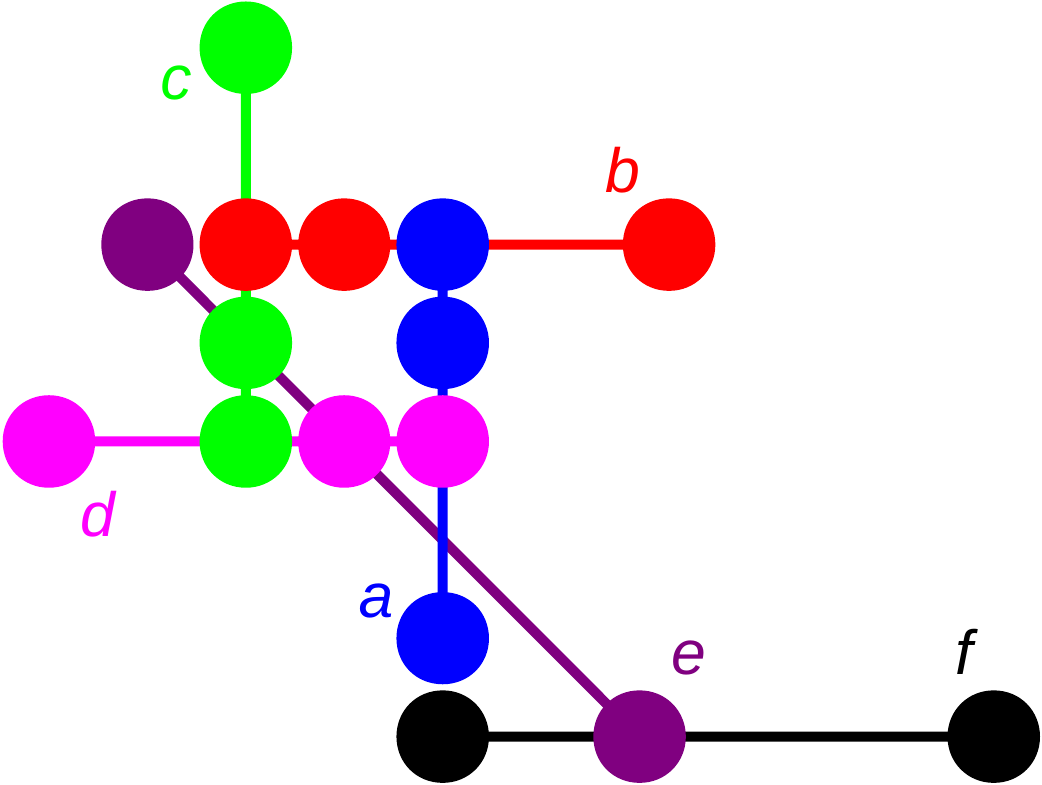}{(d)}\hspace*{.7cm} %{images/impartialMaxStartInColor.pdf}{(d)}\hspace*{.7cm}
\caption{Reduction gadgets for vertex with (a) one incoming arc and one outgoing arc, (b) one incoming arc and two outgoing arcs, and (c) two incoming arcs and one outgoing arcs. (d) The starting gadget for \scoring.}
\label{fig:twoPlayerImpartial}
\end{figure}
\fi
\iffull
\begin{figure}[t]
\centering
\comic{.47\textwidth}{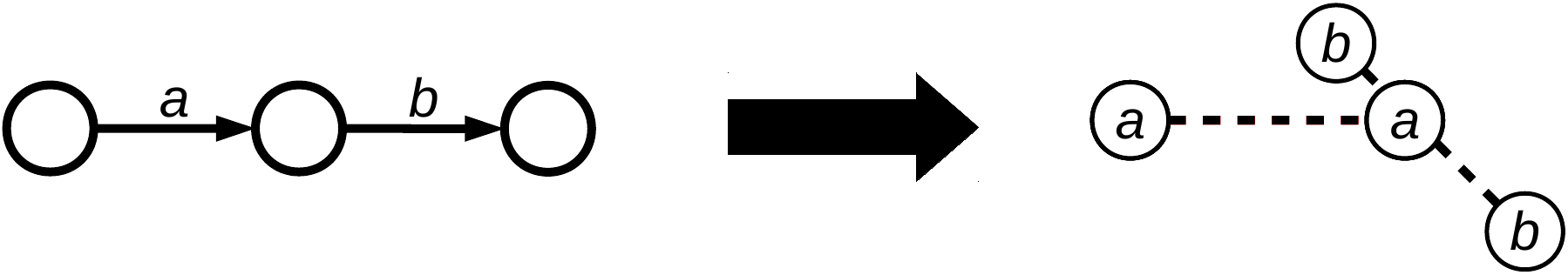}{(a)} %{images/twoPlayerImpartial1In1Out.pdf}{(a)}
\hfill
\comic{.47\textwidth}{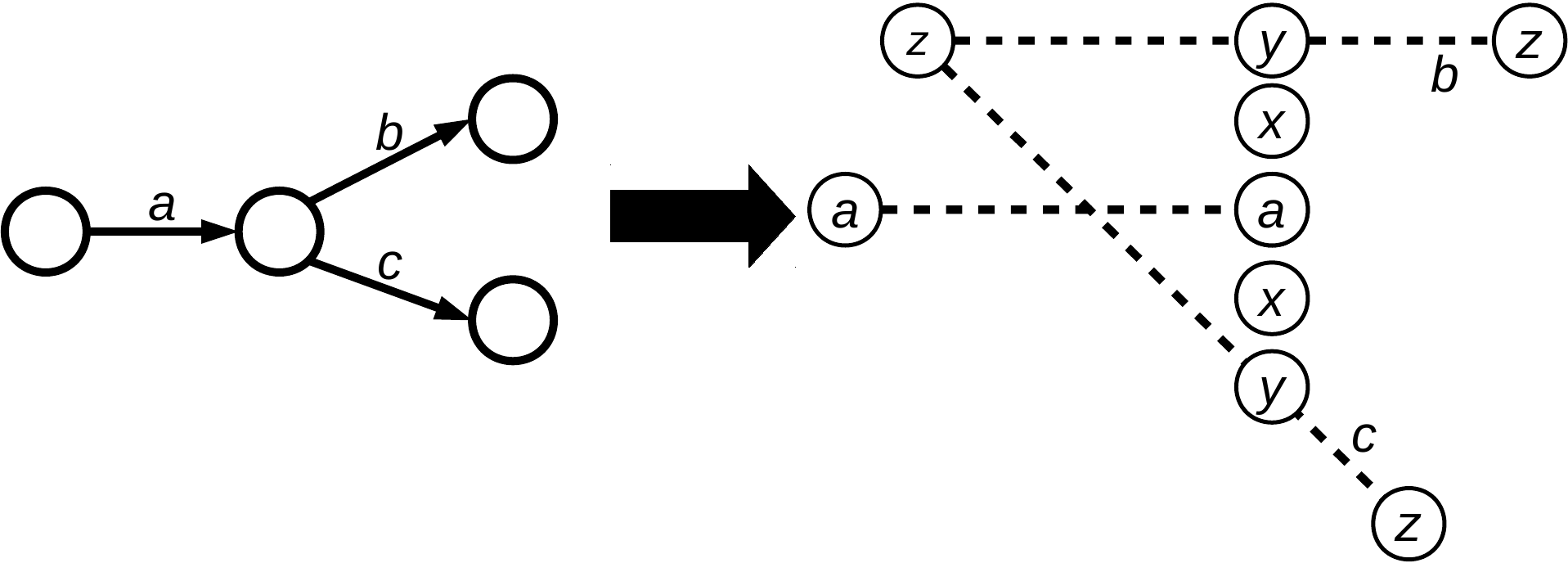}{(b)} %{images/twoPlayerImpartial1In2Out.pdf}{(b)}
\comic{.47\textwidth}{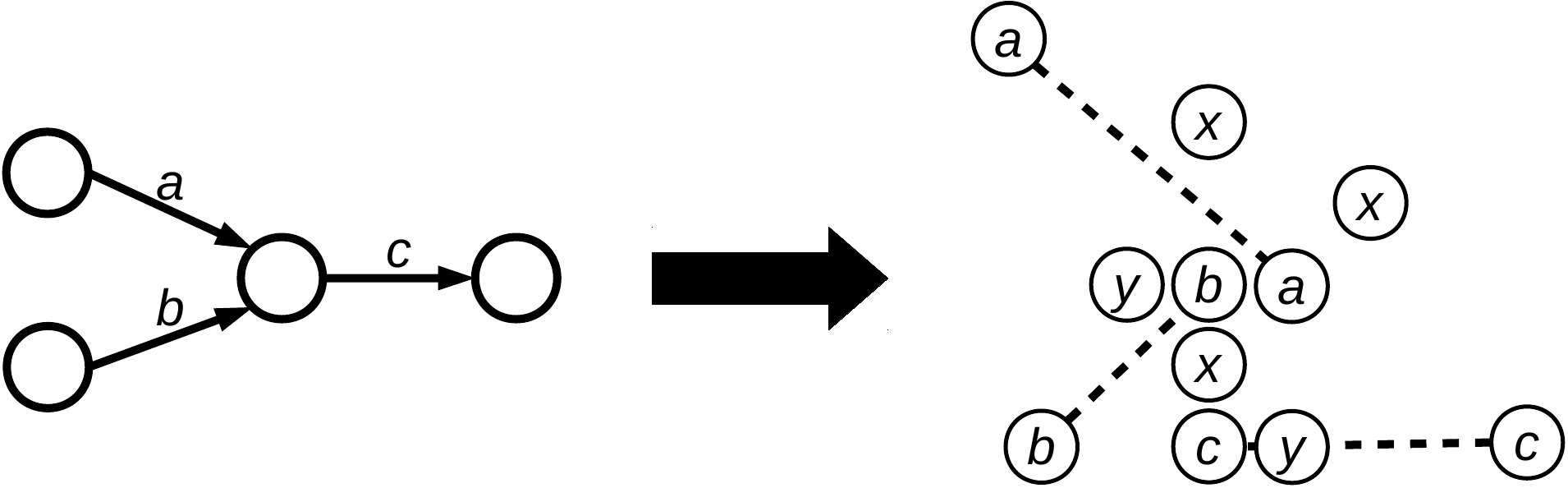}{(c)} %{images/twoPlayerImpartial2In1Out.pdf}{(c)}
\caption{Reduction gadgets for vertex with (a) one incoming arc and one outgoing arc, (b) one incoming arc and two outgoing arcs, and (c) two incoming arcs and one outgoing arcs.}
\label{fig:twoPlayerImpartial}
\end{figure}
\fi

\ifabstract
\begin{itemize}
  \item \textbf{In-degree 1, out-degree 1}: The gadget for this is a pair of buttons such that removing the first pair frees up the second, as in \cref{fig:twoPlayerImpartial}(a).
  
  \item \textbf{In-degree 1, out-degree 2}: See \cref{fig:twoPlayerImpartial}(b) and Appendix B.
  
  \item \textbf{In-degree 2, out-degree 1}: See \cref{fig:twoPlayerImpartial}(c) and Appendix B.  
  
  \item \textbf{In-degree 0}: The gadgets for this look just like the gadgets for the analagous in-degree 1 gadgets, but without the button pair for the incoming edge.
  
  \item \textbf{Out-degree 0}: Each edge is a button pair that won't free up other buttons.
\end{itemize}

\later{
\subsection*{Discussion of Gadgets for Proof of \cref{thm:impartialLastHard}}
\else
\begin{itemize}
  \item \textbf{In-degree 1, out-degree 1}: The gadget for this is a pair of buttons such that removing the first pair frees up the second, as in \cref{fig:twoPlayerImpartial}(a).
  \item \textbf{In-degree 1, out-degree 2}: The gadget for this is shown in \cref{fig:twoPlayerImpartial}(b).  Incoming edge $a$ is represented by the button pair colored $a$.  The outgoing edges $b$ and $c$ are represented by three non-colinear buttons colored $z$.  After the first player removes the button pair $a$, the second player will cut the pair colored $x$, and the first player will cut the $y$ pair.  Just as the second player would choose between moving to $b$ or $c$ in the \ruleset{Geography} position, so do they choose between cutting the $z$ pair labelled $b$ or the $z$ pair labelled $c$.  Once one of those pairs is chosen, the other $z$-colored button is stranded.
  \item \textbf{In-degree 2, out-degree 1}: The gadget for this is shown in \cref{fig:twoPlayerImpartial}(c).  Incoming \ruleset{Geography} arcs $a$ and $b$ are represented by button pairs $a$ and $b$, respectively.  No matter which of those two are cut, it allows the $x$ pair to be cut, followed by the $y$ pair, then the $c$ pair, representing the $c$ arc in the \ruleset{Geography} position. The player that cuts the $c$ pair will be the opposite of the player that cuts the $a$ or $b$ pair, just as in the \ruleset{Geography} position.
  \item \textbf{In-degree 0}: The gadgets for this look just like the gadgets for the analagous in-degree 1 gadgets, but with the button pair missing that represents the incoming edge.
  \item \textbf{Out-degree 0}: Each edge is a button pair that doesn't free up other buttons upon removal.
\end{itemize}
\fi
\ifabstract
} %end of later
\fi

\iffull
\begin{proof}
  (For \cref{thm:impartialLastHard}.)  %Given a \ruleset{Geography} position where each vertex, $v$, has max in and out-degree 2 and max total degree 3, we create a \impartial\ position in the same outcome class, meaning the next player can win the \ruleset{Geography} position exactly when the next player can win the \impartial\ position.
  We reduce from our restricted-degree version of \ruleset{Geography}.  The reduction replaces each \ruleset{Geography} vertex, $v$, with the cooresponding gadget described above.  Each arc between two vertices becomes a button pair (as described in the reduction) used in the gadgets for both vertices. \qed
  
  %Since \ruleset{Geography} is \cclass{PSPACE}-hard, \impartial\ must also be \cclass{PSPACE}-hard.
\end{proof}
\fi
\ifabstract
\later{
\subsection*{Proof of \cref{thm:impartialLastHard}}
\begin{proof}
  %Given a \ruleset{Geography} position where each vertex, $v$, has max in and out-degree 2 and max total degree 3, we create a \impartial\ position in the same outcome class, meaning the next player can win the \ruleset{Geography} position exactly when the next player can win the \impartial\ position.
We reduce from our restricted-degree version of \ruleset{Geography}.  The reduction replaces each \ruleset{Geography} vertex, $v$, with the cooresponding gadget described above.  Each arc between two vertices becomes a button pair (as described in the reduction) used in the gadgets for both vertices. \qed

  %Since \ruleset{Geography} is \cclass{PSPACE}-hard, \impartial\ must also be \cclass{PSPACE}-hard.
\end{proof}
}
We prove \cref{thm:impartialLastHard} in Appendix B.
\fi
To show \scoring\ is hard, we create a reduction where after each turn, that player will have cut the most buttons; the last player to move wins.  This alternating-advantage situation is caused by an initial gadget.  The optimal play sequence begins by cutting two buttons, then three, then three, then three a final time.  After these first four moves, the first player will have five points and the second player six.  Each subsequent cut removes two buttons so each turn ends with the current player ahead.

\iffull
\begin{figure}[h!]
  \begin{center}
	\includegraphics[scale = .4]{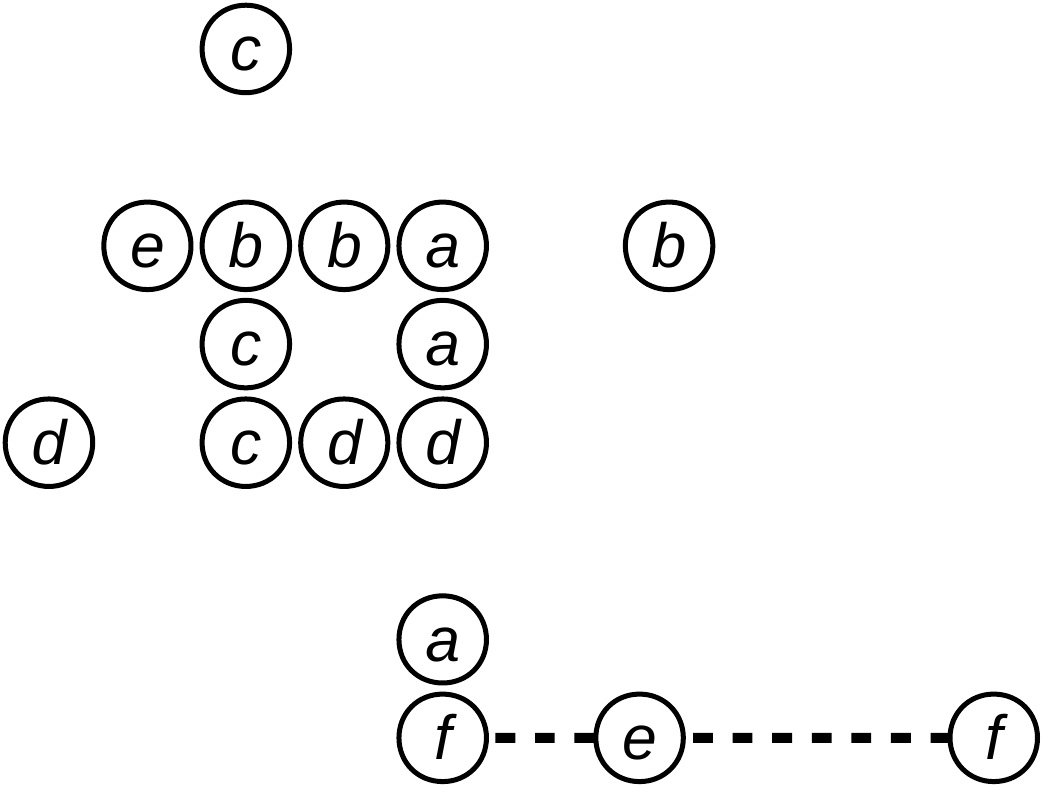} %{images/impartialMaxStart.pdf}
  \end{center}
  \caption{The starting gadget for \scoring.}
  \label{fig:scoringStart}
\end{figure}
\fi

\iffull
\cref{fig:scoringStart} shows the starting gadget that sets up this initial back-and-forth.  The color-$f$ buttons will be the last two cut; the right-hand $f$ button must be blocking the next gadget. \cref{lemma:scoringStartMoves} postulates that $f$ will be last.
\fi

\ifabstract
\cref{fig:twoPlayerImpartial}(d) shows the starting gadget that sets up this initial back-and-forth.  The color-$f$ buttons will be the last two cut; the right-hand $f$ button must be blocking the next gadget. \cref{lemma:scoringStartMoves} postulates that $f$ will be last.
\fi

\begin{lemma}
  \label{lemma:scoringStartMoves}
  If a player has a winning strategy, then part of that winning strategy includes cutting all possible buttons of colors $a$, $b$, $c$, $d$, and $e$ before cutting $f$.  
\end{lemma}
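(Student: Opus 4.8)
The plan is to prove \cref{lemma:scoringStartMoves} by contradiction, leveraging the scoring invariant that the starting gadget is engineered to create. The intended opening consists of cuts of sizes $2,3,3,3$ (clearing colors $a,b,c,d,e$), after which the running totals are $5$ for the first player and $6$ for the second, and every remaining feasible cut on the board has size exactly $2$ — all later cuts come from the \ruleset{Geography}-simulating gadgets, which are built entirely from button pairs (recall that even the ``three non-colinear buttons'' gadgets are cleared by a single size-$2$ cut). Hence, once the opening is played as intended, after each subsequent move the player who just moved leads by exactly one button, so the last player to move wins the \scoring\ game, exactly mirroring \impartial. The content of the lemma is that no winning strategy can afford to deviate from this opening: it must clear every cuttable button of colors $a,b,c,d,e$ before cutting the $f$-pair.

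First I would read off the forced structure of the opening from the blocking geometry of the gadget. I expect to verify that the $a$-pair is the only size-$2$ cut initially feasible (the $f$-pair's right button blocks the next gadget and its left button is blocked within the start gadget, and no size-$3$ cut is available until $a$ is removed), and that after cutting $a$ the three size-$3$ cuts are unblocked one at a time. This drives any player who commits to clearing the gadget through exactly the sequence $2,3,3,3$, and it keeps the $f$-pair blocked until the $a,b,c,d,e$ buttons are gone. The effect is to collapse the space of genuinely distinct opening lines to two families: (i) the intended sequence, and (ii) lines in which the $f$-pair is cut, or some cuttable $a$--$e$ button is abandoned, before the opening is finished.

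The crux is ruling out every line of type (ii) for a player who is supposed to be winning. Here I would track the score difference as a function of the move at which $f$ is taken (or at which an $a$--$e$ button is stranded): inserting the size-$2$ cut $f$ into the middle of the $2,3,3,3$ run — or leaving an odd bit of the opening uncut — shifts the cumulative totals away from the calibrated $(5,6)$, and since every genuinely subsequent cut has size $2$, this flips the parity of ``who leads by one.'' In each such case the opponent, rather than the deviating player, ends as the last player to move and so wins, contradicting the assumption that the deviating player was following a winning strategy. The main obstacle is precisely this bookkeeping: it is a finite but careful case analysis over the move at which $f$ is cut and over which (if any) $a$--$e$ buttons are abandoned, checking in each case that the score lead passes permanently to the opponent. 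Once all type-(ii) deviations are shown to lose, only the intended opening survives inside a winning strategy, which is the claim.
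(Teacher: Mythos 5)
Your overall strategy (calibrate the opening to the $(5,6)$ offset, then rule out deviations by score accounting) is the same family of argument the paper uses, but two load-bearing steps are not right. First, you assume the geometry does most of the work: that the $a$-pair is the only initially feasible cut and that the $f$-pair stays blocked until all of $a,\dots,e$ are gone. The paper makes no such claim, and if it were literally true the lemma would be vacuous --- there would be nothing left for a \emph{strategy} to decide. You then contradict this assumption yourself by listing, as your main case (ii), lines in which $f$ is cut before the opening is finished. The real situation is that players genuinely can deviate (cut $f$ early, or cut only two of the three collinear buttons of a colour and strand the third), and the content of the lemma is precisely that no winning strategy does so.

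Second, the argument you offer for ruling out deviations --- an early size-$2$ cut ``flips the parity of who leads by one,'' so ``the opponent ends as the last player to move and so wins'' --- misstates the win condition: in \scoring\ the winner is the player with more buttons cut, not the last mover; the last mover wins only under the calibrated offset, which is exactly what a deviation destroys. The step you defer as ``finite but careful bookkeeping'' is in fact the whole proof, and the paper closes it with a counting observation you never state: the start gadget is the \emph{only} place on the board where a $3$-point move exists, and there are exactly three such moves, so the second player must take two of them (otherwise, with every other move worth $2$, they can never overtake the first player) and the first player must take the remaining one on their second or third turn (otherwise the score stands at $6$ to $9$ and can never be recovered). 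This forces the opening $2,3,3,3$, hence the $7$--$6$ score after five turns, hence that $f$ is cut only after $a,b,c,d,e$. Without that pigeonhole-style argument your proposal does not establish the lemma.
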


\ifabstract
We provide a proof of \cref{lemma:scoringStartMoves} in Appendix B.

\later {
\subsection*{Proof of \cref{lemma:scoringStartMoves}} % added subsection header for use in appendix
\fi
\begin{proof}
  The first player only earns two points by cutting any of $a$, $b$, $c$, or $d$.  After this first move, there is always another three-point move to take until all of $a$, $b$, $c$, and $d$ are gone.  Thus there are three chances to earn three points.  This gadget is the only place for a player to earn three points in one turn, so the second player must take two of those three triple button cuts.  If they don't, they will never have more points than the first player.  The second player must spend each of their first two moves cutting three buttons.
  
  Similarly, the first player cannot spend the first three turns earning only two points each time.  If they do, then the second player can take three on each of their first three turns, making the score 6 to 9 in favor of the second player.  Since all moves thereafter only earn 2 points, the second player will always have more points.  The first player must cut three buttons on their second or third turn.  
  
  Thus, after the first five turns (first-second-first-second-first) the score must be 7 to 6 in favor of the first player.  The only way for this to happen is if $f$ is cut after $a$, $b$, $c$, $d$, and $e$.  \qed
\end{proof}
\ifabstract
} %end the \later{ above
\fi

With the lemmas in place, we can prove \cref{thm:scoringHard}.
\ifabstract
We provide the proof of \cref{thm:scoringHard} in Appendix B.

\later{
\subsection*{Proof of \cref{thm:scoringHard}}
\begin{proof}
  We reduce again from the low-degree \ruleset{Geography} game considered in \cref{lemma:lowDegreeGeographyHard}.  Use the reduction to \impartial\ described earlier to generate a Buttons $\&$ Scissors board $B$.  Add the starting gadget for \scoring\ (shown in \iffull\cref{fig:scoringStart}\fi \ifabstract\cref{fig:twoPlayerImpartial}(d)\fi) to create $B'$ so that the buttons marked 'f' %in \cref{fig:scoringStart} 
  block the first move that can be made on $B$.
  
  After the buttons 'f' are cut, the second player will have exactly one more point than the first.  Throughout all the gadgets used in the \cclass{PSPACE}-hard reduction for \impartial, each cut removes two buttons.  Thus the last player who can win playing \impartial\ on $B$ can win playing \scoring\ on $B'$. \qed %and \scoring\ must be \cclass{PSPACE}-complete.
\end{proof}
}
\fi

\iffull
\begin{proof}
  (Proof of \cref{thm:scoringHard}.)  We reduce again from the low-degree \ruleset{Geography} game considered in \cref{lemma:lowDegreeGeographyHard}.  Use the reduction to \impartial\ described earlier to generate a Buttons $\&$ Scissors board $B$.  Add the starting gadget for \scoring\ (shown in \iffull\cref{fig:scoringStart}\fi \ifabstract\cref{fig:twoPlayerImpartial}(d)\fi) to create $B'$ so that the buttons marked 'f' %in \cref{fig:scoringStart} 
  block the first move that can be made on $B$.
  
  After the buttons 'f' are cut, the second player will have exactly one more point than the first.  Throughout all the gadgets used in the \cclass{PSPACE}-hard reduction for \impartial, each cut removes two buttons.  Thus the last player who can win playing \impartial\ on $B$ can win playing \scoring\ on $B'$. \qed %and \scoring\ must be \cclass{PSPACE}-complete.
\end{proof}
\fi

\section{Open Problems}\label{sec:open}
% !TEX root = arxiv-bands.tex

Interesting problems for boards with a constant number of rows are still open.
%We did not find a polynomial-time algorithm for $m=2$, or prove hardness for any constant~$m$.
A conjecture for $m=2$ rows appears below.
%However, we believe the following conjecture, although have not found a simple case analysis for proving it yet.

\begin{conjecture} \label{con:2rows}
There is a polynomial time algorithm that removes all but $s$ buttons from any full $2\times n$ board with $C=2$ colors for some constant $s$.
%Given a full $2\times m$ Buttons $\&$ Scissors board with $C=2$ and a constant $s$, there exists a polynomial time algorithm that removes all but $s$ buttons from the board with feasible cuts.
\end{conjecture}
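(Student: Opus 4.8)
The plan is to fix $d=\dirsFour$ and, by \cref{rem:2or3cuts}, cut sizes $S=\{2,3\}$, and to exploit the very limited repertoire of cuts available on two rows. On a $2\times n$ board the only vertical cut removes two equal buttons in one column, a diagonal cut always has size exactly $2$ and joins $(1,j)$ with $(2,j\pm1)$ (adjacent columns only), and a horizontal cut removes a run of $2$ or $3$ equal buttons in one row but may span already-emptied columns. I would classify each column by the pair of colors it carries: $\mathrm{AA},\mathrm{BB}$ (\emph{monochromatic}) or $\mathrm{AB},\mathrm{BA}$ (\emph{bichromatic}), and record three elementary clearing moves: a monochromatic column dies by one vertical cut; two equal bichromatic columns (even when separated only by empty columns) die by one horizontal cut in each row; and two \emph{adjacent} opposite bichromatic columns die by two diagonal cuts. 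The key starting observation is that on a full board all columns are physically adjacent, so opposite bichromatic neighbours are initially diagonal-clearable — adjacency is a resource that careless cutting destroys.

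First I would isolate the obstruction that makes the statement nontrivial. Removing a monochromatic column, or clearing a row, opens a gap, and two opposite bichromatic columns separated by a gap are genuinely stuck: for instance $\mathrm{AB}$ and $\mathrm{BA}$ with one empty column between them leave four buttons that lie on no common line, hence are unremovable. So the order of cuts is essential: the algorithm must \emph{spend adjacency before it is lost}. Concretely I would sweep left to right, interleaving diagonal and horizontal cuts so that each diagonal is taken while its two columns are still adjacent, and each horizontal cut is deferred until the colinear obstructions between its endpoints have already been emptied. On the alternating input $\mathrm{AB},\mathrm{AA},\mathrm{AB},\mathrm{AA},\dots$ this is exactly what works: taking the down-right diagonals first and only then the gap-spanning horizontals clears everything, whereas clearing either row first strands a $\Theta(n)$-length two-coloured row. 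This example also shows that the naive ``clear monochromatic columns first'' strategy fails.

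I would then implement the sweep as a dynamic program whose state is the \emph{frontier}: the (conjecturally bounded) set of uncleared buttons in the last few columns, together with at most one ``pending'' bichromatic column awaiting an equal partner to its right. Each incoming column triggers a constant set of candidate local cuts, the transition rules are finite, and the whole procedure runs in linear time once the state space is shown to be finite. The \emph{main obstacle}, and the reason this remains a conjecture, is proving that the frontier never grows beyond $O(1)$: that the parity defects (lone leftover buttons, such as the single $\mathrm{B}$ produced when a bichromatic column meets a long monochromatic run) do not accumulate to $\Theta(n)$. I would attack this with an amortized potential argument: assign a potential counting pending columns and unmatched single-colour buttons, and show that every forced defect can be charged against a colour-parity change that occurs only $O(1)$ times, or is later absorbed by a gap-spanning horizontal cut.

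An alternative route I would keep in reserve is to phrase solvability as a constrained matching on the button hypergraph, in the spirit of the one-colour algorithm behind \cref{lem:remove-cycles}, and then argue that the only columns that cannot be matched are forced by the global parity of the numbers of $\mathrm{A}$'s and $\mathrm{B}$'s, which would bound the residue by a constant. Either way, the crux of \cref{con:2rows} is the accounting that shows the opposite-type ``turns'' of the column pattern can almost all be realized while adjacent, losing only $O(1)$ buttons in total.
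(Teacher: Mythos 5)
This statement is \cref{con:2rows}, which the paper states as an open conjecture in \cref{sec:open} and does not prove, so there is no reference proof to match your attempt against; the only question is whether your argument closes the conjecture, and it does not. Your setup is sound: the taxonomy of cuts on two rows is right (verticals kill monochromatic columns, diagonals have size exactly $2$ and need adjacent columns, horizontals may span emptied columns), and your observation that adjacency is a perishable resource---illustrated by the genuinely stuck configuration $\mathrm{AB},\square,\mathrm{BA}$---correctly identifies why cut ordering matters and why greedy strategies fail. But the entire content of the conjecture is the claim that some ordering leaves only $O(1)$ buttons, and that is exactly the step you defer: you propose a dynamic program over a frontier state and then note that you cannot show the frontier stays bounded, offering only a plan for an amortized potential argument without defining the potential, the charging scheme, or why parity defects cannot accumulate $\Theta(n)$ times. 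Without that, the DP is not even well-defined as a polynomial-time algorithm (its state space could be exponential), and the alternative matching-based route suffers the same problem---a perfect-matching formulation in the spirit of \cref{lem:remove-cycles} does not obviously control the deficiency of the matching, and the ordering/blocking issues that required the cycle-removal argument there would reappear here with two colors.

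Concretely, the missing idea is a structural invariant on full $2\times n$ two-color boards that bounds the number of ``unabsorbable'' defects independently of $n$: for instance, a proof that every maximal run of the column pattern contributes at most one leftover button, together with a proof that the number of such contributing runs is $O(1)$ rather than $\Theta(n)$ (your own alternating example $\mathrm{AB},\mathrm{AA},\mathrm{AB},\mathrm{AA},\dots$ has $\Theta(n)$ runs, so the second half cannot follow from run-counting alone). Until that accounting is supplied, what you have is a well-motivated reduction of the conjecture to a sharper combinatorial claim, not a proof; the conjecture remains open.
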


%---------------------------- Bibliography -------------------------------

\newpage
% !TEX root = arxiv-bands.tex

\ifabstract
\section*{Appendix A}

\section*{Appendix B}
\fi

\magicappendix

\end{document}